





\documentclass[sn-mathphys]{sn-jnl}



\jyear{2021}%

\theoremstyle{thmstyleone}%
\newtheorem{theorem}{Theorem}
\newtheorem{proposition}[theorem]{Proposition}%
\newtheorem{lemma}[theorem]{Lemma}

\theoremstyle{thmstyletwo}%
\newtheorem{remark}{Remark}%

\theoremstyle{thmstylethree}%
\newtheorem{definition}{Definition}%

\raggedbottom

\usepackage{graphicx}
\usepackage{amsmath}
\usepackage{amsfonts}
\usepackage{amssymb}
\usepackage{accents}
\usepackage{enumerate}

\graphicspath{{figs/}}

\usepackage{array} 
\newcolumntype{R}[1]{>{\hbox to #1\bgroup\hfill$}c<{$\egroup}}

\usepackage{program}
\catcode`\|=12\relax

\newcommand{\norm}[1]{\left\lVert#1\right\rVert}
\newcommand{\bra}[1]{\left\langle#1\right|} 
\newcommand{\ket}[1]{\left|#1\right\rangle}

\begin{document}

\title[Renormalization method]{Renormalization method for proving frustration-free local spin chains are gapped}


\author[1]{\fnm{Ari} \sur{Mizel}}\email{ari@arimizel.com}

\author[2]{\fnm{Van} \sur{Molino}}\email{vjmoli2@super.org}

\affil[1]{
\orgname{Laboratory for Physical Sciences}, \orgaddress{\street{8050 Greenmead Drive}, \city{College Park}, \postcode{20740}, \state{MD}, \country{U.S.A}}}

\affil[2]{
\orgname{Center for Computing Sciences}, \orgaddress{\street{17100 Science Drive}, \city{Bowie}, \postcode{20715}, \state{MD}, \country{U.S.A}}}


\abstract{Key properties of a physical system depend on whether it is gapped, i.e. whether its spectral gap has a positive lower bound that is independent of system size.  In quantum information theory, the question of whether a system is gapped has essential computational significance as well.  Here, we introduce a rigorous  renormalization method to prove that a spin chain is gapped.  This approach exploits the fact that ground states of gapped systems exhibit decaying correlations.  We apply the method to show that two interesting models are gapped, successfully completing proofs even where the previously established methods are inconclusive.
}

\maketitle

\section{Introduction}
\label{intro}
The gap is the minimum amount of energy required to excite a quantum mechanical system. It is a critical determinant of the system's physical behavior.  For this reason, the field of condensed matter physics has long dichotomized ``metallic'' systems, for which the excitation energy tends to zero as the system grows larger, and "gapped" systems, for which the excitation energy tends to a positive constant.

As quantum information science has burgeoned, researchers have come to appreciate that the energy gap also has a vital computational significance.  It is possible to design a quantum mechanical system so that its ground state contains a desired computational result \cite{Kadowaki1998,Farhi01,Kitaev00,Mizel01}.  One can then try to bring the system into this ground state adiabatically \cite{Albash2018}.  The main parameter determining the runtime is the system's energy gap.

Moreover, classical computation of the ground state of a system tends to be much easier when the system is gapped.  This fact that has been recognized anecdotally for a long time, and recently a rigorous renormalization algorithm was proven to allow efficient computation of the ground state of gapped spin chains \cite{Landau2015,Block2020}.

Although it is of primary significance whether a system's Hamiltonian is gapped, resolving the mathematical question can be thorny.  The problem has been solved in the simplest case of translationally-invariant, frustration-free spin-$1/2$ chains \cite{Bravyi2015}.  However, in full generality, the question is computationally undecidable \cite{Cubitt2015} even in one dimension \cite{Bausch2020}.  In the restricted setting of frustration-free Hamiltonians with spin greater than 1/2, the local gap method \cite{Knabe1988} and the Martingale method \cite{Nachtergaele1996} constitute the main tools to show a system is gapped.  Unfortunately, they are often inconclusive.

In this paper, we leverage the local gap method with renormalization.  Renormalization is an operation that freezes out some of the degrees of freedom of the Hamiltonian, leaving a renormalized Hamiltonian of those that remain.  This process clearly discards information about the system.  However, we prove that, for frustration-free spin chains, the renormalized Hamiltonian is gapped if and only if the original Hamiltonian is gapped.  With this first result in hand, we can study the renormalized Hamiltonian to determine whether it is in fact gapped.  

According to a well-known rule of condensed matter physics, when there is an energy gap, the chain's ground state  $\ket{\Psi}$ exhibits spatially decaying correlations.  An energy gap is related to the correlation between variables $A$ and $B$ because $\bra{\Psi} A B \ket{\Psi} -\bra{\Psi} A \ket{\Psi} \bra{\Psi} B \ket{\Psi} = \bra{\Psi} A \left(\textsf{I} - \ket{\Psi}\bra{\Psi} \right) B \ket{\Psi}$,  and $\textsf{I} - \ket{\Psi}\bra{\Psi}$ is a projector on to the excited states of the chain.  Roughly speaking, for a gapped system, it is hard for spatially distant operators $A$ and $B$ to excite the ground state to the same excited state.
There are rigorous theorems proving that a gapped system must have decaying correlations \cite{Hastings06}, but the converse is not always true.  
Some additional assumptions are needed, beyond decaying correlations, to imply a system is gapped.  Here, we identify one such set of assumptions and use renormalization to prove a second result: a converse theorem for a one-dimensional spin chain.  Under renormalization, segments of a spin chain become effective renormalized spins.  Decaying ground state correlations then lead to neighboring effective couplings that nearly commute, a simplification that allows us to prove our converse theorem.


In the next section, Sec. 2, we present in detail the renormalization method for proving a spin chain is gapped.   We then showcase the power of the approach by applying it successfully to two interesting Hamiltonians on which the the local gap method and the Martingale method are inconclusive.
  The first such Hamiltonian, the ``teleportation'' chain \cite{Mizel21}, is studied in Sect. 3.  The second Hamiltonian, the ``swap'' chain, is introduced and analyzed in Sect. 4.

\section{Renormalization Method}

\subsection{Setup and Notation}
\label{2SS1} 

\begin{definition}
Consider a one-dimensional chain of spins.  The chain includes $\ell$ spins, each with $d$ states.   In addition, there is a boundary spin of dimension $d_0$ at one end and a boundary spin of dimension $d_{\ell+1}$ at the other.  Thus, the Hilbert space of the chain is $\mathbb{C}^{d_{\ell+1} }\otimes (\mathbb{C}^d)^{\otimes \ell} \otimes \mathbb{C}^{d_0}$.  The chain Hamiltonian is defined by
\begin{equation}
{\mathcal H} = \bar{H}^{\ell,\ell+1} \otimes \textsf{I}^{\otimes \ell}+\textsf{I} \otimes  \left( \sum_{i=0}^{\ell-2} \textsf{I}^{\otimes \ell-2-i} \otimes \bar{H} \otimes \textsf{I}^{\otimes i} \right) \otimes \textsf{I} +  \textsf{I}^{\otimes \ell} \otimes \bar{H}^{0,1}
 \label{mathcalH}
\end{equation}
where $\bar{H}: \mathbb{C}^d \otimes \mathbb{C}^d \rightarrow  \mathbb{C}^d \otimes \mathbb{C}^d$ and $\bar{H}^{0,1}: \mathbb{C}^d \otimes \mathbb{C}^{d_0} \rightarrow  \mathbb{C}^d \otimes  \mathbb{C}^{d_0}$  and $\bar{H}^{\ell,\ell+1}: \mathbb{C}^{d_{\ell+1}} \otimes \mathbb{C}^{d} \rightarrow   \mathbb{C}^{d_{\ell+1}} \otimes \mathbb{C}^{d}$ are positive semi-definite operators acting on adjacent spins.   Here, and throughout this paper, we adopt the convention that $M^{\otimes 0} \otimes B = 1 \otimes B = B$ and $B \otimes M^{\otimes 0}  =   B \otimes 1 = B$ for any operators $A$, $B$, and $M$.  Thus, the $i=0$ term  in the sum reduces to $\textsf{I}^{\otimes \ell-2} \otimes \bar{H}$, and the $i=\ell-2$ term reduces to $ \bar{H} \otimes \textsf{I}^{\otimes \ell-2}$.  We will solely consider frustration-free Hamiltonians for which $\mathrm{ker}\, {\mathcal H} \ne \emptyset$.  As the formal notation is a bit tedious, a simple depiction is provided as part of Fig. \ref{FancyHfigure}.
\end{definition}

In this paper, we will be interested in Hamiltonians with ground states that exhibit decaying correlations.  To study the consequences of this decay, it will be convenient to partition the chain into long segments such that the correlations between the ends of a segment are weak.   Every segment has length ${\bar{\ell}}$, and there are a total of $\lfloor \ell/{\bar{\ell}} \rfloor$ segments.  A remnant of size $\bar{\bar{\ell}} = (\ell - \lfloor \ell/{\bar{\ell}} \rfloor {\bar{\ell}})$ at the front of the chain is treated separately as described below in remark \ref{mathcalHremark}.   
\begin{definition}
Define $\bar{H}^S$ to be the Hamiltonian of a segment of length $\bar{\ell}$,
\[
\bar{H}^S = \sum_{i=0}^{\bar{\ell}-2} \textsf{I}^{\otimes \bar{\ell}-2-i} \otimes \bar{H} \otimes \textsf{I}^{\otimes i}.
\]
Since $\mathrm{ker}\, {\mathcal H} \ne \emptyset$, it follows that $\mathrm{ker}\, \bar{H}^S \ne \emptyset$.  Let $\bar{P}$ denote the projector on to its zero-energy ground states so that $ \bar{H}^S \bar{P} = \bar{P} \bar{H}^S = 0$.  Define 
\[
H^S =  \sum_{i = 0}^{\lfloor \ell/{\bar{\ell}} \rfloor - 1} \textsf{I}^{\otimes (\lfloor \ell/{\bar{\ell}} \rfloor -1 - i)\bar{\ell}} \otimes \bar{H}^S \otimes \textsf{I}^{i{\bar{\ell}}} 
\]
to be the sum of the Hamiltonians of the unlinked segments partitioning $\mathcal H$.  (The remnant of length $\bar{\bar{\ell}}$ has been omitted so that $H^S$ acts on $\lfloor \ell/{\bar{\ell}} \rfloor \bar{\ell}$ spins.)  Let $P^S = \bar{P}^{\otimes \lfloor \ell/{\bar{\ell}} \rfloor }$ denote the projector on to its zero-energy ground states so that $H^S P^S = P^S H^S = 0$.  Note that $ \textsf{I}^{\otimes \lfloor \ell/{\bar{\ell}} \rfloor  \bar{\ell} - i {\bar{\ell}}-1} \otimes \bar{H} \otimes \textsf{I}^{i {\bar{\ell}}-1}$ links the last spin $i {\bar{\ell}}$ of segment $i$ and the first spin $i {\bar{\ell}}+1$ of segment $i+1$.     Define the link Hamiltonian to be 
\[
H^L =  \sum_{i=1}^{\lfloor \ell/{\bar{\ell}} \rfloor -1} \textsf{I}^{\otimes \lfloor \ell/{\bar{\ell}} \rfloor  \bar{\ell} - i {\bar{\ell}}-1} \otimes \bar{H} \otimes \textsf{I}^{i {\bar{\ell}}-1};
\]
it links adjacent segments together.
Then we set
\begin{equation}
H = H^S + H^L
\label{H}
\end{equation}
to be the Hamiltonian of the linked segments.  See Fig. \ref{Hfigure} for an illustration.

\begin{figure}[H]
\begin{center}
\includegraphics[width=4.5in]{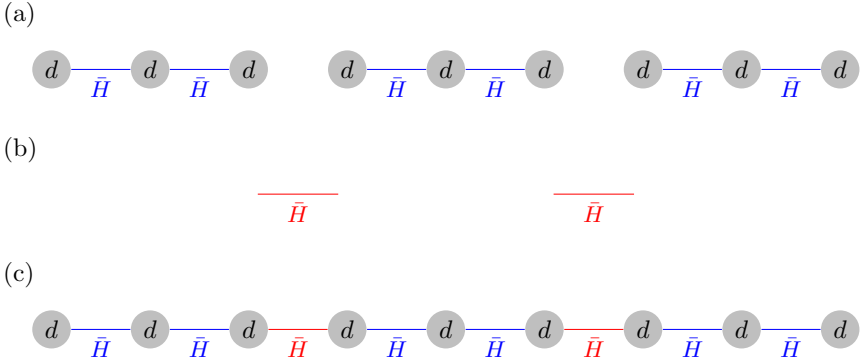}
\end{center}
\caption{Chain Hamiltonian $H$ for $\ell = 9$ spins of dimension $d$. (a) Hamiltonian $H^S$ of unlinked segments in the case $\bar{\ell} = 3$.  (b) Link Hamiltonian $H^L$.  (c) Hamiltonian $H = H^S+H^L$ of linked segments. }
\label{Hfigure}
\end{figure}

The Hamiltonian of the omitted remnant of length $\bar{\bar{\ell}}$ at the front of the chain is
\[
H^R =  \sum_{i= 0}^{\bar{\bar{\ell}}-2} \textsf{I}^{\otimes (\bar{\bar{\ell}}-2 - i) } \otimes \bar{H} \otimes \textsf{I}^{\otimes i}.
\]
\end{definition}

\begin{remark} 
It follows immediately from these definitions that
\begin{align} \nonumber
&\mathcal H =  \bar{H}^{\ell,\ell+1} \otimes \textsf{I}^{\otimes \ell} \\ \label{othermathcalH}
& \hspace{0.25in} + \textsf{I} \otimes (H \otimes  \textsf{I}^{\otimes \bar{\bar{\ell}}}+  \textsf{I}^{\otimes (\lfloor \ell/{\bar{\ell}} \rfloor \bar{\ell}-1)} \otimes \bar{H} \otimes \textsf{I}^{\otimes \bar{\bar{\ell}}-1} +\textsf{I}^{\otimes \lfloor \ell/{\bar{\ell}} \rfloor \bar{\ell}} \otimes  H^R) \otimes \textsf{I} \\ \nonumber
& \hspace{3.25in} +  \textsf{I}^{\otimes \ell} \otimes \bar{H}^{0,1},
\end{align}
as is depicted in Fig. \ref{FancyHfigure}.  We usually think of $\bar{\ell}$ as being large but fixed so the portion of the chain corresponding to $H$ is what tends to infinity along with system size.  Thus, the majority of the work involved in showing that ${\mathcal H}$ is gapped amounts to showing that $H$ is gapped.  We discuss this in more detail in a later remark.
\label{DecompRemark}
\end{remark}

\begin{figure}[H]

\begin{center}
\includegraphics[width=4.5in]{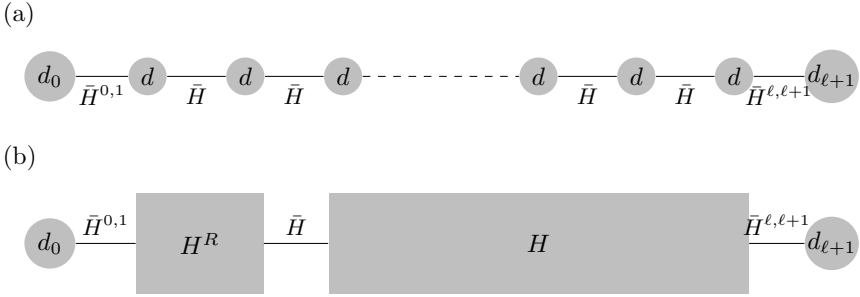}
\end{center}
\caption{(a) The Hamiltonian ${\mathcal H}$ corresponding to a spin chain with boundary conditions.  (b) The full Hamiltonian ${\mathcal H}$ partitioned into a convenient choice of pieces.}
\label{FancyHfigure}
\end{figure}

\begin{definition}
A primary object of our study will be $h = P^S H^L P^S$ in which states are constrained to have zero energy along the segments of $H$ and excitations occur only at the links.  Since $\bar{H}$ is the Hamiltonian coupling adjacent spins in $H$, we label the analogue for $h$ using the symbol $\bar{h}$.  Let $\bar{d}$ denote the number of ground states of a segment of length ${\bar{\ell}}$, i.e. let $\bar{d}$ be the dimension of the image of $\bar{P}$.   Define
\begin{equation}
\bar{h} = \left(\bar{P} \otimes \bar{P}\right) \left( \textsf{I}^{\otimes {\bar{\ell}}-1} \otimes \bar{H} \otimes \textsf{I}^{\otimes {\bar{\ell}}-1}\right)  \left(\bar{P} \otimes \bar{P}\right).
\end{equation}
In terms of $\bar{h}$, we can write
\begin{equation}
h =  \sum_{i=0}^{\lfloor \ell/{\bar{\ell}} \rfloor -2} \bar{P}^{\otimes \lfloor \ell/{\bar{\ell}} \rfloor -2 - i}  \otimes \bar{h}   \otimes  \bar{P}^{\otimes i}  
\label{h}
\end{equation}
\end{definition}
An illustration of $h$ appears in Fig. \ref{barhfigure}

\begin{figure}[H]
\begin{center}
\includegraphics[width=4.5in]{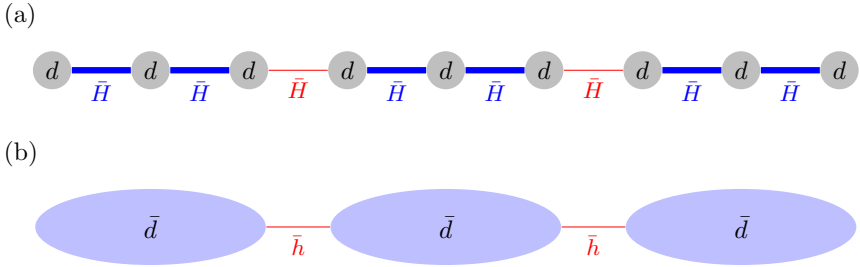}
\end{center}

\caption{(a) Three segments in the middle of a chain (remainder of chain not depicted).  Thick lines connect the spins within each segment (${\bar{\ell}} = 3$ in this figure).  These couplings are locked at zero energy for states in the image of $P^S$.  Thin lines denote links between segments.  (b) Renormalized version of (a).  Each segment is now an effective spin of dimension $\bar{d}$.  Links between effective spins are now the coupling $\bar{h}$.}
\label{barhfigure}
\end{figure}

\subsection{Gap at Some Scale Implies Gap at All Scales}
\label{2SS2} 

One sees that $h = P^S H^L P^S$ is a renormalized version of $H$ with $\bar{h}$ coupling effective spins of dimension $\bar{d}$.  Our initial results relate $h$ to $H$, and then we investigate $h$ itself assuming decaying correlations in the ground state.
Our argument will make use of the following technical lemma \cite{Mizel07}.  In the following lemma and throughout the paper, for any Hermitian matrix $A$ of dimension $n$, denote the ordered eigenvalues as $\lambda_0(A) \le \lambda_1(A) \le \dots \le \lambda_{n-1}(A)$. 
\begin{lemma}
Consider Hermitian matrices $A = B + C$ with $C \ge 0$.   Define the projector $P^B$
on to the 
space of eigenvectors of $B$ with eigenvalue $\lambda_0(B)$.   Let $c_0 < c_1$ denote the lowest two eigenvalues of
$P^BCP^B$ within the image of $P^B$.  Suppose $c_0 = 0$.  Then the spectral gap of $A$ satisfies 
\[
\lambda_1(A) - \lambda_0(A) \ge \frac{\lambda_1(B)-\lambda_0(B) }{\lambda_1(B)-\lambda_0(B)+ c_1+\norm{C}_2} c_1.
\]
\label{ABC}
\end{lemma}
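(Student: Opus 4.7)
The approach is to combine the variational characterization of $\lambda_1(A)$ with a two-subspace decomposition of any trial vector, controlling the single off-diagonal coupling that appears via Cauchy--Schwarz applied to $C^{1/2}$.

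First I pin down $\lambda_0(A)$ and the $A$-ground space. Because $c_0=0$, there is a unit vector $\ket{\psi}$ in the image of $P^B$ with $\bra{\psi}C\ket{\psi}=0$; together with $C\ge 0$ this forces $C\ket{\psi}=0$, so $A\ket{\psi}=B\ket{\psi}=\lambda_0(B)\ket{\psi}$ and $\lambda_0(A)\le\lambda_0(B)$, while Weyl's inequality supplies the reverse. Moreover, any null vector of $A-\lambda_0(B)=(B-\lambda_0(B))+C$, a sum of positive semi-definite pieces, must lie in $\mathrm{image}(P^B)\cap\ker C$. Letting $P^A$ denote the projector onto $\ker(A-\lambda_0(A))$, this yields $P^A\le P^B$; I set $Q:=P^B-P^A$. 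The hypothesis $c_0<c_1$ then gives $\bra{u}C\ket{u}\ge c_1$ for every unit vector $\ket{u}\in\mathrm{image}(Q)$.

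By the variational principle, $\lambda_1(A)-\lambda_0(A)=\inf\bra{\phi}(A-\lambda_0(A))\ket{\phi}$ over unit vectors $\ket{\phi}$ with $P^A\ket{\phi}=0$. Each such $\ket{\phi}$ decomposes as $\ket{\phi}=q\ket{u}+r\ket{v}$ with $\ket{u}\in\mathrm{image}(Q)$, $\ket{v}\in\ker(P^B)$, and $q^2+r^2=1$ (both $q,r\ge 0$ after absorbing phases). Using $B\ket{u}=\lambda_0(B)\ket{u}$, $\bra{v}B\ket{v}\ge\lambda_0(B)+\gamma$ with $\gamma:=\lambda_1(B)-\lambda_0(B)$, and the Cauchy--Schwarz estimate $|\bra{u}C\ket{v}|\le\sqrt{\bra{u}C\ket{u}\bra{v}C\ket{v}}$, I arrive at
\begin{equation*}
\bra{\phi}(A-\lambda_0(A))\ket{\phi}\;\ge\;q^2 a+r^2(\gamma+b)-2qr\sqrt{ab},
\end{equation*}
where $a:=\bra{u}C\ket{u}\in[c_1,\norm{C}_2]$ and $b:=\bra{v}C\ket{v}\in[0,\norm{C}_2]$.

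Minimizing the right-hand side over $q^2+r^2=1$ is a routine $\sin/\cos$ optimization yielding
\[
\tfrac{1}{2}\Bigl(a+b+\gamma-\sqrt{(a+b+\gamma)^2-4a\gamma}\Bigr)=\frac{2a\gamma}{a+b+\gamma+\sqrt{(a+b+\gamma)^2-4a\gamma}}\;\ge\;\frac{a\gamma}{a+b+\gamma}
\]
after rationalizing the conjugate. This expression is monotone increasing in $a$ and decreasing in $b$, so substituting the worst case $a=c_1$, $b=\norm{C}_2$ produces the stated bound. I expect the main obstacle to be control of the cross term $\bra{u}C\ket{v}$: the crude estimate $|\bra{u}C\ket{v}|\le\norm{C}_2$ would produce a prefactor that turns negative once $\norm{C}_2^{\,2}>c_1\gamma$, whereas the Cauchy--Schwarz bound $\sqrt{ab}$ exactly absorbs the $r^2 b$ contribution from $\bra{v}C\ket{v}$ and is what delivers the clean denominator $c_1+\norm{C}_2+\gamma$.
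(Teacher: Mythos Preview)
Your proof is correct and follows essentially the same route as the paper: decompose with respect to $P^B$ versus $I-P^B$, reduce to a $2\times 2$ problem, and use positive semi-definiteness of $C$ (your Cauchy--Schwarz on $C^{1/2}$ is exactly the paper's $\bra{\mu}C\ket{\mu}\bra{\nu}C\ket{\nu}-|\bra{\mu}C\ket{\nu}|^2\ge 0$) to control the off-diagonal. The only cosmetic difference is that the paper works directly with the eigenvector $\ket{\lambda_1(A)}$ and bounds the smaller $2\times 2$ eigenvalue by determinant over trace, whereas you argue variationally over all $\ket{\phi}\perp\ker(A-\lambda_0(A))$ and rationalize the explicit eigenvalue formula; both lead to the same $\frac{a\gamma}{a+b+\gamma}$ and the same worst-case substitution $a=c_1$, $b=\norm{C}_2$.
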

\begin{proof}
Set $\left| \lambda_1(A) \right\rangle =  \alpha \left|\mu \right\rangle + \beta \left|\nu \right\rangle$ for $\alpha = \norm{P^B \left| \lambda_1(A) \right\rangle}$, $\beta = \norm{(\textsf{I}-P^B) \left| \lambda_1(A) \right\rangle}$, $\left|\mu \right\rangle = P^B \left| \lambda_1(A) \right\rangle/\norm{P^B \left| \lambda_1(A) \right\rangle}$,  and $\left|\nu \right\rangle = (\textsf{I}-P^B) \left| \lambda_1(A) \right\rangle/\norm{(\textsf{I}-P^B) \left|\lambda_1(A) \right\rangle}$.  The equation $(A-\lambda_0(A)) \left| \lambda_1(A) \right\rangle =(\lambda_1(A)-\lambda_0(A)) \left| \lambda_1(A) \right\rangle$ implies the $2 \times 2$ matrix equation
\begin{align*}
\left[\begin{array}{cc} \bra{\mu}A-\lambda_0(A)\ket{\mu} & \bra{\mu}A-\lambda_0(A)\ket{\nu} \\ \bra{\nu}A-\lambda_0(A)\ket{\mu} & \bra{\nu}A-\lambda_0(A)\ket{\nu} \end{array} \right] \left[\begin{array}{c}\alpha \\ \beta \end{array} \right] = (\lambda_1(A)-\lambda_0(A)) \left[ \begin{array}{c}\alpha \\ \beta \end{array}\right] \!\!.
\end{align*}
The smallest $\lambda_1(A)-\lambda_0(A)$ that can possibly satisfy this equation is the smaller eigenvalue of the matrix on the left hand side.  Since $A-\lambda_0(A) \textsf{I}$ is positive semi-definite, this smaller eigenvalue is bounded below by the determinant of the matrix  divided by its trace (i.e. the product of the matrix's eigenvalues divided by the sum of its eigenvalues).  Thus,
\begin{align*}
\lambda_1(A)-\lambda_0(A) \ge & \frac{\bra{\mu}A-\lambda_0(A)\ket{\mu} \bra{\nu}A-\lambda_0(A)\ket{\nu} - \left|\bra{\mu}A-\lambda_0(A)\ket{\nu}\right|^2}{\bra{\mu}A-\lambda_0(A)\ket{\mu}+ \bra{\nu}A-\lambda_0(A)\ket{\nu}}.
\end{align*}
To simplify the right hand side, we show $\lambda_0(A) = \lambda_0(B)$.  Clearly, $\lambda_0(B) \le \lambda_0(A)$ since $A=B+C$ with $C \ge 0$.  Let $\ket{c_0}$ by any vector satisfying $P^B\ket{c_0} = \ket{c_0}$ and  $P^BCP^B\ket{c_0} =c_0 \ket{c_0}=0$.  We have $0 = \bra{c_0} P^B C P^B \ket{c_0}/\bra{c_0}P^B\ket{c_0}= \bra{c_0} C \ket{c_0}/\left\langle c_0 | c_0\right\rangle$, so $\ket{c_0}$ is an eigenvector of $C$ with eigenvalue $0$.  Moreover, $B \ket{c_0} =  B P^B  \ket{c_0} = \lambda_0(B)  \ket{c_0}$.  It follows that $A\ket{c_0} = \lambda_0(B)\ket{c_0}$, implying $\lambda_0(A) \le \lambda_0(B)$ since $\lambda_0(B)$ is an eigenvalue of $A$ and $\lambda_0(A)$ is the smallest eigenvalue of $A$.  We conclude that $\lambda_0(A)=\lambda_0(B)$ and write
\begin{align*}
\lambda_1(A)-\lambda_0(A) \ge & \frac{\bra{\mu}C\ket{\mu} \bra{\nu}B+C-\lambda_0(A)\ket{\nu} - \left|\bra{\mu}C\ket{\nu}\right|^2}{\bra{\mu}C\ket{\mu}+ \bra{\nu}B-\lambda_0(A)\ket{\nu}+ \bra{\nu}C\ket{\nu}}
\end{align*}
because $B \ket{\mu} =BP^B \ket{\mu} = \lambda_0(B) \ket{\mu}= \lambda_0(A) \ket{\mu}$.  Simplifying further,
\[
\lambda_1(A)-\lambda_0(A)  \ge \frac{\bra{\mu}C \ket{\mu} \bra{\nu} B-\lambda_0(B)\ket{\nu}}{\bra{\mu}C\ket{\mu}+ \bra{\nu}B-\lambda_0(B)\ket{\nu}+ \norm{C}_2} 
\]
since $\bra{\mu}C\ket{\mu} \bra{\nu}C\ket{\nu}- \left|\bra{\mu}C\ket{\nu}\right|^2 \ge 0$ for the positive semi-definite matrix $C$.  

As shown above, any $\ket{c_0}$ satisfying $P^B\ket{c_0} = \ket{c_0}$ and  $P^BCP^B\ket{c_0} =c_0 \ket{c_0}=0$ must be an eigenvector of $A$ with eigenvalue $\lambda_0(A) \ne \lambda_1(A)$.  It must therefore be orthogonal to $\ket{\lambda_1(A)}$ and also to $P^B\ket{\lambda_1(A)}$, since $0=\bra{\lambda_1(A)}\left.c_0\right\rangle = \bra{\lambda_1(A)}P^B\ket{c_0}$.  It follows that
\[
c_1 \le \frac{\bra{\lambda_1(A)}P^BCP^B\ket{\lambda_1(A)}}{\bra{\lambda_1(A)}P^B\ket{\lambda_1(A)}}  = \bra{\mu}C \ket{\mu}.
\]
Inserting this into our last inequality for $\lambda_1(A)-\lambda_0(A)$ completes the proof.
 \end{proof}
 The following result of Lemm and Mosgunov \cite{Lemm2019} will also play a central role in our analysis.
\begin{lemma}
Consider a frustration-free Hamiltonian $H$ as in (\ref{H}). If $H$ is not gapped, then there exists a strictly increasing sequence of chain sizes $\ell_i$ for $i \in \mathbb{N}$ such that the gap at size $\ell_i$ is less than or equal to $4 \sqrt{6} \ell_i^{-3/2}$.
\label{LemmMosgunov}
\end{lemma}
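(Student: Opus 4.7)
\emph{Proof plan.} My plan is to argue via the contrapositive: assuming $\gamma(\ell) > 4\sqrt{6}\,\ell^{-3/2}$ for all sufficiently large chain sizes $\ell$, I would deduce that $H$ is gapped. Equivalently, starting from the hypothesis that $H$ is not gapped, I would construct the required sequence $\ell_i$ directly by combining a Knabe-type finite-size criterion with an elementary balancing step.

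The core technical tool is a Knabe-type inequality for frustration-free nearest-neighbor chains of the form
\[
\gamma(n) \;\le\; \gamma(L) + \frac{6}{n(n-1)}, \qquad n \le L.
\]
I would either cite this from Lemm--Mosgunov or reprove it by the classical martingale/overlap argument: partition the length-$L$ chain into overlapping windows of length $n$, write $\textsf{I} - G_L$ (the complement of the global ground state projector) as a telescoping sum of window ground-state projectors, and bound the operator-norm overlap of consecutive windows by a Cauchy--Schwarz estimate. The constant $6$ is exactly the price paid by those overlap bounds, and pinning it down is where the technical weight of the argument sits.

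Granted the inequality, the sequence is constructed as follows. Since $H$ is not gapped, pick $L_k \uparrow \infty$ with $\gamma(L_k) \downarrow 0$, and set $\ell_k = \bigl\lceil \sqrt{6/\gamma(L_k)}\,\bigr\rceil$, so that $6/\ell_k^2 \approx \gamma(L_k)$ and $\ell_k \le L_k$ for $k$ large. The finite-size bound then yields
\[
\gamma(\ell_k) \;\le\; \gamma(L_k) + \frac{6}{\ell_k(\ell_k - 1)} \;\le\; 2\gamma(L_k)\bigl(1 + o(1)\bigr),
\]
and the inequality $2\gamma(L_k) \le 4\sqrt{6}\,\ell_k^{-3/2}$ reduces after substitution to $\gamma(L_k)^{1/4} \le 2/6^{1/4}$, which holds for all large $k$ because $\gamma(L_k) \to 0$. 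A further subsequence forces $\ell_k$ to be strictly increasing.

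The main obstacle is the Knabe-type inequality, specifically tracking the constant $6$ carefully enough that the final coefficient comes out as $4\sqrt{6}$ rather than some other constant of the same order. Once the finite-size criterion is in hand, the exponent $3/2$ emerges automatically from the balanced choice $\ell_k \sim \gamma(L_k)^{-1/2}$, which equates the two terms on the right-hand side of the Knabe bound.
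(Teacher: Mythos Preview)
Your approach and the paper's share the same underlying idea---the contrapositive of a Knabe-type finite-size criterion---but differ in execution. The paper directly cites a corollary from Lemm--Mozgunov (remark~2.8(ii) of \cite{Lemm2019}) in the form
\[
\gamma_m \;\ge\; \frac{1}{2^8\sqrt{6(2l)}}\Bigl(\min_{l \le l' \le 2l} \gamma_{l'} - 4\sqrt{6}\,(2l)^{-3/2}\Bigr),
\]
which already has the threshold $4\sqrt{6}\,n^{-3/2}$ built in, so the lemma follows by a one-line contrapositive with no balancing step. You instead invoke the more basic $O(n^{-2})$ Knabe/Gosset--Mozgunov threshold and construct the sequence by hand via the choice $\ell_k \sim \gamma(L_k)^{-1/2}$. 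This is correct and in fact yields the stronger bound $\gamma(\ell_k) = O(\ell_k^{-2})$, from which the stated $\ell_k^{-3/2}$ bound is immediate for large $\ell_k$; contrary to what you say at the end, the constant $4\sqrt{6}$ plays no essential role in your argument, and you do not need to track it through the Knabe proof. Two minor caveats: (i) the Knabe inequality you state omits the prefactor $c_n \in (0,1)$ present in the literature versions, though since $c_n$ is bounded away from zero this only shifts constants; (ii) the assertion ``$\ell_k \le L_k$ for $k$ large'' needs a short case split---if already $\gamma(L_k) \le 4\sqrt{6}\,L_k^{-3/2}$ take $\ell_k = L_k$, and otherwise $\ell_k \lesssim L_k^{3/4} < L_k$. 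The paper's route is shorter by virtue of citing the sharper corollary; yours is more self-contained and would work with any $O(n^{-2})$ finite-size criterion.
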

\begin{proof}
A proof by contradiction follows.  Suppose the statement is not true: $H$ is not gapped but there nevertheless exists some positive constant $l$ such that its gap $\gamma_{l^\prime} > 4 \sqrt{6}  {l^\prime}^{-3/2}$ for all ${l^\prime}\ge l$.  We write $\gamma_{l^\prime} = 4 \sqrt{6}  {l^\prime}^{-3/2} + \Delta \gamma_{l^\prime}$ where $\Delta \gamma_{l^\prime} > 0$ for all ${l^\prime}\ge l$. Then we use the corollary of theorem 2.7 described in remark 2.8 (ii) of \cite{Lemm2019}.  The corollary states 
\[
\gamma_m \ge  \frac{1}{2^8\sqrt{6(2l)}} (\min_{l \le l^\prime \le 2l} \gamma_{l^\prime} -4\sqrt{6}(2l)^{-3/2})
\]
where we have set $n=2l$.  This allows us to conclude that 
\begin{align*}
\gamma_m &=\frac{1}{2^8\sqrt{6(2l)}}(\min_{l \le l^\prime \le 2l} (4\sqrt{6} {l^\prime}^{-3/2} + \Delta \gamma_{l^\prime}) -4\sqrt{6}(2l)^{-3/2}) \\
&\ge \frac{1}{2^8\sqrt{6(2l)}} \min_{l \le l^\prime \le 2l}  \Delta \gamma_{l^\prime}
\end{align*}
which shows that $H$ is gapped, drawing a contradiction.
\end{proof}

\begin{theorem}
$H$ is gapped if and only if $h$ is gapped.
\label{hH}
\end{theorem}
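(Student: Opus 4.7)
The plan is to prove the two directions separately, exploiting the frustration-free decomposition $H = H^S + H^L$, with the non-trivial direction requiring a combination of Lemma \ref{ABC} and Lemma \ref{LemmMosgunov}. As a preliminary I would record the structural identification of ground spaces: because $H^S, H^L \ge 0$, a state is annihilated by $H$ iff it lies in $\mathrm{im}\, P^S$ and is annihilated by $H^L$, which is in turn equivalent to being annihilated by $h = P^S H^L P^S$ inside $\mathrm{im}\, P^S$. Thus $H$ and $h$ share a ground space and the content of the theorem is a comparison of first excited energies.

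For the direction $H$ gapped $\Rightarrow$ $h$ gapped, I would take an arbitrary normalized $|\psi\rangle \in \mathrm{im}\, P^S$ orthogonal to $\ker h$. Since $P^S|\psi\rangle = |\psi\rangle$ and $H^S P^S = 0$, one has $\langle\psi|h|\psi\rangle = \langle\psi|H^L|\psi\rangle = \langle\psi|H|\psi\rangle$. By the kernel identification above, $|\psi\rangle$ is also orthogonal to $\ker H$, so $\langle\psi|H|\psi\rangle \ge \gamma_H$; taking the infimum yields $\gamma_h \ge \gamma_H$. This direction is essentially immediate from frustration-freeness.

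The reverse direction is the real work. I would apply Lemma \ref{ABC} with $B = H^S$ and $C = H^L$, so that $P^B = P^S$ and $P^B C P^B$ restricted to $\mathrm{im}\, P^S$ is precisely $h$. Frustration-freeness supplies $c_0 = 0$ and identifies $c_1$ with the spectral gap of $h$, while $\lambda_1(B) - \lambda_0(B)$ is the gap of the single segment Hamiltonian $\bar H^S$, a constant independent of $\ell$. The main obstacle is $\|H^L\|_2$: since the link terms act on mutually disjoint pairs of spins they are commuting, so the norm scales linearly with the number $\lfloor \ell/\bar{\ell} \rfloor - 1$ of links. Consequently Lemma \ref{ABC} gives only a size-dependent bound of the form $\gamma_H \ge \Omega(1/\ell)$ rather than a uniform constant.

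To upgrade this to a genuine gap I would invoke the contrapositive of Lemma \ref{LemmMosgunov}: a $1/\ell$ lower bound eventually exceeds $4\sqrt{6}\,\ell^{-3/2}$, so no subsequence of chain sizes witnessing gap decay at rate $\ell^{-3/2}$ can exist, forcing $H$ to be gapped. This step is the key ingredient that converts the polynomially-small, Lemma \ref{ABC}-derived bound into the size-independent gap claimed by the theorem, and it is where I expect the main conceptual obstacle to lie.
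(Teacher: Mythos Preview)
Your proposal is correct and follows essentially the same approach as the paper: the easy direction is the variational inequality $\gamma_h \ge \gamma_H$ via $\langle\psi|h|\psi\rangle = \langle\psi|H|\psi\rangle$ for $\psi \in \mathrm{im}\,P^S$, and the hard direction applies Lemma~\ref{ABC} with $B=H^S$, $C=H^L$ to obtain a $\Omega(1/\ell)$ lower bound on the gap of $H$, which is then upgraded to a uniform gap via the contrapositive of Lemma~\ref{LemmMosgunov}. Your additional observations (explicit ground-space identification, disjointness and commutativity of the link terms) are correct refinements but not departures from the paper's argument.
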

 
\begin{proof}
First, we show that, if $h$ is not gapped, then $H$ is not gapped.  Because $H$ is a frustration-free Hamiltonian, the zero energy ground state of $H$ lies in the image of $P^S$ and is also a zero energy ground state of $h$.  (Of course, any state in the nullspace of $P^S$ is an additional zero energy, trivial eigenstate of $h$.)  Suppose that $h$ has a low energy excited eigenstate $\ket{\lambda}$.  Because $\ket{\lambda}$ is orthogonal to the ground state of $h$, it is orthogonal to the ground state of $H$ and constitutes a variational guess for the first excited state of $H$.  Thus, the energy $\bra{\lambda} h \ket{\lambda}= \bra{\lambda} H \ket{\lambda}$ is an upper bound for the spectral gap of $H$.

Next, we show that, if $h$ is gapped then $H$ is gapped.  Set $A = H$, $B = H^S$, and $C = H^L$ in lemma \ref{ABC}.  The parameter $c_1$ equals the gap of $h$.   The parameter $\lambda_1(B) - \lambda_0(B)$ equals the gap of $H^S$, which is a positive value depending upon the segment length ${\bar{\ell}}$ but not upon the chain length of $H$.   Since $\norm{H^L}_2 \le \lfloor \ell/{\bar{\ell}} \rfloor \norm{\bar{H}}_2$, lemma \ref{ABC} implies that the gap of $H$ decreases no faster than $\mathrm{const}/(\mathrm{const}^\prime+\lfloor \ell/{\bar{\ell}} \rfloor )$.  Thus, there exists a positive integer $n$ such that the gap of $H$ is greater than $4\sqrt{6}(\lfloor \ell/{\bar{\ell}} \rfloor  )^{-3/2}$ for all $\lfloor \ell/{\bar{\ell}} \rfloor > n$.
According to lemma 2, this implies that $H$ must be gapped.
 \end{proof}
 
\begin{remark}
The theorem applies to $H$ with open boundary conditions in which there are no boundary spins to complicate the  renormalization of $H$ to $h$.  One expects to be able to frame an analogous theorem when $H$ has periodic boundary conditions, since one can prove lemma \ref{LemmMosgunov} by invoking theorem 3 appearing in Gosset and Mozgunov \cite{Gosset2016} instead of invoking the the corollary of theorem 2.7 of Lemm and Mozgunov  \cite{Lemm2019}.
\end{remark}

\begin{remark}
The theorem can be used in conjunction with lemma \ref{ABC} to prove that ${\mathcal H}$ is gapped, including the remnant part of the chain and the boundary spins.  To address the remnant part, set $A = B+C$ where $B =  H \otimes \textsf{I}^{\otimes \bar{\bar{\ell}}}+ \textsf{I}^{\otimes \lfloor \ell/{\bar{\ell}} \rfloor \bar{\ell}} \otimes  H^R$, and $C = \textsf{I}^{\lfloor \ell/{\bar{\ell}} \rfloor \bar{\ell}-1 } \otimes \bar{H} \otimes \textsf{I}^{\otimes \bar{\bar{\ell}}-1} $.  If $h$ is gapped then $H$ is gapped so $B$ is gapped.  After all, the gap of $H^R$ is bounded from below by a positive constant independent of $\ell$: we can choose the constant to be the minimum of its gap over the length $0 \le \bar{\bar{\ell}} < \bar{\ell}$.  Clearly, $\norm{C}_2$ is independent of chain size.  So, $A$ is gapped provided $h$ is gapped and the gap $c_1$ of $C$ in the basis of ground states of $B$ is independent of chain size.  Once one proves this for a given spin chain with a specific $\bar{H}$, lemma \ref{ABC} implies that $H \otimes \textsf{I}^{\otimes \bar{\bar{\ell}}}+ \textsf{I}^{\lfloor \ell/{\bar{\ell}} \rfloor \bar{\ell}-1 } \otimes \bar{H} \otimes \textsf{I}^{\otimes \bar{\bar{\ell}}-1}  +\textsf{I}^{\otimes \lfloor \ell/{\bar{\ell}} \rfloor \bar{\ell}} \otimes  H^R$ is gapped.  To consider the boundary terms, we can apply lemma \ref{ABC} a second time with 
\begin{align*}
& B = \textsf{I} \otimes \left(H \otimes \textsf{I}^{\otimes \bar{\bar{\ell}}}+ \textsf{I}^{\lfloor \ell/{\bar{\ell}} \rfloor \bar{\ell}-1 } \otimes \bar{H} \otimes \textsf{I}^{\otimes \bar{\bar{\ell}}-1}  +\textsf{I}^{\otimes \lfloor \ell/{\bar{\ell}} \rfloor \bar{\ell}} \otimes  H^R \right)  \otimes \textsf{I}, \\
& C =  \bar{H}^{\ell,\ell+1} \otimes \textsf{I}^{\otimes \ell}+\textsf{I}^{\otimes \ell} \otimes \bar{H}^{0,1} \\
& A = B+C = \mathcal{H}.
\end{align*}
Of course, for some forms of ${\mathcal H}$, it may be more convenient to incorporate the effect or a boundary term in $H$ or in $H^R$ as a first step.  Then, as a second step, lemma \ref{ABC} can be invoked using $C = \textsf{I} \otimes (\textsf{I}^{\lfloor \ell/{\bar{\ell}} \rfloor \bar{\ell}-1 } \otimes \bar{H} \otimes \textsf{I}^{\otimes \bar{\bar{\ell}}-1}) \otimes \textsf{I}$ to couple the remnant to the rest of the chain.
\label{mathcalHremark}
\end{remark}
 
\subsection{Decaying Correlations and the Gap of the Renormalized Hamiltonian}
\label{2SS3} 
 
Having related the renormalized Hamiltonian $h$ to the total chain Hamiltonian ${\mathcal H}$, we now study $h$ itself.  The following definition describes decaying ground state correlations in terms of the properties of $h$.
\begin{definition} 
Write $\bar{h} = \tilde{h} + \bar{k}$  where $\tilde{h} \ge 0$ and $[\bar{P} \otimes \tilde{h},\tilde{h} \otimes \bar{P}] = 0$.  If, for any $\Delta>0$, a segment length ${\bar{\ell}}$ can be found such that $\norm{\bar{k}}_2 \le \Delta$, then the chain has decaying correlations.
\label{decayingcorrelations}
\end{definition}
  
\begin{remark}
Consider an alternate way to define decaying ground state correlations in a set of ground states $\ket{\Psi_z}$ with projector $\sum_{z^\prime} \ket{\Psi_{z^\prime}} \bra{\Psi_{z^\prime}}$.  One might stipulate that, given any $\Delta >0$, it is possible to find a distance ${\bar{\ell}}$ such that any $A$ and $B$ acting on parts of the chain spaced ${\bar{\ell}}$ or more apart satisfy
\[
\left| \bra{\Psi_z} A B \ket{\Psi_z} - \bra{\Psi_z} A \left(\sum_{z^\prime} \ket{\Psi_{z^\prime}} \bra{\Psi_{z^\prime}}\right) B \ket{\Psi_z} \right| < \Delta \norm{A} \norm{B}.
\]
For brevity, we write this equation $\bra{\Psi_z} A \left(\sum_{z^\prime} \ket{\Psi_{z^\prime}} \bra{\Psi_{z^\prime}}\right) B \ket{\Psi_z}  \approx \bra{\Psi_z} A B \ket{\Psi_z} $.  This alternate definition motivates Def. \ref{decayingcorrelations} as follows.  Consider a chain with Hamiltonian $H^S$, without the link Hamiltonian $H^L$.  Then $\sum_{z^\prime} \ket{\Psi_{z^\prime}} \bra{\Psi_{z^\prime}}$ is the projector $P^S$ on the ground state of $H^S$.  Set $A = \textsf{I}^{\otimes \lfloor \ell/{\bar{\ell}} \rfloor \bar{\ell}- i {\bar{\ell}}-1} \otimes \bar{H} \otimes \textsf{I}^{i {\bar{\ell}}-1}$ and $B = \textsf{I}^{\otimes \lfloor \ell/{\bar{\ell}} \rfloor \bar{\ell}- (i+1) {\bar{\ell}}-1} \otimes \bar{H} \otimes \textsf{I}^{(i+1) {\bar{\ell}}-1}$.  Then $ \bra{\Psi_z}A P^S B\ket{\Psi_z} \approx \bra{\Psi_z}A B\ket{\Psi_z} = \bra{\Psi_z}B A\ket{\Psi_z}  \approx \bra{\Psi_z}B P^S A\ket{\Psi_z}$, implying that   $\bra{\Psi_z} [A P^S B  ,B P^S A ] \ket{\Psi_z} \approx 0$.  This means that 
\[
 \bra{\Psi_z} \bar{P}^{\otimes \lfloor \ell/{\bar{\ell}} \rfloor - (i+1) - 3} \otimes [\bar{P} \otimes \bar{h},\bar{h} \otimes \bar{P}] \otimes \bar{P}^{\otimes i} \ket{\Psi_z}\approx 0 
\]
which leads to $[\bar{P} \otimes \bar{h},\bar{h} \otimes \bar{P}] \approx 0$ as required by Def. \ref{decayingcorrelations}.
\end{remark}

Def. \ref{decayingcorrelations} implies that ${\bar{\ell}}$ can be chosen to render  $h$ close to a sum of nearly commuting Hamiltonians.  As a result, the gap of $h$ should approximately equal the gap of the $\bar{h}$ up to a small correction.  The proof employs the following standard argument \cite{Knabe1988}.  In the following proposition and throughout the paper, let $\{A,B\} = AB + BA$ denote the anticommutator.

\begin{proposition}
Let  $\bar{g} > 0$ be the gap of $\bar{h}$.  If 
\[ \left\{\bar{P} \otimes \bar{h},\bar{h} \otimes \bar{P}\right\} + x \left(\bar{P} \otimes \bar{h}+\bar{h} \otimes \bar{P}\right) \ge 0
\] then the gap $g$ of the renormalized Hamiltonian (\ref{h}) satisfies $g  \ge \bar{g} - 2x$.
\end{proposition}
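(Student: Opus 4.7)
The plan is to run a Knabe-style argument on the local terms $h_i = \bar{P}^{\otimes L-2-i} \otimes \bar{h} \otimes \bar{P}^{\otimes i}$ of $h = \sum_{i=0}^{L-2} h_i$, where $L = \lfloor \ell/\bar{\ell}\rfloor$. The goal is to establish the operator inequality $h^2 \ge (\bar{g} - 2x)\, h$. Each $h_i$ is positive semi-definite, so $h$ is too, and $h$ has a nontrivial kernel (it contains at least the zero-energy ground space of $H$ inside $\mathrm{image}(P^S)$). The functional calculus then turns $h^2 \ge (\bar{g}-2x) h$ into $\lambda \ge \bar{g} - 2x$ for every positive eigenvalue $\lambda$ of $h$, giving $g \ge \bar{g} - 2x$.

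Expanding the square, $h^2 = \sum_i h_i^2 + \sum_i \{h_i, h_{i+1}\} + \sum_{|i-j|\ge 2} h_i h_j$. The key structural fact I would exploit is that $\bar{h} = (\bar{P}\otimes\bar{P})\,\bar{H}\,(\bar{P}\otimes\bar{P})$ is sandwiched between projectors, so $(\bar{P}\otimes\textsf{I})\bar{h} = \bar{h} = \bar{h}(\bar{P}\otimes\textsf{I})$ and analogously with $\textsf{I}\otimes\bar{P}$. This absorption property forces $h_i h_j = h_j h_i$ whenever $|i-j|\ge 2$, since on every position where one factor carries part of an $\bar{h}$ the other contributes only a $\bar{P}$ that can be absorbed on either side; their product is then a commuting product of PSD operators and is PSD, so the non-adjacent sum contributes nonnegatively. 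For the diagonal sum, the gap of $\bar{h}$ gives $\bar{h}^2 \ge \bar{g}\,\bar{h}$ by the spectral theorem, which tensors up to $h_i^2 \ge \bar{g}\, h_i$ and yields $\sum_i h_i^2 \ge \bar{g}\, h$.

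The cross terms $\{h_i, h_{i+1}\}$ are where the hypothesis enters. On the three consecutive renormalized spins that $h_i$ and $h_{i+1}$ jointly touch, they take the form $Q \otimes (\bar{P}\otimes\bar{h})$ and $Q \otimes (\bar{h}\otimes\bar{P})$, where $Q = \bar{P}^{\otimes L-3}$ is the projector on the other renormalized spins. Tensoring the hypothesized inequality with $Q$ preserves positivity and gives $\{h_i, h_{i+1}\} + x(h_i + h_{i+1}) \ge 0$ for each $i$. Summing over $i = 0, \dots, L-3$ and telescoping,
\[
\sum_i \{h_i, h_{i+1}\} \;\ge\; -x\sum_i (h_i + h_{i+1}) \;=\; -x\bigl(2h - h_0 - h_{L-2}\bigr) \;\ge\; -2x\, h,
\]
since $h_0, h_{L-2}\ge 0$. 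Adding the three contributions yields $h^2 \ge (\bar{g}-2x)h$, completing the argument.

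The only delicate piece is the bookkeeping: confirming that the sandwich identity for $\bar{h}$ really does make $h_i$ and $h_j$ commute when $|i-j|\ge 2$, and aligning the three-site hypothesis with the $Q\otimes(\,\cdot\,)$ embedding so that it reproduces $\{h_i, h_{i+1}\}$ and $h_i + h_{i+1}$ on the nose. Once those identifications are in place, the telescoping step and the standard functional-calculus passage from $h^2 \ge (\bar{g}-2x)h$ to the spectral bound $g\ge \bar{g}-2x$ are entirely routine.
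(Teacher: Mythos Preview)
Your proof is correct and follows essentially the same Knabe-style argument as the paper. The paper's proof compresses your three-part decomposition into a single chain of inequalities---dropping the non-adjacent cross terms (implicitly using their nonnegativity), then applying $\bar{h}^2 \ge \bar{g}\,\bar{h}$ and the anticommutator hypothesis in one step---but the underlying logic is identical, and your version has the virtue of making the commutation of non-adjacent terms and the telescoping bookkeeping explicit.
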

\begin{proof}
Divide $h$ by $\bar{g}$ and square to obtain the inequality
\begin{align*}
\frac{h^2}{\bar{g}^2} & \ge    \sum_{i=0}^{\lfloor \ell/{\bar{\ell}} \rfloor -2}\bar{P}^{\otimes \lfloor \ell/{\bar{\ell}} \rfloor -2 - i}  \otimes \frac{\bar{h}^2}{\bar{g}^2}   \otimes  \bar{P}^{\otimes i} \\
& + \sum_{i=0}^{\lfloor \ell/{\bar{\ell}} \rfloor -3}\bar{P}^{\otimes \lfloor \ell/{\bar{\ell}} \rfloor -3 - i}  \otimes \left\{ \bar{P} \otimes \frac{\bar{h}}{\bar{g}},\frac{\bar{h}}{\bar{g}} \otimes \bar{P}  \right\}   \otimes  \bar{P}^{\otimes i} \ge \frac{h}{\bar{g}} - \frac{2x h}{\bar{g}^2} = \frac{1}{\bar{g}^2} (\bar{g} - 2 x ) h
\end{align*}
implying a gap of $g  \ge \bar{g} - 2x$ for $h$. 
\end{proof}

Thus, if $x$ is small, the gap $g$ of $h$ approaches $\bar{g}$ as expected.   The following proposition gives conditions that allow a suitable $x$ to be found that indeed becomes small as $\Delta$ becomes small.

As stated above, for any Hermitian matrix $A$ of dimension $n$, we denote the ordered eigenvalues as $\lambda_0(A) \le \lambda_2(A) \le \dots \le \lambda_n(A)$. The eigenvalue stability inequality states that $\left|\lambda_i(A+B) - \lambda_i(A)\right| \le \norm{B}_2$ for any $0\le i \le n-1$.

\begin{proposition}  Consider a spin chain with decaying correlations as in Def. \ref{decayingcorrelations} with $\norm{\bar{k}}_2 \le \Delta$.  Set $\eta \ge \lVert \tilde{h} \rVert_2$.  Let $z$ be the dimension of the kernel of $\bar{h} \otimes \bar{P} + \bar{P} \otimes \bar{h}$.  Suppose $\lambda_{z}(\tilde{h} \otimes \bar{P} + \bar{P} \otimes \tilde{h}) \ge \tilde{g} > 2 \Delta$ .  If
\[
x > \frac{2 \Delta (2 \eta + \Delta)}{\tilde{g} - 2\Delta}
\]
then $\{\bar{h} \otimes \bar{P},\bar{P} \otimes \bar{h} \} + x (\bar{h} \otimes \bar{P} + \bar{P} \otimes \bar{h}) \ge 0$.
\end{proposition}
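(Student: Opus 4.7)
The plan is to pass to the orthogonal complement of $\ker(A+B)$, where I write $A := \bar{h}\otimes\bar{P}$ and $B := \bar{P}\otimes\bar{h}$. The key starting observation is that $A,B\ge 0$ (as tensor products of positive semi-definite operators), so any $\ket{\psi}$ with $(A+B)\ket{\psi}=0$ must satisfy $\bra{\psi}A\ket{\psi}=\bra{\psi}B\ket{\psi}=0$, and hence $A\ket{\psi}=B\ket{\psi}=0$ individually. Letting $\Pi$ denote the projector onto this $z$-dimensional common kernel, I get $A\Pi=B\Pi=0$, so $M := \{A,B\}+x(A+B)$ satisfies $\Pi M = M\Pi = 0$. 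Thus $M$ is block diagonal with a zero block on $\text{im}\,\Pi$, and the claim reduces to showing $\Pi^\perp M\Pi^\perp\ge 0$.

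Next I would estimate the two pieces of $M$ on $\text{im}\,\Pi^\perp$. Writing $\tilde{A} := \tilde{h}\otimes\bar{P}$, $\tilde{B} := \bar{P}\otimes\tilde{h}$, $K_A := \bar{k}\otimes\bar{P}$, $K_B := \bar{P}\otimes\bar{k}$, I have $A+B = (\tilde{A}+\tilde{B})+(K_A+K_B)$ with $\lVert K_A+K_B \rVert_2\le 2\Delta$. The eigenvalue stability inequality cited just before the proposition gives $\lambda_z(A+B)\ge \lambda_z(\tilde{A}+\tilde{B})-2\Delta\ge \tilde{g}-2\Delta>0$, so on $\text{im}\,\Pi^\perp$ one has $A+B\ge(\tilde{g}-2\Delta)\Pi^\perp$. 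For the anticommutator I expand $\{A,B\}=\{\tilde{A},\tilde{B}\}+\{\tilde{A},K_B\}+\{K_A,\tilde{B}\}+\{K_A,K_B\}$; using $\lVert\tilde{A}\rVert_2,\lVert\tilde{B}\rVert_2\le\eta$ and $\lVert K_A\rVert_2,\lVert K_B\rVert_2\le\Delta$ bounds the three error terms by $\lVert\{A,B\}-\{\tilde{A},\tilde{B}\}\rVert_2\le 4\eta\Delta+2\Delta^2=2\Delta(2\eta+\Delta)$. The hypothesis $[\tilde{A},\tilde{B}]=0$ (decaying correlations) together with $\tilde{A},\tilde{B}\ge 0$ implies that $\tilde{A}\tilde{B}$ is a product of commuting positive semi-definite operators and is therefore itself positive semi-definite, whence $\{\tilde{A},\tilde{B}\}=2\tilde{A}\tilde{B}\ge 0$.

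Assembling the estimates, on $\text{im}\,\Pi^\perp$ one finds
\[
\Pi^\perp M \Pi^\perp \ge \bigl(x(\tilde{g}-2\Delta)-2\Delta(2\eta+\Delta)\bigr)\Pi^\perp,
\]
which is strictly positive exactly when $x > 2\Delta(2\eta+\Delta)/(\tilde{g}-2\Delta)$. Combined with $\Pi M\Pi=0$, this delivers $M\ge 0$ and finishes the proof. The main obstacle I anticipate is choosing the correct subspace to split off: the tempting alternative of isolating $\ker(\tilde{A}+\tilde{B})$ forces one to wrestle with off-diagonal couplings from $K_A,K_B$ via a Schur-complement argument that does not close, because $\Pi M\Pi$ is no longer automatically zero in that decomposition. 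Decomposing through $\ker(A+B)$ is what makes the perturbation entirely harmless on the kernel (since the positive semi-definiteness of $A$ and $B$ forces them to vanish separately there) and reduces the problem to the routine norm estimate above on the complement.
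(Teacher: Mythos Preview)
Your proof is correct and uses the same ingredients as the paper's: the observation that $\ker(\bar{h}\otimes\bar{P}+\bar{P}\otimes\bar{h})$ annihilates $M$, the bound $\{\tilde{A},\tilde{B}\}\ge 0$ from commutativity, the eigenvalue stability inequality, and the norm estimate $4\eta\Delta+2\Delta^2$ on the error.

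The packaging differs slightly. The paper argues by contradiction and applies eigenvalue stability \emph{once} to the full operator: it writes $M=A'+B'$ with $A'=\{\tilde{A},\tilde{B}\}+x(\tilde{A}+\tilde{B})$ and $B'$ the remainder (so the $x$-error $x(K_A+K_B)$ is absorbed into $B'$), notes $\lambda_z(A')\ge x\tilde{g}$ from $\{\tilde{A},\tilde{B}\}\ge 0$, and then $\lambda_z(M)<0$ would force $x\tilde{g}\le\lVert B'\rVert_2\le 4\eta\Delta+2\Delta^2+2x\Delta$. You instead work directly on $\operatorname{im}\Pi^\perp$, applying eigenvalue stability to $A+B$ versus $\tilde{A}+\tilde{B}$ to get $A+B\ge(\tilde{g}-2\Delta)\Pi^\perp$ there, and bounding $\{A,B\}-\{\tilde{A},\tilde{B}\}$ separately. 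Both routes rearrange to the identical threshold $x>2\Delta(2\eta+\Delta)/(\tilde{g}-2\Delta)$; your version is arguably a touch cleaner since it avoids the contradiction framing, while the paper's single stability step is marginally more compact.
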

\begin{proof}
We use the eigenvalue stability inequality to prove the proposition by contradiction. Write $\{\bar{h} \otimes \bar{P},\bar{P} \otimes \bar{h} \} + x (\bar{h} \otimes \bar{P} + \bar{P} \otimes \bar{h}) = A+B$ where $A=\{\tilde{h} \otimes \bar{P},\bar{P} \otimes \tilde{h} \} + x (\tilde{h} \otimes \bar{P} + \bar{P} \otimes \tilde{h})$ and $B = \{\bar{k} \otimes \bar{P}, \bar{P} \otimes \tilde{h}\} +  \{\tilde{h} \otimes \bar{P}, \bar{P} \otimes \bar{k}\} +  \{\bar{k} \otimes \bar{P}, \bar{P} \otimes \bar{k}\} + x (\bar{k} \otimes \bar{P}+ \bar{P} \otimes \bar{k})$.  Suppose that $A + B$ has a negative eigenvalue.  Because $\bar{h} \ge 0$, every vector in the kernel of $\bar{h} \otimes \bar{P} + \bar{P} \otimes \bar{h}$ is also in the kernel of $\bar{P} \otimes \bar{h}$, in the kernel of $\bar{h} \otimes \bar{P}$, and thus in the kernel of $A+B$.   It follows that  $\lambda_{z}(A+B)\le 0$.

From the form of $A$, and our assumption $\lambda_{z}(\tilde{h} \otimes \bar{P} + \bar{P} \otimes \tilde{h}) \ge \tilde{g}$, we see that  $\lambda_{z}(A) \ge x \tilde{g}$. Thus,
\[
x \tilde{g} \le \left| \lambda_{z}(A+B) - \lambda_{z}(A)\right| \le \norm{B}_2 \le 4\Delta \eta + 2 \Delta \Delta + 2 x \Delta.
\]
Subtract $2 x \Delta$ from both sides and then divide through by $\tilde{g}-2 \Delta$ to obtain
\begin{equation}
x \le \frac{2 \Delta(2 \eta + \Delta)}{\tilde{g} - 2 \Delta}.
\label{xlessthan}
\end{equation}
The proposition follows.
\end{proof}

Assembling our results, we arrive at the following theorem.
\begin{theorem}
Consider a chain exhibiting decaying ground state correlations as defined in Def \ref{decayingcorrelations}.  Let $z$ be the dimension of the kernel of $\bar{h} \otimes \bar{P} + \bar{P} \otimes \bar{h}$.   Define $\tilde{g}$ such that $\tilde{g} \le \lambda_{z}(\tilde{h} \otimes \bar{P} +  \bar{P} \otimes \tilde{h})$.  Let the gap of $\bar{h}$ be $\bar{g}$.  Suppose, for sufficiently large ${\bar{\ell}}$, both $g$ and $\tilde{g}$ are greater than positive, ${\bar{\ell}}$-independent constants. Then $H$ is gapped.
\label{Hgapped}
\end{theorem}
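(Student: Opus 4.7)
The plan is to combine Theorem~\ref{hH} with the two propositions of Section~\ref{2SS3}. By Theorem~\ref{hH}, $H$ is gapped as soon as the renormalized Hamiltonian $h$ has a gap bounded below by a positive, $\ell$-independent constant. The first proposition of Section~\ref{2SS3} supplies $g \ge \bar g - 2x$ whenever $\{\bar P \otimes \bar h, \bar h \otimes \bar P\} + x(\bar P \otimes \bar h + \bar h \otimes \bar P) \ge 0$, and the second proposition produces such an $x$, namely any value exceeding $2\Delta(2\eta+\Delta)/(\tilde g - 2\Delta)$, provided $\norm{\bar k}_2 \le \Delta < \tilde g / 2$. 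The whole task therefore reduces to driving $x$ small by driving $\Delta$ small while keeping $\bar g$, $\tilde g$, and $\eta$ controlled uniformly in $\bar\ell$.

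First I would note that the hypothesized $\bar\ell$-uniform lower bounds on the local gaps $\bar g$ (of $\bar h$) and $\tilde g$, combined with a $\bar\ell$-uniform upper bound $\eta \ge \norm{\tilde h}_2$, force $2\Delta(2\eta+\Delta)/(\tilde g - 2\Delta) \to 0$ as $\Delta \to 0$ at a rate that does not depend on $\bar\ell$. I would then pick $\Delta$ small enough that $2\Delta(2\eta+\Delta)/(\tilde g - 2\Delta) < \bar g / 2$, and invoke Def.~\ref{decayingcorrelations} to select a segment length $\bar\ell$ for which $\norm{\bar k}_2 \le \Delta$. Taking $x$ a hair above $2\Delta(2\eta+\Delta)/(\tilde g - 2\Delta)$, the second proposition delivers the anticommutator inequality, the first proposition then yields $g \ge \bar g - 2x > 0$ with the right hand side independent of the chain length $\ell$, and Theorem~\ref{hH} promotes this into a gap for $H$.

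The main obstacle I anticipate is not the combinatorial chaining of estimates but rather confirming the $\bar\ell$-uniform upper bound on $\eta \ge \norm{\tilde h}_2$; the remaining ingredients are furnished directly by the hypothesis or by the earlier propositions. Since $\tilde h = \bar h - \bar k$ and $\bar h = (\bar P \otimes \bar P)(\textsf{I}^{\otimes \bar\ell - 1}\otimes \bar H \otimes \textsf{I}^{\otimes \bar\ell - 1})(\bar P \otimes \bar P)$ is a double projection of the single fixed coupling $\bar H$, one has $\norm{\bar h}_2 \le \norm{\bar H}_2$, so $\eta := \norm{\bar H}_2 + \Delta$ suffices and depends only on the model and on the chosen $\Delta$. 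With that observation the chain of estimates closes cleanly and the theorem reduces to the announced combination of the three earlier results.
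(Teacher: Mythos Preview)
Your proposal is correct and follows essentially the same route as the paper: choose $\bar\ell$ large so that $\Delta$ is small enough for Proposition~2 to apply, feed the resulting $x$ into Proposition~1 to obtain $g \ge \bar g - 4\Delta(2\eta+\Delta)/(\tilde g - 2\Delta)$, and then invoke Theorem~\ref{hH}. Your explicit verification that $\eta$ can be chosen uniformly in $\bar\ell$ via $\norm{\tilde h}_2 \le \norm{\bar h}_2 + \norm{\bar k}_2 \le \norm{\bar H}_2 + \Delta$ is a detail the paper leaves implicit, but it is indeed needed for the argument to close, so including it is an improvement in rigor rather than a departure in strategy.
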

 \begin{proof}
By choosing ${\bar{\ell}}$ sufficiently large, and thus $\Delta$ sufficiently small, we can ensure that $\tilde{g}> 2 \Delta$.
Using propositions 1 and 2, we see that
\[
\bar{g} - \frac{4 \Delta (2 \eta + \Delta)}{\tilde{g}  - 2\Delta}
\]
is a lower bound of the gap of $h$.  Theorem \ref{hH} then completes the proof.
\end{proof}

\begin{remark}
Example 2 of \cite{Nachtergaele1996} provides an instance of a chain with decaying ground state correlations that is not gapped.  Theorem \ref{Hgapped} does not apply to this chain because its $\bar{g}$ falls polynomially to zero with $\bar{\ell}$.  This is a remarkable property: by exciting one bond $\bar{H}$ that couples two specific spins, it is possible to form a variational excited state whose energy goes to zero with chain length.  To say this in another way, by appropriately glueing a ground state on half of the chain together with a ground state on the other half of the chain, one can form an excited state whose energy goes to zero with chain length.
\end{remark}

\begin{remark}
Equation (\ref{h}) and the expression $\bar{h} = \tilde{h} + \bar{k}$ in definition \ref{decayingcorrelations} show that the renormalized Hamiltonian $h$ is nearly the sum of $\tilde{h}$ terms that commute.  Thus, one might expect a lower bound on the gap of $h$ given by the gap of $\tilde{h}$ minus $\norm{\bar{k}}_2$ or the gap of $\bar{h}$ minus $\norm{\bar{k}}_2$.  However, a subtlety arises since $\bar{h}$ typically has a kernel of dimension $z > 1$.  What if $\tilde{h}=\bar{h} -\bar{k}$ exhibits a small gap between the elements of the kernel $\bar{h}$, leading to a small gap for $h$?  The technical conditions of proposition 2 and theorem 2 ensure this does not happen.  And instead of a lower bound on the gap of $h$ given simply by the gap of $\tilde{h}$ minus $\norm{\bar{k}}_2$, we find instead the gap of $\tilde{h}$ minus a prefactor times $\norm{\bar{k}}_2$.
\end{remark}

\subsection{Using the Renormalization Method in Practice}
\label{2SS4} 

Theorem \ref{Hgapped} can be applied in a variety of settings in order to establish gaps, even when previously known techniques do not immediately apply. We will showcase two such examples, both of which are motivated by quantum circuits.  While the details of rigorously proving that these systems are gapped vary slightly, the high-level strategy remains the same.  We provide a summary here in order to foreshadow what is to come. In some sense, one can view this as the following procedure:
\[
\text{computing ground state space} \rightsquigarrow \text{proving a gap (if one exists)},
\]
which serves as a sort of loose converse to the results of \cite{Landau2015,Block2020}.
\begin{enumerate}[(a)]

    \item In order to apply Theorem \ref{Hgapped} we must analyze the renormalized Hamiltonian.  This involves computing $\bar{P}$, which amounts to understanding the ground state space of $\bar{H}^S$, an open chain of length $\bar{\ell}$. Induction can often be used here.  Assuming the form of the ground state space of a chain of length $\bar{\ell}$, we establish that of a chain of length $\bar{\ell}+1$. In light of frustration-freeness, this reduces to finding the kernel of the operator $\hat{h}$ corresponding to the following bond:

    \begin{figure}[H]
    \begin{center}
    \includegraphics[width=3.5in]{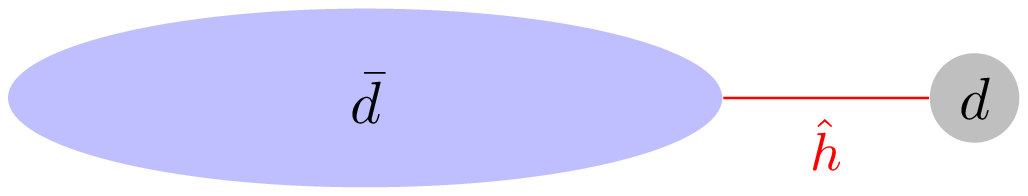}
    \end{center}
    \label{bond}
    \end{figure}
    
    In other words, we can restrict the leftmost $\bar{\ell}$ spins to live in the ground state space of their open chain while the single new spin is treated as free.
   
    \item Now that we know $\bar{P}$, we can compute $\bar{h}$, which is the main object driving the renormalized Hamiltonian. Recall that $\bar{h}$ corresponds to the following bond:
   
    \begin{figure}[H]
  \begin{center}
\includegraphics[width=4.5in]{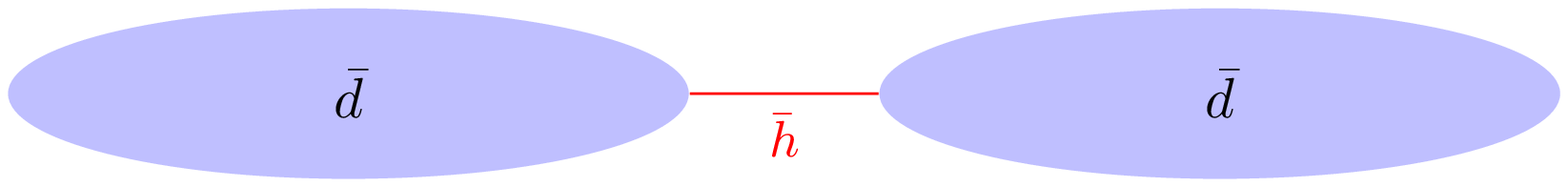}
\end{center}
    \label{barhbond}
    \end{figure}
    
    We note similarities with $\hat{h}$ and often find that the same arguments apply.
    
    \item Next, we show decaying correlations by finding an operator $\tilde{h}$ that is sufficiently close to $\bar{h}$ and satisfies $\left[ \bar{P} \otimes \tilde{h} , \tilde{h} \otimes \bar{P} \right] = 0$.  This step can be a bit subtle, as we find that different forms of $\tilde{h}$ may be appropriate for different flavors of $\bar{h}$.
    
    \item Then, we check the technical conditions associated with lower bounding the gaps $\bar{g}$ and $\tilde{g}$, as required by Theorem \ref{Hgapped}.

    \item Finally, we apply Remark \ref{mathcalHremark} in order to deal with the remnant part of the chain and the boundary conditions, allowing us to conclude that the original Hamiltonian ${\mathcal H}$ is gapped.
    
\end{enumerate}

\section{Teleportation chain Hamiltonian}

\subsection{Description of and motivation for the teleportation chain}
\label{3SS1} 

We now apply this framework to prove that the one-dimensional teleportation chain Hamiltonian \cite{Mizel21} is gapped.   In this definition and in the rest of the paper, we drop the symbol $\otimes$ from tensor products of states when no confusion will arise.
\begin{definition}
Consider a chain consisting of a one-dimensional array of $2\ell$ sites, each inhabited by a space $\mathbb{C}^3$ (a qutrit), with a boundary site inhabited by a space $\mathbb{C}^{2}$ (a qubit) on each end. Label the states of each qutrit $\left\{\ket{0},\ket{1},\ket{\scriptscriptstyle IDLE}\right\}$ and of each qubit by $\left\{\ket{0},\ket{1}\right\}$.
Let the teleportation chain Hamiltonian be comprised of terms of the form
\begin{align*}
{H}^{B}  =  (\ket{0} \otimes \ket{1} )( \bra{0} \otimes \bra{1})+(\ket{1} \otimes \ket{0} )(\bra{1} \otimes \bra{0}) \\
+ \frac{1}{2}(\left|0\right> \otimes \left|0\right>-\left|1\right> \otimes \left|1\right>)(\left<0\right| \otimes \left<0\right|-\left<1\right| \otimes \left<1\right|)
\end{align*}
and terms of the form
\begin{align*}
 {H}^{P}(\theta) =   \sum_{b=0,1} (\left|{\scriptscriptstyle IDLE}\right>\otimes  \left|b\right>)(  \left<{\scriptscriptstyle IDLE}\right| \otimes \left<b\right| ) +( \left|b\right> \otimes  \left|{\scriptscriptstyle IDLE}\right>)(\left<b\right| \otimes \left<{\scriptscriptstyle IDLE}\right| ) + \nonumber\\
\left(\sin \theta  \frac{\left|0\right> \otimes \left|0\right>+\left|1\right> \otimes \left|1\right>}{\sqrt{2}}  -\cos \theta \left|{\scriptscriptstyle IDLE}\right>\otimes \left|{\scriptscriptstyle IDLE}\right> \right) \hspace{1.0in} \\
 \left( \sin \theta \frac{\left<0\right| \otimes \left<0\right|+\left<1\right| \otimes \left<1\right|}{\sqrt{2}}- \cos \theta \left<{\scriptscriptstyle IDLE}\right| \otimes \left<{\scriptscriptstyle IDLE}\right|\right) \nonumber
\end{align*}
alternating.  The ${H}^{P}(\theta)$ terms act between qutrits $2i-1$ and $2 i$, for $i = 1$ to $\ell$, effecting a projection from a Bell state on to $\left|{\scriptscriptstyle IDLE}\right> \otimes \left|{\scriptscriptstyle IDLE}\right> $.  The ${H}^{B}$ terms create Bell pairs between qutrits $2i$ and $2i+1$, for $i = 1$ to $\ell-1$.  At the ends of the chain, there is a term ${H}^{B}$ acting between boundary qubit 0 and qutrit $1$ and a term ${H}^{B}$ acting between qutrit $2\ell$ and boundary qubit $2\ell+1$.  (Strictly speaking, an $H^{B}$ operator acting on a qutrit and a boundary qubit is not the same as an $H^{B}$ operator acting on two qutrits because their domains have different dimensions, but we will abuse notation and ignore this distinction.)  

\begin{figure}[H]
\begin{center}
\includegraphics[width=4.5in]{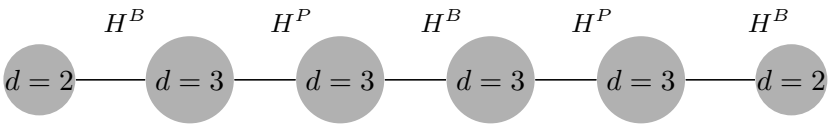}
\end{center}

\caption{Teleportation chain with $2\ell =4$ qutrits and a boundary qubit at each end.}
\label{teleportationchain}
\end{figure}

\end{definition}

Fig. \ref{teleportationchain} shows the chain in the case $2\ell = 4$.  Note that $2\ell$ plays the role that the quantity $\ell$ played in the section 2.

\begin{remark}
To motivate the name ``teleportation'' chain Hamiltonian in this definition, we recall the quantum teleportation circuit depicted in Fig. \ref{teleportationcircuit}.  Its first gate directs the bottom two qubits into the incoming Bell pair state $(\ket{0} \ket{0} + \ket{1} \ket{1})/\sqrt{2}$.  This is followed by a Bell basis measurement.  If the result $(\ket{0} \ket{0} + \ket{1} \ket{1})/\sqrt{2}$ is obtained, the teleportation is successful.  Then, the bottom qubit exits the circuit with the incoming state $\ket{\mu}$.

The teleportation chain Hamiltonian is a parent Hamiltonian whose ground state emulates the quantum teleportation circuit.  Consider a simplified chain of 3 qutrits with Hamiltonian $ \textsf{I}  \otimes {H}^{B}+ {H}^{P}(\theta)\otimes \textsf{I} $ .  Two of its (unnormalized) ground states are ``history'' states 
\[
\cos \theta \ket{b} \frac{\ket{0} \ket{0} + \ket{1} \ket{1}}{\sqrt{2}} + \frac{\sin \theta}{2} \left|{\scriptscriptstyle IDLE}\right>\left|{\scriptscriptstyle IDLE}\right> \ket{b}
\]
for $b \in \mathbb{F}_2$; a history state contains terms corresponding to each time step in the history of the circuit.  The term ${H}^{B}$ drives adjacent spins $2$ and $3$ into the incoming Bell pair state $(\ket{0} \ket{0} + \ket{1} \ket{1})/\sqrt{2}$.   The term ${H}^{P}(\theta)$ augments the amplitude of the desired post-measurement state $(\ket{0} \ket{0} + \ket{1} \ket{1})/\sqrt{2}$ for spins $1$ and $2$.  It does this by driving the desired state into $\left|{\scriptscriptstyle IDLE}\right>\left|{\scriptscriptstyle IDLE}\right>$ with appropriate coefficients.  This has the effect of successfully teleporting the state of spin $1$ into spin $3$ with probability at least $(1/4) \sin^4 \theta$.

\begin{figure}[H]
\begin{center}
\includegraphics[width=4.5in]{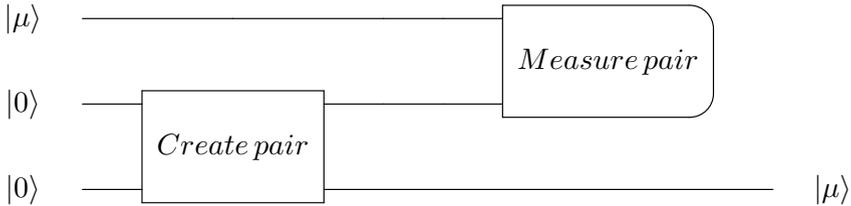}
\end{center}
\caption{Teleportation circuit diagram.  It is assumed that the measurement obtains the Bell state $(\ket{0} \ket{0} + \ket{1} \ket{1})/\sqrt{2}$ so that no Pauli correction is required for the outgoing state $\ket{\mu}$ to match the incoming state.}
\label{teleportationcircuit}
\end{figure}

\end{remark}

Although the terms of the Hamiltonian alternate between ${H}^{B}$ and $ {H}^{P}(\theta)$,  it is possible to put the Hamiltonian into the translationally invariant form (\ref{H}).  
\begin{definition}
Let a composite spin with dimension $d = 9$ be defined as as pair of adjacent qutrits.  Thus, a teleportation chain with $2\ell$ qutrits has $\ell$ composite spins and a boundary qubit at each end.  Set the coupling in (\ref{H}) to
\[
\bar{H} = \textsf{I}  \otimes  {H}^{B} \otimes \textsf{I}+ \textsf{I} \otimes \textsf{I} \otimes \frac{1}{2}{H}^{P}(\theta)+\frac{1}{2}{H}^{P}(\theta) \otimes \textsf{I} \otimes \textsf{I}.
\]
The term $\textsf{I} \otimes \textsf{I}\otimes \frac{1}{2}{H}^{P}$, for example, applies an identity to one composite 9-dimensional object and  $ \frac{1}{2}{H}^{P}$ to its neighboring composite object.    The boundary spins are qubits with dimension $d_0 = d_{\ell+1}  = 2$. Set the boundary terms of the Hamiltonian (\ref{mathcalH}) to
\[
\bar{H}^{0,1} =  \textsf{I}  \otimes {H}^{B}   +  \frac{1}{2}{H}^{P}(\theta) \otimes \textsf{I} 
\]
and
\[
\bar{H}^{\ell,\ell+1} =  {H}^{B} \otimes \textsf{I}  + \textsf{I} \otimes \frac{1}{2}{H}^{P}(\theta) .
\]
With these definitions in place, $\mathcal{H}$ of (\ref{mathcalH}) becomes the teleportation chain Hamiltonian, and we can apply our formalism to a chain of $\ell$ composite spins with a qubit at each end.  
\end{definition}

Direct application of the local gap method \cite{Knabe1988} or the Martingale method \cite{Nachtergaele1996} shows that the Hamiltonian is gapped for a range of $\theta$ but does not cover the entire range $0 \le \theta < \pi/2$.  Using renormalization will establish a gap throughout the entire range.

\subsection{Ground states of the teleportation chain}
\label{3SS2}

Divide the chain, ignoring at first the boundary qubits and boundary Hamiltonian terms, into segments of length ${\bar{\ell}}\ge 2$ composite spins of dimension $9$.  Our first order of business is to compute the ground states of each segment.  

\begin{definition}
Define the map
\begin{align}
M  = & \cos \theta (\ket{0}\bra{0}+\ket{1}\bra{1})\ \otimes  (\ket{0}\bra{0}+\ket{1}\bra{1}) \nonumber \\
 & +\frac{1}{\sqrt{2}} \sin \theta  (\ket{\scriptscriptstyle IDLE} \bra{0} \otimes  \ket{\scriptscriptstyle IDLE} \bra{0} + \ket{\scriptscriptstyle IDLE}  \bra{1}\otimes \ket{\scriptscriptstyle IDLE}  \bra{1})
\end{align}
that acts on 2 adjacent qutrits.  (Note that the second term carries $\left( \frac{\ket{0}\ket{0} +  \ket{1} \ket{1}}{\sqrt{2}} \right)$ to $\ket{\scriptscriptstyle IDLE}\ket{\scriptscriptstyle IDLE}$.)   Set 4 (non-orthogonal) states to
\begin{equation}
\ket{\psi_{a,b}(\bar{\ell})} =  M^{ \otimes {\bar{\ell}}}  \left[ \ket{a} \otimes \left( \frac{\ket{0}\ket{0} +  \ket{1} \ket{1}}{\sqrt{2}} \right)^{\otimes {\bar{\ell}}-1}   \otimes \ket{b}\right] 
 \label{psiab}
\end{equation}
where $a,b \in \mathbb{F}_2$.
\end{definition}

The following technical proposition will prove useful in dealing with these states.  We introduce the notation $Tr_i$ to denote the trace over the state of qutrit $i$ (not the trace over the composite spin $i$, which would be denoted by the trace over two qutrits $Tr_{2i-1,2i}$).

\begin{proposition}
\begin{align*}
& Tr_{2,\dots,2{\bar{\ell}}}  \ket{\psi_{a,b}(\bar{\ell})} \bra{\psi_{a^\prime,b^\prime}(\bar{\ell}) }  = \\
& \hspace{0.25in} \cos^2 \theta  (\alpha + \beta) \delta_{b,b^\prime} \ket{a}\bra{a^\prime} + \sin^2 \theta (\delta_{a,a^\prime} \delta_{b,b^\prime} \alpha/4 + \delta_{a,b}\delta_{a^\prime,b^\prime} \beta/2 ) \ket{\scriptscriptstyle IDLE} \bra{\scriptscriptstyle IDLE}
\end{align*}
where 
\begin{align*}
 \left[ \begin{array}{c} \alpha \\ \beta \end{array}\right]  & = \left[\begin{array}{cc}\cos ^2 \theta + (1/4) \sin ^2 \theta & \cos^2 \theta \\ 0 & (1/4) \sin ^2 \theta \end{array} \right]^{({\bar{\ell}}-1)}\left[\begin{array}{c} 0 \\ 1 \end{array} \right] \\
 & = \left[\begin{array}{c} (\cos ^2 \theta + (1/4) \sin ^2 \theta)^{({\bar{\ell}}-1)} - ((1/4) \sin ^2  \theta)^{({\bar{\ell}}-1)}  \\ ((1/4) \sin ^2  \theta)^{({\bar{\ell}}-1)} \end{array} \right]
\end{align*}
\end{proposition}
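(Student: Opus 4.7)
My plan is induction on $\bar\ell$, driven by a transfer-matrix style recursion for the reduced density matrix
\[
\rho^{(\bar\ell)}_{ab,a'b'} := \mathrm{Tr}_{2,\ldots,2\bar\ell}\bigl[\ket{\psi_{a,b}(\bar\ell)}\bra{\psi_{a',b'}(\bar\ell)}\bigr].
\]
The base case $\bar\ell=1$ is a one-line computation: from the definition of $M$, one has $\ket{\psi_{a,b}(1)} = \cos\theta\,\ket{a}\ket{b} + \tfrac{\sin\theta}{\sqrt2}\,\delta_{a,b}\,\ket{\scriptscriptstyle IDLE}\ket{\scriptscriptstyle IDLE}$, and tracing out the second qutrit kills the cross terms by the orthogonality $\langle\scriptscriptstyle IDLE|0\rangle=\langle\scriptscriptstyle IDLE|1\rangle=0$, yielding the claimed form with $[\alpha_1,\beta_1]^\top = [0,1]^\top$.

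For the inductive step, I would first establish the structural identity
\[
\ket{\psi_{a,b}(\bar\ell+1)} \;=\; \tfrac{1}{\sqrt2}\,M_{(2\bar\ell+1,\,2\bar\ell+2)}\sum_{c\in\{0,1\}}\ket{\psi_{a,c}(\bar\ell)}\otimes\ket{c}_{2\bar\ell+1}\otimes\ket{b}_{2\bar\ell+2},
\]
which follows from writing the newly-introduced Bell pair on qutrits $(2\bar\ell,2\bar\ell+1)$ as $\tfrac{1}{\sqrt2}\sum_c\ket c\ket c$: the $\ket c$ on qutrit $2\bar\ell$ is exactly the right-boundary input that defines $\ket{\psi_{a,c}(\bar\ell)}$. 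Substituting this into $\ket\psi\bra\psi$ and tracing first over qutrits $2,\ldots,2\bar\ell$ brings in $\rho^{(\bar\ell)}_{ac,a'c'}$ by the inductive hypothesis, and tracing next over qutrits $2\bar\ell+1,2\bar\ell+2$ reduces to the elementary calculation
\[
\mathrm{Tr}_{2\bar\ell+1,\,2\bar\ell+2}\!\bigl[M(\ket c\bra{c'}\otimes\ket b\bra{b'})M^\dagger\bigr] \;=\; \cos^2\theta\,\delta_{cc'}\delta_{bb'} + \tfrac{\sin^2\theta}{2}\,\delta_{cb}\delta_{c'b'},
\]
where again IDLE-orthogonality annihilates the cross terms. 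Combining yields the compact recursion
\[
\rho^{(\bar\ell+1)}_{ab,a'b'} \;=\; \frac{\cos^2\theta\,\delta_{bb'}}{2}\sum_{c\in\{0,1\}}\rho^{(\bar\ell)}_{ac,a'c} + \frac{\sin^2\theta}{4}\,\rho^{(\bar\ell)}_{ab,a'b'}.
\]

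Plugging the inductive hypothesis into the right-hand side, the three operator types $\delta_{bb'}\ket a\bra{a'}$, $\delta_{aa'}\delta_{bb'}\ket{\scriptscriptstyle IDLE}\bra{\scriptscriptstyle IDLE}$, and $\delta_{ab}\delta_{a'b'}\ket{\scriptscriptstyle IDLE}\bra{\scriptscriptstyle IDLE}$ decouple cleanly: the sum over $c$ collapses each $\delta_{bb'}$ to a factor of $2$ and each $\delta_{ab}\delta_{a'b'}$ to $\delta_{aa'}$, so matching coefficients gives $\alpha_{\bar\ell+1} = (\cos^2\theta+\tfrac14\sin^2\theta)\,\alpha_{\bar\ell} + \cos^2\theta\,\beta_{\bar\ell}$ and $\beta_{\bar\ell+1} = \tfrac14\sin^2\theta\,\beta_{\bar\ell}$, which is precisely the $2\times2$ matrix recursion stated in the proposition. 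The explicit closed form for $\alpha,\beta$ then follows from the upper-triangular structure. The one place I expect a subtle step is recognizing the structural identity for $\ket{\psi_{a,b}(\bar\ell+1)}$; once the Bell pair has been split to expose a single $M$ acting on the two freshly-added qutrits, everything else is routine Dirac-notation bookkeeping and the form of the transfer matrix is essentially forced.
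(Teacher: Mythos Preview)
Your proof is correct and follows essentially the same inductive transfer-matrix strategy as the paper: split a Bell pair to peel off one composite spin, trace over the two newly exposed qutrits, and read off the same $2\times 2$ recursion for $(\alpha,\beta)$. The only cosmetic differences are that the paper inducts on the two-qutrit reduced operator $Tr_{3,\ldots,2\bar\ell}\ket{\psi}\bra{\psi}$ written in the sandwich form $M\bigl(\ket a\bra{a'}\otimes[\tfrac{\alpha}{2}\delta_{bb'}\textsf{I}+\beta\ket b\bra{b'}]\bigr)M^\dagger$, adds the new pair on the left, and takes one final $Tr_2$ at the end, whereas you induct directly on the one-qutrit quantity in the statement and add on the right; your version is marginally more direct since the inductive hypothesis is the proposition itself.
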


\begin{proof}
The proof is inductive.  We conjecture the form 
\[
Tr_{3,\dots,2{\bar{\ell}}} \ket{\psi_{a,b}(\bar{\ell})} \bra{\psi_{a^\prime,b^\prime}(\bar{\ell}) } = M \ket{a} \bra{a^\prime} \otimes \left[ (\alpha/2) \delta_{b,b^\prime}  \textsf{I}+ \beta \ket{b} \bra{b^\prime}\right] M^\dagger.
\]
for the trace over all the qutrits except the first 2 in the segment.  For the case ${\bar{\ell}} = 1$, where we take no trace, this form is obviously correct with $\alpha = 0$ and $\beta = 1$.  Assuming the form is correct for ${\bar{\ell}}$, we obtain the result for ${\bar{\ell}}+1$ by computing
\begin{align*}
Tr_{3,4} M \otimes M \bigg( \ket{a} \bra{a^\prime} \otimes & \frac{\ket{0} \ket{0} +  \ket{1} \ket{1}}{\sqrt{2}} \frac{\bra{0}  \bra{0} +  \bra{1}  \bra{1}}{\sqrt{2}}  \\
\otimes & \left[ (\alpha/2) \delta_{b,b^\prime}  \textsf{I}+ \beta \ket{b} \bra{b^\prime}\right] \bigg) M^\dagger \otimes  M^\dagger.
\end{align*}
We find that the trace returns us to our conjectured form with $\alpha \rightarrow (\cos ^2 \theta + (1/4) \sin ^2 \theta) \alpha + \cos^2 \theta \beta$ and $\beta \rightarrow (1/4) \sin ^2 \theta \beta$ as in the matrix equation stated in the proposition.  Taking $Tr_2$ proves the proposition.
\end{proof}

\begin{remark}
Note that only the final term, proportional to $\beta \delta_{a,b}\delta_{a^\prime,b^\prime}$, describes any correlations between ends of the segment.  This term decreases exponentially with ${\bar{\ell}}$ since $\beta$ decreases exponentially with ${\bar{\ell}}$.
\end{remark}

From this proposition, we can take $Tr_1$ and conclude that our 4 states $\ket{\psi_{a,b}}$ are not orthogonal.   A set of orthonormal ground states is given by the following lemma.

\begin{lemma}

For $0 \le \theta < \pi/2$, the 4 orthonormal states 
\[
 |\Psi_0(\bar{\ell})\rangle, |\Psi_1(\bar{\ell})\rangle, |\Psi_2(\bar{\ell})\rangle,|\Psi_3(\bar{\ell})\rangle\}
\]
span the space of zero energy eigenstates of a segment of length ${\bar{\ell}} \ge 2$ composite spins, where
\begin{align} \nonumber
|\Psi_0(\bar{\ell})\rangle &= \mathcal{N}_0(\bar{\ell}) (\ket{\psi_{0,0}(\bar{\ell})}+\ket{\psi_{1,1}(\bar{\ell})})/\sqrt{2} \\ \label{Psi} 
|\Psi_1(\bar{\ell})\rangle &= \mathcal{N}_1(\bar{\ell}) (\ket{\psi_{0,1}(\bar{\ell})}+\ket{\psi_{1,0}(\bar{\ell})})/\sqrt{2}\\ \nonumber
|\Psi_2(\bar{\ell})\rangle &= \mathcal{N}_2(\bar{\ell}) (\ket{\psi_{0,1}(\bar{\ell})}-\ket{\psi_{1,0}(\bar{\ell})})/\sqrt{2}\\ \nonumber
|\Psi_3(\bar{\ell})\rangle &= \mathcal{N}_3(\bar{\ell}) (\ket{\psi_{0,0}(\bar{\ell})}-\ket{\psi_{1,1}(\bar{\ell})})/\sqrt{2},
\end{align}
with normalization constants
\[
\mathcal{N}_0(\bar{\ell}) = \frac{1}{\sqrt{(\cos ^2 \theta + (1/4) \sin ^2 \theta)^{{\bar{\ell}}} +3 ((1/4) \sin ^2  \theta)^{{\bar{\ell}}}} }
\]
and
\[
\mathcal{N}_1(\bar{\ell}) =\mathcal{N}_2(\bar{\ell}) =\mathcal{N}_3(\bar{\ell}) =\mathcal{N} (\bar{\ell})= \frac{1}{\sqrt{(\cos ^2 \theta + (1/4) \sin ^2 \theta)^{{\bar{\ell}}} - ((1/4) \sin ^2  \theta)^{{\bar{\ell}}}}}.
\]
Thus, $\bar{P} = |\Psi_0(\bar{\ell})\rangle\bra{\Psi_0(\bar{\ell})}+|\Psi_1(\bar{\ell})\rangle\bra{\Psi_1(\bar{\ell})}+|\Psi_2(\bar{\ell})\rangle\bra{\Psi_2(\bar{\ell})}+|\Psi_3(\bar{\ell})\rangle\bra{\Psi_3(\bar{\ell})}$ and $\bar{d}=4$.
\label{4states}
\end{lemma}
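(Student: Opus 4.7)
My plan is to split the argument into three parts: verify that each $\ket{\psi_{a,b}(\bar{\ell})}$ is a zero-energy eigenstate of $\bar{H}^S$; use the preceding proposition to compute the inner products and extract the orthonormal basis $\ket{\Psi_i(\bar{\ell})}$ with its normalizations; and, finally, show that the zero-energy space has dimension exactly four.

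For the first part, the segment Hamiltonian $\bar{H}^S$ is built from alternating $H^P(\theta)$ bonds on qutrit pairs $(2i-1,2i)$ and $H^B$ bonds on pairs $(2i,2i+1)$. The key identity, which I would verify by checking $M$ column by column, is $H^P(\theta)\,M=0$: the columns $M\ket{a}\ket{b}$ for $a\ne b$ equal $\cos\theta\,\ket{a}\ket{b}$ and lie trivially in $\ker H^P$, while the columns $M\ket{a}\ket{a}=\cos\theta\,\ket{a}\ket{a}+(\sin\theta/\sqrt{2})\,\ket{\scriptscriptstyle IDLE}\ket{\scriptscriptstyle IDLE}$ are orthogonal both to the bit-$\scriptscriptstyle IDLE$ projectors and to $\sin\theta(\ket{0}\ket{0}+\ket{1}\ket{1})/\sqrt{2}-\cos\theta\,\ket{\scriptscriptstyle IDLE}\ket{\scriptscriptstyle IDLE}$. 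Because $M^{\otimes \bar{\ell}}$ acts at precisely the positions of the $H^P$ bonds, this settles every $H^P$ term. For an $H^B$ bond at a link $(2i,2i+1)$, I would track how the two adjacent $M$'s modify the Bell pair originally placed at that link; the resulting reduced state is supported on $(\ket{0}\ket{0}+\ket{1}\ket{1})/\sqrt{2}$ together with components carrying $\scriptscriptstyle IDLE$ on one of the two qutrits, and therefore lies in $\ker H^B$.

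For the second part, I would apply the preceding proposition and take one further trace over qutrit $1$ to obtain
\[
\langle\psi_{a',b'}(\bar{\ell})|\psi_{a,b}(\bar{\ell})\rangle=\cos^2\theta\,(\alpha+\beta)\,\delta_{a,a'}\delta_{b,b'}+\sin^2\theta\bigl(\delta_{a,a'}\delta_{b,b'}\alpha/4+\delta_{a,b}\delta_{a',b'}\beta/2\bigr).
\]
A short case analysis shows that the only nonvanishing off-diagonal entries are $\langle\psi_{0,0}|\psi_{1,1}\rangle=\langle\psi_{0,1}|\psi_{1,0}\rangle=\sin^2\theta\,\beta/2$, so the $4\times 4$ Gram matrix block-diagonalizes into two $2\times 2$ symmetric/antisymmetric blocks, each diagonalized by exactly the combinations appearing in (\ref{Psi}). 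Substituting the closed forms of $\alpha$ and $\beta$ into the resulting norms then yields the stated $\mathcal{N}_i$; both denominators are strictly positive for $0\le\theta<\pi/2$, so the states are well defined and orthonormal.

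The third part is the main obstacle, and I would prove it by induction on $\bar{\ell}$ following the strategy outlined in Sect.~\ref{2SS4}(a). The base case $\bar{\ell}=2$ reduces to a direct computation on four qutrits with the single bond $\bar{H}$. For the inductive step, frustration-freeness forces any zero-energy state of the length-$(\bar{\ell}+1)$ segment to already be zero-energy on the first $\bar{\ell}$ composite spins, so by the inductive hypothesis it lies in the $36$-dimensional tensor product of the length-$\bar{\ell}$ ground space with $\mathbb{C}^9$. Writing a candidate state as $\sum_{a,b\in\mathbb{F}_2}\ket{\psi_{a,b}(\bar{\ell})}\otimes\ket{\chi_{a,b}}$ and imposing the two remaining bonds --- $H^B$ at $(2\bar{\ell},2\bar{\ell}+1)$ and $H^P$ at $(2\bar{\ell}+1,2\bar{\ell}+2)$ --- I would use the explicit $M$-form of the rightmost qutrit of each $\ket{\psi_{a,b}(\bar{\ell})}$ to reduce everything to a small linear system on the $\ket{\chi_{a,b}}$. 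The lower bound $\bar{d}\ge 4$ is automatic from the solutions $\ket{\psi_{a,b}(\bar{\ell}+1)}$ already produced in the first part; the matching upper bound is the real content and is where I expect the work to concentrate. I expect it to follow from the same mechanism that powered the first part: $H^B$ forces the new link state to sit in the ``$\scriptscriptstyle IDLE$-bearing or $(\ket{0}\ket{0}+\ket{1}\ket{1})/\sqrt{2}$'' sector, while $H^P$ forces the $(2\bar{\ell}+1,2\bar{\ell}+2)$ state to be of $M$-image form, and together these two constraints are exactly what the $M^{\otimes(\bar{\ell}+1)}$ construction encodes.
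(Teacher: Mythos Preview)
Your overall three-part strategy matches the paper's proof, but two issues deserve comment.

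First, a small computational slip: in the Gram matrix, the overlap $\langle\psi_{0,1}|\psi_{1,0}\rangle$ actually vanishes, since the $\beta/2$ term in the proposition carries the factor $\delta_{a,b}\delta_{a',b'}$, which requires $a=b$ and $a'=b'$. Only $\langle\psi_{0,0}|\psi_{1,1}\rangle$ is a nonzero off-diagonal. This does not change your conclusion---the $(0,1)/(1,0)$ block is already diagonal, so any orthonormal combination works---but it is why $\mathcal{N}_1=\mathcal{N}_2$ exactly rather than merely numerically.

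Second, and more substantively, your inductive step for the upper bound $\bar d\le 4$ is still soft. The paper makes this step concrete and uniform in $\bar{\ell}$ as follows: rather than working in your $36$-dimensional space, it first imposes the $H^P$ bond on the new composite spin to cut it to a four-dimensional basis, and then computes the single remaining $H^B$ link as a $16\times16$ matrix in the product basis $\{\ket{\phi_i}\}\otimes\{\ket{\Psi_j(\bar{\ell})}\}$. The key observation is that this matrix always has the form $D\,K\,D$, where $K$ is a fixed, $\theta$-independent $16\times16$ matrix (the Bell-basis computation at $\theta=0$) and $D$ is diagonal and invertible whenever $\cos\theta\ne 0$. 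One diagonalization of $K$ shows $\dim\ker K=4$, and the flanking invertibles preserve this count for every $\bar{\ell}$ and every $0\le\theta<\pi/2$. Your plan of reducing to a linear system on the $\ket{\chi_{a,b}}$ can certainly be made to work, but without isolating this $D\,K\,D$ structure you would have to carry out a $\theta$- and $\bar{\ell}$-dependent rank computation at each step, and your final paragraph currently leaves that computation as an expectation rather than an argument.
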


\begin{remark}
Compared to $\mathcal{N}(\bar{\ell})$, the normalization constant $\mathcal{N}_0(\bar{\ell})$ includes an additional contribution $4 ((1/4) \sin ^2  \theta)^{{\bar{\ell}}}$ because $(\ket{\psi_{0,0}(\bar{\ell})}+\ket{\psi_{1,1}(\bar{\ell})})/\sqrt{2} $ includes an extra term $2 ((1/2) \sin \theta)^{{\bar{\ell}}} \ket{\scriptscriptstyle IDLE} \otimes \dots \otimes \ket{\scriptscriptstyle IDLE}$.  
\end{remark}

\begin{proof}
To confirm that the states (\ref{psiab}) are zero energy eigenstates of the Hamiltonian, note that $ {H}^{P}(\theta) M = 0$.  Similarly, note that 
the Bell pairs in parentheses in (\ref{psiab}) ensure that the state is annihilated by all the ${H}^{B}$ terms in $H^S$.  Thus, the states (\ref{psiab}) are annihilated by the Hamiltonian, so the linear combinations (\ref{Psi}) are as well.  We confirm that the $\ket{\Psi_z}$ are orthonormal as follows.  Taking $Tr_1$ of the formula in the previous proposition yields the following matrix of overlaps $\left\langle \psi_{a^\prime,b^\prime} (\bar{\ell})| \psi_{a,b}(\bar{\ell}) \right\rangle$
\[
\left[\begin{array}{cccc}
\alpha + 2\beta & 0 & 0 & 2 \beta \\
0 & \alpha  & 0 & 0 \\
0 & 0 & \alpha & 0 \\
 2 \beta  & 0 & 0 & \alpha + 2\beta
\end{array} \right]
\]
where $\alpha = (\cos ^2 \theta + (1/4) \sin ^2 \theta)^{{\bar{\ell}}} - ((1/4) \sin ^2  \theta)^{{\bar{\ell}}}$ and $\beta = ((1/4) \sin ^2  \theta)^{{\bar{\ell}}}$.
The overlap matrix has eigenvectors 
\[
\frac{\mathcal{N}_0(\bar{\ell})}{\sqrt{2}}\left[\begin{array}{c}1 \\ 0 \\0 \\ 1\end{array}\right],\, \frac{\mathcal{N}(\bar{\ell})}{\sqrt{2}}\left[\begin{array}{c}0 \\ 1 \\1 \\ 0\end{array}\right],\, \frac{\mathcal{N}(\bar{\ell})}{\sqrt{2}}\left[\begin{array}{c}0 \\ 1 \\-1 \\ 0\end{array}\right],\, \frac{\mathcal{N}(\bar{\ell})}{\sqrt{2}}\left[\begin{array}{c}1 \\ 0 \\0 \\ -1\end{array}\right]
\]
and eigenvalues  $1/\mathcal{N}_0^2(\bar{\ell})$, $1/\mathcal{N}^2(\bar{\ell})$, $1/\mathcal{N}^2(\bar{\ell})$, and $1/\mathcal{N}^2(\bar{\ell})$ respectively.   The states (\ref{Psi}) are defined according to these eigenvectors.  The matrix element of the overlap matrix between two distinct eigenvectors vanishes,  so the $\ket{\Psi_z(\bar{\ell})}$ are orthogonal.  The expectation value of the overlap matrix is 1 for every eigenvector, so the $\ket{\Psi_z(\bar{\ell})}$ are normalized.

To show that these states span the space of zero energy eigenstates, we make an inductive argument.  Start with a segment of length ${\bar{\ell}} = 2$ composite spins, which is $4$ qutrits.   We want to show that $\bar{H}$ has 4 zero energy eigenstates.  Consider the left 2 qutrits.  Their state must be annihilated by $(1/2){H}^{P}(\theta)$.  Of their $3^2=9$ basis states, $2+2=4$ are raised in energy by the first 2 sums in ${H}^{P}(\theta)$.  This leaves $9-4=5$ basis states: the 4 Bell basis states and $\ket{\scriptscriptstyle IDLE} \ket{\scriptscriptstyle IDLE}$.  The final projection in ${H}^{P}(\theta)$ forces $\frac{\ket{0} \ket{0}+\ket{1} \ket{1}}{\sqrt{2}}$ and $\ket{\scriptscriptstyle IDLE} \ket{\scriptscriptstyle IDLE}$ to occur in the combination $\cos \theta  \frac{\ket{0} \ket{0}+\ket{1} \ket{1}}{\sqrt{2}} +\sin \theta \ket{{\scriptscriptstyle IDLE}} \ket{{\scriptscriptstyle IDLE}}$, leaving 4 allowed states
\begin{align} \nonumber
\left\{\cos \theta  \frac{\ket{0} \ket{0}+\ket{1} \ket{1}}{\sqrt{2}} +\sin \theta \ket{{\scriptscriptstyle IDLE}} \ket{{\scriptscriptstyle IDLE}},  \right. &\frac{\ket{0} \ket{1}+\ket{1} \ket{0}}{\sqrt{2}},  \\ 
  \frac{\ket{0} \ket{1}-\ket{1} \ket{0}}{\sqrt{2}}, & \left. \frac{\ket{0} \ket{0}-\ket{1} \ket{1}}{\sqrt{2}}\right\}. \label{basis4}
\end{align}
By the same argument, the right $4$ qutrits have the same 4 allowed states.
Thus, we need to consider a total of $4^2 = 16$ basis states.  We compute the form of $\textsf{I} \otimes {H}^{B} \otimes \textsf{I}$ in this basis.  At $\theta = 0$, our basis reduces to 
\begin{equation}
\left\{  \frac{\ket{0}   \ket{0} +  \ket{1}   \ket{1}}{\sqrt{2}} , \frac{\ket{0} \ket{1}+\ket{1} \ket{0}}{\sqrt{2}},  \frac{\ket{0} \ket{1}-\ket{1} \ket{0}}{\sqrt{2}}, \frac{\ket{0}   \ket{0} -  \ket{1}   \ket{1}}{\sqrt{2}}\right\}^{\otimes 2}
\label{Bellsquared}
\end{equation}
and $\textsf{I} \otimes {H}^{B} \otimes \textsf{I}$ becomes
{\setlength{\extrarowheight}{2pt}
\begin{align}
K = \left[
\begin{array}{*{16}{R{1.5em}}}
 \frac{3}{4} & 0 & 0 & 0 & 0 & -\frac{1}{4} & 0 & 0 & 0 & 0 & \frac{1}{4} & 0 & 0 & 0 & 0 & -\frac{1}{4} \\
 0 & \frac{3}{4} & 0 & 0 & -\frac{1}{4} & 0 & 0 & 0 & 0 & 0 & 0 & \frac{1}{4} & 0 & 0 & -\frac{1}{4} & 0 \\
 0 & 0 & \frac{3}{4} & 0 & 0 & 0 & 0 & \frac{1}{4} & -\frac{1}{4} & 0 & 0 & 0 & 0 & -\frac{1}{4} & 0 & 0 \\
 0 & 0 & 0 & \frac{3}{4} & 0 & 0 & \frac{1}{4} & 0 & 0 & -\frac{1}{4} & 0 & 0 & -\frac{1}{4} & 0 & 0 & 0 \\
 0 & -\frac{1}{4} & 0 & 0 & \frac{3}{4} & 0 & 0 & 0 & 0 & 0 & 0 & \frac{1}{4} & 0 & 0 & -\frac{1}{4} & 0 \\
 -\frac{1}{4} & 0 & 0 & 0 & 0 & \frac{3}{4} & 0 & 0 & 0 & 0 & \frac{1}{4} & 0 & 0 & 0 & 0 & -\frac{1}{4} \\
 0 & 0 & 0 & \frac{1}{4} & 0 & 0 & \frac{3}{4} & 0 & 0 & \frac{1}{4} & 0 & 0 & \frac{1}{4} & 0 & 0 & 0 \\
 0 & 0 & \frac{1}{4} & 0 & 0 & 0 & 0 & \frac{3}{4} & \frac{1}{4} & 0 & 0 & 0 & 0 & \frac{1}{4} & 0 & 0\\
  0 & 0 &-\frac{1}{4}& 0 & 0 & 0 & 0 & \frac{1}{4} & \frac{3}{4} & 0 & 0 & 0 & 0 & -\frac{1}{4} & 0 & 0 \\
 0 & 0 & 0 & -\frac{1}{4} & 0 & 0 & \frac{1}{4} & 0 & 0 & \frac{3}{4} & 0 & 0 & -\frac{1}{4} & 0 & 0 & 0 \\
 \frac{1}{4} & 0 & 0 & 0 & 0 & \frac{1}{4} & 0 & 0 & 0 & 0 & \frac{3}{4} & 0 & 0 & 0 & 0 & \frac{1}{4} \\
 0 & \frac{1}{4} & 0 & 0 & \frac{1}{4} & 0 & 0 & 0 & 0 & 0 & 0 & \frac{3}{4} & 0 & 0 & \frac{1}{4} & 0 \\
 0 & 0 & 0 & -\frac{1}{4} & 0 & 0 & \frac{1}{4} & 0 & 0 & -\frac{1}{4} & 0 & 0 & \frac{3}{4} & 0 & 0 & 0 \\
 0 & 0 & -\frac{1}{4} & 0 & 0 & 0 & 0 & \frac{1}{4} & -\frac{1}{4} & 0 & 0 & 0 & 0 & \frac{3}{4} & 0 & 0 \\
 0 & -\frac{1}{4} & 0 & 0 & -\frac{1}{4} & 0 & 0 & 0 & 0 & 0 & 0 & \frac{1}{4} & 0 & 0 & \frac{3}{4} & 0 \\
 -\frac{1}{4} & 0 & 0 & 0 & 0 & -\frac{1}{4} & 0 & 0 & 0 & 0 & \frac{1}{4} & 0 & 0 & 0 & 0 & \frac{3}{4} 
\end{array}
\right]
\label{K}
\end{align}
}

Away from $\theta = 0$, we require some adjustments because of normalization: there is an extra factor of $\cos \theta$ in (\ref{basis4}) compared to (\ref{Bellsquared}).  We obtain
\begin{align}
\left[\begin{array}{cccc}
\cos \theta & 0 & 0 & 0 \\ 
0 & 1 & 0 & 0 \\ 
0 & 0 & 1 & 0  \\ 
0 & 0 & 0 & 1 \end{array} \right] ^{\otimes 2} 
K
\left[\begin{array}{cccc}
\cos \theta & 0 & 0 & 0 \\ 
0 & 1 & 0 & 0 \\ 
0 & 0 & 1 & 0  \\ 
0 & 0 & 0 &1 \end{array} \right] ^{\otimes 2}
\label{cosK}
\end{align}
by computing the matrix elements 
\begin{align*}
&\bra{\phi_i} \bra{\phi_j} \textsf{I}  \otimes {H}^{B}  \otimes \textsf{I}  \ket{\phi_{i^\prime}} \ket{\phi_{j^\prime}} \\
& = Tr_{2,3} {H}^{B}  \Big(Tr_1 \ket{\phi_{i^\prime}} \bra{\phi_i} \Big)\Big(Tr_4 \ket{\phi_{j^\prime}} \bra{\phi_j} \Big).
\end{align*}
where $\ket{\phi_{i}}$ denotes an element of  (\ref{basis4}).  

As long as $\cos \theta \ne 0$, the dimension of the kernel of this matrix equals that of $K$.  Diagonalizing $K$, one finds that its kernel has dimension 4.  Thus, a segment of length ${\bar{\ell}}=2$ has a 4 dimensional space of zero energy eigenstates, which must be spanned by our 4 states.  (Roughly speaking, each of the 3 terms in ${H}^{B}$ pushes up the energy of 4 states when we consider $\textsf{I} \otimes {H}^{B} \otimes \textsf{I}$.  This leaves us with $16-3 \times 4 = 4$ zero energy eigenstates (\ref{Psi}).)

Assuming the inductive hypothesis for ${\bar{\ell}}$ composite spins, we now show it for ${\bar{\ell}}+1$ composite spins.  When we add $2$ new qutrits (1 composite spin) to the segment, they are constrained to lie in a state $ \ket{\phi_i}$ belonging to the 4 dimensional basis (\ref{basis4}).  The segment of length ${\bar{\ell}}$ has 4 zero energy states $\{ |\Psi_0(\bar{\ell})\rangle,|\Psi_1(\bar{\ell})\rangle,|\Psi_2(\bar{\ell})\rangle,|\Psi_3(\bar{\ell})\rangle \}$.  Thus, we have a $4^2 = 16$ dimensional basis of states.  The Hamiltonian $\textsf{I} \otimes {H}^{B} \otimes \textsf{I}$ between the new qutrits and the length ${\bar{\ell}}$ segment is computed  using
\begin{align*}
&\bra{\phi_i} \langle\Psi_j (\bar{\ell})| \textsf{I}  \otimes {H}^{B}  \otimes \textsf{I}  \ket{\phi_{i^\prime}}  \ket{\Psi_{j^\prime}(\bar{\ell})} \nonumber \\
& = Tr_{2,3} {H}^{B}  \Big(Tr_1 \ket{\phi_{i^\prime}} \bra{\phi_i} \Big) \Big( Tr_{4,\dots,2 \ell + 2}   \ket{\Psi_{j^\prime}(\bar{\ell})}\langle\Psi_j(\bar{\ell}) | \Big)
\end{align*}
and inserting definition (\ref{Psi}) into proposition 3.  The result takes the form
\begin{align}
& \cos^2 \theta \left(\cos ^2\theta+\frac{\sin ^2\theta}{4}\right)^{{\bar{\ell}}-1} \label{inductiveK} \\
&\hspace{0.25in} \left( \left[\begin{array}{cccc}
\cos \theta & 0 & 0 & 0 \\ 
0 & 1 & 0 & 0 \\ 
0 & 0 & 1 & 0  \\ 
0 & 0 & 0 & 1 \end{array} \right]   \otimes
\left[\begin{array}{cccc}
\mathcal{N}_0(\bar{\ell}) & 0 & 0 & 0 \\ 
0 & \mathcal{N} (\bar{\ell})& 0 & 0 \\ 
0 & 0 & \mathcal{N} (\bar{\ell})& 0  \\ 
0 & 0 & 0 &\mathcal{N}(\bar{\ell}) \end{array} \right]  \right) K  \nonumber \\
&
\hspace{0.75in} \left(\left[\begin{array}{cccc}
\cos \theta & 0 & 0 & 0 \\ 
0 & 1 & 0 & 0 \\ 
0 & 0 & 1 & 0  \\ 
0 & 0 & 0 & 1 \end{array} \right]  \otimes
\left[\begin{array}{cccc}
\mathcal{N}_0(\bar{\ell}) & 0 & 0 & 0 \\ 
0 & \mathcal{N}(\bar{\ell}) & 0 & 0 \\ 
0 & 0 & \mathcal{N}(\bar{\ell}) & 0  \\ 
0 & 0 & 0 &\mathcal{N}(\bar{\ell}) \end{array} \right]\right).
\nonumber
\end{align}
Again, because the dimension of the kernel of $K$ is 4, the dimension of the kernel of (\ref{inductiveK}) is 4 provided $\cos \theta \ne 0$.
\end{proof}

Now that we have defined $\bar{P}$, we can compute the renormalized coupling $\bar{h}$ and confirm that the chain satisfies Def. \ref{decayingcorrelations}.

\subsection{The renormalized teleportation chain and decaying correlation}
\label{3SS3}

\begin{lemma}
For $0 \le \theta < \pi/2$, the ground states of the teleportation chain exhibit decaying correlations.
\label{decayingcorrelationsteleportationchain}
\end{lemma}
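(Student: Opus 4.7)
The plan is to verify Def. \ref{decayingcorrelations} by exhibiting a decomposition $\bar{h}=\tilde{h}+\bar{k}$ whose correction term $\bar{k}$ is driven by the exponentially-decaying parameter $\beta=((1/4)\sin^2\theta)^{\bar{\ell}-1}$ computed in the preceding proposition. The structural input is that, in the reduced density matrix of a segment's ground state on one of its endpoint qutrits, every contribution except the $\beta$-term has the form ``nontrivial action at that end'' $\times$ ``$\delta$ on the opposite-end labels'', so only the $\beta$-term can create genuine left--right correlation across a segment. For $\theta<\pi/2$ one has $\beta$ exponentially suppressed relative to $\alpha=(\cos^2\theta+(1/4)\sin^2\theta)^{\bar{\ell}-1}-\beta$, which will control $\|\bar{k}\|_2$.

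I would proceed as follows. First, note that in $\bar{H}$ the two single-composite-spin pieces $\tfrac{1}{2}H^{P}(\theta)\otimes\textsf{I}\otimes\textsf{I}$ and $\textsf{I}\otimes\textsf{I}\otimes\tfrac{1}{2}H^{P}(\theta)$ contribute to $\bar{h}$ only operators supported within a single segment; these are absorbed into $\tilde{h}$ with no effect on the inter-segment commutation structure. The only genuinely inter-segment term is $\textsf{I}\otimes H^{B}\otimes\textsf{I}$, coupling the rightmost qutrit $R$ of segment 1 to the leftmost qutrit $L$ of segment 2, whose matrix elements in the $|\Psi_z(\bar{\ell})\rangle\otimes|\Psi_{z'}(\bar{\ell})\rangle$ basis reduce to $\operatorname{Tr}[H^{B}(\rho^{R}_{w,z}\otimes\rho^{L}_{w',z'})]$. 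Applying the preceding proposition to $\rho^{R}$ and its mirror version (with $a\leftrightarrow b$) to $\rho^{L}$, I would split each as $\rho=\rho_{\text{clean}}+\rho_{\text{corr}}$, with ``clean'' collecting the $\alpha$-terms and ``corr'' the single $\beta$-term. Defining $\tilde{h}$ via $\operatorname{Tr}[H^{B}(\rho^{R}_{\text{clean}}\otimes\rho^{L}_{\text{clean}})]$ (plus the intra-segment pieces) and $\bar{k}$ as the rest, the bounds from Lemma \ref{4states} on the normalization constants $\mathcal{N},\mathcal{N}_0$ give $\|\bar{k}\|_2\le C\bigl((1/4)\sin^2\theta/(\cos^2\theta+(1/4)\sin^2\theta)\bigr)^{\bar{\ell}}\to 0$ exponentially in $\bar{\ell}$, so for any $\Delta>0$ one picks $\bar{\ell}$ large enough.

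Then I would verify $[\bar{P}\otimes\tilde{h},\tilde{h}\otimes\bar{P}]=0$. In $\rho^{R}_{\text{clean}}$ the nontrivial operator content sits on the right-boundary labels $(b,b')$ while the left-boundary labels are locked by $\delta_{a,a'}$; symmetrically for $\rho^{L}_{\text{clean}}$. Consequently $\tilde{h}\otimes\bar{P}$ and $\bar{P}\otimes\tilde{h}$ act on disjoint label pairs of the shared middle segment and manifestly commute (the single-segment local pieces in both operators act on that middle segment in ways that coincide and commute with themselves). Positivity $\tilde{h}\ge 0$ required by Def. \ref{decayingcorrelations} follows from recognizing $\tilde{h}$ as the limiting ($\beta\to 0$) form of the positive operator $\bar{h}$, with any residual indefiniteness absorbed into $\bar{k}$ via a small positive regularizer that preserves its exponential smallness. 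I expect the main obstacle to be bookkeeping: expressing $\tilde{h}$ in the orthonormal $|\Psi_z(\bar{\ell})\rangle$ basis while tracking the fact that $\mathcal{N}_0$ and $\mathcal{N}$ differ by $\mathcal{O}(\beta)$, and confirming that this normalization mismatch feeds only into $\bar{k}$ rather than spoiling either the commutation identity or the smallness bound.
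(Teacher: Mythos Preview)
Your proposal is correct and follows essentially the same route as the paper: both isolate the exponentially small $\beta$-dependent piece of the edge reduced density matrix (equivalently, the $\mathcal{N}_0\neq\mathcal{N}$ normalization mismatch) as $\bar{k}$, take $\tilde{h}$ to be the remaining ``clean'' part, and verify $[\bar{P}\otimes\tilde{h},\tilde{h}\otimes\bar{P}]=0$ via the fact that the clean part acts only on the adjacent boundary labels and is diagonal on the far ones. Two minor points: the $H^{P}$ contributions to $\bar{h}$ actually vanish identically (since $H^{P}M=0$), not merely ``absorb''; and your regularizer hedge for positivity is unnecessary and potentially dangerous---once the normalization mismatch is moved into $\bar{k}$, your $\tilde{h}$ coincides (up to a positive scalar) with the paper's explicit $\tilde{h}=\text{const}\cdot K$, which is manifestly positive semidefinite as the restriction of $\textsf{I}\otimes H^{B}\otimes\textsf{I}$ to a subspace.
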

\begin{proof}
We evaluate $\bar{h}$ and apply Def. \ref{decayingcorrelations}.  The required calculation is familiar from the proof of the previous lemma.  Our basis of states is  $\left\{|\Psi_0 (\bar{\ell})\rangle, |\Psi_1(\bar{\ell})\rangle, |\Psi_2(\bar{\ell})\rangle,|\Psi_3(\bar{\ell})\rangle\right\} \otimes \left\{ |\Psi_0(\bar{\ell})\rangle, |\Psi_1(\bar{\ell})\rangle, |\Psi_2(\bar{\ell})\rangle,|\Psi_3(\bar{\ell})\rangle\right\}$.   At $\theta = 0$, the calculation reduces to computing $\textsf{I} \otimes {H}_{B} \otimes \textsf{I}$ in the Bell basis (\ref{Bellsquared}).
Thus, in this basis,
\[
\bar{h}\vert_{\theta = 0} = \tilde{h}  \vert_{\theta = 0} = K.
\]
(Here, we have abused notation by identifying operators $\bar{h}\vert_{\theta = 0} = \tilde{h}  \vert_{\theta = 0}$ with their matrix representation in the the $\bar{d}^2 = 16$ dimensional basis of ground states.  We continue to abuse notation in this way in the following where no confusion will arise.) 

Away from $\theta = 0$, we need some adjustments because of normalization factors.  We define $\tilde{h}$ to be
\begin{equation}
\tilde{h} = \mathcal{N}^4(\bar{\ell}) \cos^4 \theta \left(\cos ^2\theta+\frac{\sin ^2\theta}{4}\right)^{2{\bar{\ell}}-2} \tilde{h}\vert_{\theta = 0}
\label{tildeh}
\end{equation}

Is it convenient to set  $\mathcal{N}_0(\bar{\ell})\equiv \mathcal{N} (\bar{\ell}) (1-\Delta(\bar{\ell})/4)$ where
\begin{equation}
\frac{\Delta(\bar{\ell})}{4} = 1 - \sqrt{\frac{(\cos ^2 \theta + (1/4) \sin ^2 \theta)^{{\bar{\ell}}} - ((1/4) \sin ^2  \theta)^{{\bar{\ell}}}}{(\cos ^2 \theta + (1/4) \sin ^2 \theta)^{{\bar{\ell}}} +3 ((1/4) \sin ^2  \theta)^{{\bar{\ell}}}}}. \label{barUpsilon}
\end{equation}
Note that $\Delta(\bar{\ell})$ decreases exponentially to zero in ${\bar{\ell}}$ at any fixed $0 < \theta < \pi/2$ and is identically zero at $\theta = 0$.  
We have
\begin{align*}
\bar{h}  = \left[\begin{array}{cccc} 1-\frac{\Delta(\bar{\ell})}{4} & 0 & 0 & 0 \\ 0 & 1 & 0 & 0 \\ 0 & 0 & 1 & 0  \\ 0 & 0 & 0 &1 \end{array} \right] ^{\otimes 2}  \tilde{h}  \left[\begin{array}{cccc} 1-\frac{\Delta(\bar{\ell})}{4} & 0 & 0 & 0 \\ 0 & 1 & 0 & 0 \\ 0 & 0 & 1 & 0  \\ 0 & 0 & 0 &1 \end{array} \right] ^{\otimes 2} \equiv \tilde{h} + {\bar{k}}.
\end{align*}

We can evaluate the norm of ${\bar{k}}$  according to
\begin{align}
&\norm{{\bar{k}}}_2  = \norm{\bar{h}-\tilde{h}} _2= \nonumber\\
&\frac{1}{2} \left\lVert \left( \left[\begin{array}{cccc} 1-\frac{\Delta(\bar{\ell})}{4} & 0 & 0 & 0 \\ 0 & 1 & 0 & 0 \\ 0 & 0 & 1 & 0  \\ 0 & 0 & 0 &1 \end{array} \right] ^{\otimes 2}-\textsf{I}\otimes\textsf{I}\right) \tilde{h}  \left(  \left[\begin{array}{cccc} 1-\frac{\Delta(\bar{\ell})}{4} & 0 & 0 & 0 \\ 0 & 1 & 0 & 0 \\ 0 & 0 & 1 & 0  \\ 0 & 0 & 0 &1 \end{array} \right] ^{\otimes 2}+\textsf{I}\otimes\textsf{I}\right) \nonumber   \right. \\
& \hspace{0.5in} +\left.  \left( \left[\begin{array}{cccc} 1-\frac{\Delta(\bar{\ell})}{4} & 0 & 0 & 0 \\ 0 & 1 & 0 & 0 \\ 0 & 0 & 1 & 0  \\ 0 & 0 & 0 &1 \end{array} \right] ^{\otimes 2}+\textsf{I}\otimes\textsf{I}\right) \tilde{h}  \left(  \left[\begin{array}{cccc} 1-\frac{\Delta(\bar{\ell})}{4} & 0 & 0 & 0 \\ 0 & 1 & 0 & 0 \\ 0 & 0 & 1 & 0  \\ 0 & 0 & 0 &1 \end{array} \right] ^{\otimes 2}-\textsf{I}\otimes\textsf{I}\right) \right\rVert_2\nonumber  \\
& \le  \left\lVert\left( \left[\begin{array}{cccc} 1-\frac{\Delta(\bar{\ell})}{4} & 0 & 0 & 0 \\ 0 & 1 & 0 & 0 \\ 0 & 0 & 1 & 0  \\ 0 & 0 & 0 &1 \end{array} \right] ^{\otimes 2}-\textsf{I}\otimes\textsf{I}\right) \right\rVert_2  \norm{\tilde{h}}_2 \left\lVert\left( \left[\begin{array}{cccc} 1-\frac{\Delta(\bar{\ell})}{4} & 0 & 0 & 0 \\ 0 & 1 & 0 & 0 \\ 0 & 0 & 1 & 0  \\ 0 & 0 & 0 &1 \end{array} \right] ^{\otimes 2}+\textsf{I}\otimes\textsf{I}\right) \right\rVert_2  \nonumber \\
& =2 \left(1-\left(1-\frac{\Delta(\bar{\ell})}{4}\right)^2\right) \norm{\tilde{h}}_2 \le \Delta(\bar{\ell}) \norm{\tilde{h}}_2 \nonumber \\
&= 
\Delta(\bar{\ell})  \mathcal{N}^4(\bar{\ell}) \cos^4 \theta \left(\cos ^2\theta+\frac{\sin ^2\theta}{4}\right)^{2{\bar{\ell}}-2} \le \Delta(\bar{\ell}).\label{ksize} 
\end{align}

We see that, as required by definition \ref{decayingcorrelations}, it is possible to ensure $\norm{{\bar{k}}}_2 < \Delta$ for any $\Delta > 0$ by choosing $\bar{\ell}$ large enough to make $\Delta (\bar{\ell})$ sufficiently small.

Finally, we can confirm that the commutator of $\tilde{h} \otimes \bar{P}$ and $\bar{P} \otimes \tilde{h}$ vanishes.  In the $4^3 = 64$ dimensional basis of zero energy eigenstates of three adjacent segments, we have 
\[
\bar{P} \otimes \tilde{h}  = \mathcal{N}^4 (\bar{\ell})\cos^4 \theta \left(\cos ^2\theta+\frac{\sin ^2\theta}{4}\right)^{2{\bar{\ell}}-2}  \left[\begin{array}{cccc} 1 & 0 & 0 & 0 \\ 0 & 1 & 0 & 0 \\ 0 & 0 & 1 & 0  \\ 0 & 0 & 0 &1 \end{array} \right]  \otimes K.
\]
and
\[
\tilde{h} \otimes \bar{P} = \mathcal{N}^4(\bar{\ell}) \cos^4 \theta \left(\cos ^2\theta+\frac{\sin ^2\theta}{4}\right)^{2{\bar{\ell}}-2}  K \otimes  \left[\begin{array}{cccc} 1 & 0 & 0 & 0 \\ 0 & 1 & 0 & 0 \\ 0 & 0 & 1 & 0  \\ 0 & 0 & 0 &1 \end{array} \right].
\]
We transform each segment from the Bell basis to the standard basis
\[
\left\{\ket{0} \ket{0},\ket{0} \ket{1},\ket{1} \ket{0},\ket{1} \ket{1}\right\}
\]
via a unitary operator $U$ .  Since 
\[
(U \otimes U) K (U^\dagger \otimes U^\dagger) = \left[\begin{array}{cc}1 & 0 \\ 0 & 1 \end{array} \right] \otimes \left[\begin{array}{rrrr} \frac{1}{2} &\,\, 0 &\,\, 0 & -\frac{1}{2} \\ 0 & 1 & 0 & 0 \\ 0 & 0 & 1 & 0  \\ -\frac{1}{2} & 0 & 0 &\frac{1}{2} \end{array} \right] \otimes \left[\begin{array}{cc}1 & 0 \\ 0 & 1 \end{array} \right],
\]
we have
\begin{align*}
U \otimes U \otimes U & \,\,  [\tilde{h} \otimes \bar{P} ,\bar{P} \otimes \tilde{h}] \,\, U^\dagger \otimes U^\dagger \otimes U^\dagger \\
& =  \mathcal{N}^8(\bar{\ell}) \cos^8 \theta \left(\cos ^2\theta+\frac{\sin ^2\theta}{4}\right)^{4{\bar{\ell}}-4}   \times \\
& \left[  \left[\begin{array}{cccc} 1 & 0 & 0 & 0 \\ 0 & 1 & 0 & 0 \\ 0 & 0 & 1 & 0  \\ 0 & 0 & 0 &1 \end{array} \right]  \otimes \left[\begin{array}{cc}1 & 0 \\ 0 & 1 \end{array} \right]\otimes \left[\begin{array}{rrrr} \frac{1}{2} &\,\, 0 &\,\, 0 & -\frac{1}{2} \\ 0 & 1 & 0 & 0 \\ 0 & 0 & 1 & 0  \\ -\frac{1}{2} & 0 & 0 &\frac{1}{2} \end{array} \right] \otimes \left[\begin{array}{cc}1 & 0 \\ 0 & 1 \end{array} \right],\right.\\
&\hspace{0.75in} \left. \left[\begin{array}{cc}1 & 0 \\ 0 & 1 \end{array} \right] \otimes \left[\begin{array}{rrrr} \frac{1}{2} &\,\, 0 &\,\, 0 & -\frac{1}{2} \\ 0 & 1 & 0 & 0 \\ 0 & 0 & 1 & 0  \\ -\frac{1}{2} & 0 & 0 &\frac{1}{2} \end{array} \right] \otimes \left[\begin{array}{cc}1 & 0 \\ 0 & 1 \end{array} \right] \otimes \left[\begin{array}{cccc} 1 & 0 & 0 & 0 \\ 0 & 1 & 0 & 0 \\ 0 & 0 & 1 & 0  \\ 0 & 0 & 0 &1 \end{array} \right]\right]  = 0.
\end{align*}
\end{proof}

\begin{lemma}
The teleportation chain Hamiltonian $H$ with open boundary conditions is gapped for $0 \le \theta < \pi/2$.
\end{lemma}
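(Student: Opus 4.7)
The strategy is to invoke Theorem~\ref{Hgapped}. Since Lemma~\ref{decayingcorrelationsteleportationchain} already establishes decaying correlations, the remaining work is to verify lower bounds on $\bar{g}$ and $\tilde{g}$ that are independent of ${\bar{\ell}}$.

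The key observation will be the spectrum of the matrix $K$ in (\ref{K}). First I would note that $K$ is the compression of the projector $\textsf{I}\otimes H^{B}\otimes\textsf{I}$ to the $16$-dimensional Bell-squared subspace (\ref{Bellsquared}), hence positive semidefinite with $\norm{K}_{2}\le 1$. Reading off $\mathrm{Tr}(K)=12$ directly from (\ref{K}), and recalling from the proof of Lemma~\ref{4states} that $\dim\ker K=4$, pins down the spectrum: the twelve nonzero eigenvalues must all equal $1$, since any strict inequality together with $0\le K\le \textsf{I}$ would push the trace below $12$. Thus $K$ has gap exactly $1$, and $\tilde{h}=c({\bar{\ell}},\theta)K$ from (\ref{tildeh}) has gap exactly $c({\bar{\ell}},\theta)$, where $c({\bar{\ell}},\theta)=\mathcal{N}^{4}({\bar{\ell}})\cos^{4}\theta\,(\cos^{2}\theta+\sin^{2}\theta/4)^{2{\bar{\ell}}-2}$.

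A short asymptotic calculation using the explicit formula for $\mathcal{N}({\bar{\ell}})$ in Lemma~\ref{4states} shows that $c({\bar{\ell}},\theta)\to\cos^{4}\theta/(\cos^{2}\theta+\sin^{2}\theta/4)^{2}>0$ as ${\bar{\ell}}\to\infty$ for every fixed $\theta\in[0,\pi/2)$, so $c({\bar{\ell}},\theta)$ is bounded below by an ${\bar{\ell}}$-independent positive constant once ${\bar{\ell}}$ is large enough. Eigenvalue stability together with the bound $\norm{\bar{k}}_{2}\le\Delta({\bar{\ell}})$ from (\ref{ksize}), where $\Delta({\bar{\ell}})$ decays exponentially, then transfers this lower bound from the gap of $\tilde{h}$ to $\bar{g}$, the gap of $\bar{h}=\tilde{h}+\bar{k}$.

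For $\tilde{g}$, I would exploit the commutativity $[\bar{P}\otimes\tilde{h},\tilde{h}\otimes\bar{P}]=0$ proved at the end of Lemma~\ref{decayingcorrelationsteleportationchain} to simultaneously diagonalize the two summands. Since each has nonzero eigenvalues all equal to $c({\bar{\ell}},\theta)$, the eigenvalues of $\bar{P}\otimes\tilde{h}+\tilde{h}\otimes\bar{P}$ lie in $\{0,\,c({\bar{\ell}},\theta),\,2c({\bar{\ell}},\theta)\}$, so every nonzero eigenvalue is at least $c({\bar{\ell}},\theta)$, supplying the required $\tilde{g}$. With both $\bar{g}$ and $\tilde{g}$ in hand, Theorem~\ref{Hgapped} yields the claimed gap for $H$. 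The main obstacle is identifying the spectrum of $K$; once the trace/kernel/norm argument fixes it to $\{0,1\}$, everything else is bookkeeping with the formulas already developed in Section~\ref{3SS3}.
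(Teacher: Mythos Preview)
Your approach is essentially the paper's, and the trace--kernel--norm trick for nailing the spectrum of $K$ is a tidy substitute for direct diagonalization. The treatment of $\bar{g}$ via eigenvalue stability is exactly what the paper does; note only that you implicitly use $\dim\ker\bar{h}=\dim\ker\tilde{h}=4$ (so that $\bar{g}=\lambda_4(\bar{h})$ and stability compares the right eigenvalues), which follows from $\bar{h}=D\tilde{h}D$ with $D$ invertible, as the paper remarks.

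There is, however, a genuine gap in your $\tilde{g}$ argument. Theorem~\ref{Hgapped} defines $\tilde{g}$ via $\tilde{g}\le\lambda_{z}(\tilde{h}\otimes\bar{P}+\bar{P}\otimes\tilde{h})$, where $z$ is the dimension of the kernel of $\bar{h}\otimes\bar{P}+\bar{P}\otimes\bar{h}$, not of the $\tilde{h}$ sum. Your simultaneous-diagonalization step shows that every \emph{nonzero} eigenvalue of the $\tilde{h}$ sum is at least $c(\bar{\ell},\theta)$, but that alone does not control $\lambda_{z}$: if the $\tilde{h}$ sum happened to have more than $z$ zero eigenvalues, then $\lambda_{z}=0$ and the argument collapses. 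You must verify that the two kernels have the same dimension. The paper does this by observing (via Lemma~\ref{4states} applied to a segment of length $3\bar{\ell}$) that $z=4$, and that $\bar{P}\otimes\tilde{h}+\tilde{h}\otimes\bar{P}$ also has a four-dimensional kernel because $\tilde{h}$ is a scalar multiple of $K$, so this sum coincides (up to a scalar) with the $\theta=0$ case of the $\bar{h}$ sum, which again has kernel dimension $4$ by Lemma~\ref{4states}. Once you add that sentence, the proof goes through and matches the paper's.
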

\begin{proof}
We apply theorem 2.  The previous lemma establishes decaying correlations, so now we need to bound the gap $\bar{g}$ of $\bar{h}$ from below.  The matrix (\ref{tildeh}) for $\tilde{h}$ can be diagonalized directly to identify 4 vanishing eigenvalues and 12 eigenvalues that all equal its gap
$\mathcal{N}^4(\bar{\ell}) \cos^4 \theta \left(\cos ^2\theta+\frac{\sin ^2\theta}{4}\right)^{2{\bar{\ell}}-2}  \ge \cos^4 \theta $.  
Note that $\bar{h}$ equals $\tilde{h}$ flanked by invertible diagonal matrices, so $\tilde{h}$ has the same number of vanishing eigenvalues as $\bar{h}$.  We can therefore use the eigenvalue stability inequality to conclude $\bar{g} \ge \cos^4 \theta - \norm{\bar{k}}_2$. This is greater than, say, $(1/2) \cos^4 \theta$ for sufficiently large ${\bar{\ell}}$ that ensures $\norm{\bar{k}}_2$ is small.

Next, we see that the dimension of the kernel of $\bar{P} \otimes \bar{h} + \bar{h} \otimes \bar{P}$ equals the dimension of the kernel of $\bar{P} \otimes \tilde{h} + \tilde{h} \otimes \bar{P}$.  After all, a segment of length $3 {\bar{\ell}}$ has 4 zero-energy eigenstates by lemma 3, so the dimension of the kernel of $\bar{P} \otimes \bar{h} + \bar{h} \otimes \bar{P}$ is $4$.    This is true for all values of $\theta$, including $\theta = 0$.  Since $\bar{P} \otimes \bar{h} + \bar{h} \otimes \bar{P}$ reduces to $\bar{P} \otimes \tilde{h} + \tilde{h} \otimes \bar{P}$ at $\theta = 0$, it follows that the dimension of the kernel of $\bar{P} \otimes \tilde{h} + \tilde{h} \otimes \bar{P}$ is also 4.  Therefore, $\lambda_4 (\bar{P} \otimes \tilde{h} + \tilde{h} \otimes \bar{P})$ is at least the gap of $\tilde{h}$, which is greater than $\cos^4 \theta$.  The lemma then follows from theorem 2.
\end{proof}

\subsection{Concluding that the teleportation chain is gapped}
\label{3SS4}

We have proven so far that a teleportation chain Hamiltonian of the form (\ref{H}) with open ends has 4 zero-energy ground states and is gapped.  To finish our proof, we now establish that (\ref{mathcalH}) is gapped.

\begin{theorem}
The teleportation chain Hamiltonian ${\mathcal H}$ is gapped for $0 \le \theta < \pi/2$.
\end{theorem}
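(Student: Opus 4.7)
The plan is to invoke Remark \ref{mathcalHremark} and leverage the previously established fact that the open-boundary Hamiltonian $H$ is gapped for $0 \le \theta < \pi/2$. This reduces the theorem to two successive applications of Lemma \ref{ABC}: first to absorb the remnant of length $\bar{\bar{\ell}} < \bar{\ell}$ at the front of the chain, and second to absorb the boundary-qubit couplings $\bar{H}^{0,1}$ and $\bar{H}^{\ell,\ell+1}$.

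For the remnant step, I would set $B = H \otimes \textsf{I}^{\otimes \bar{\bar{\ell}}} + \textsf{I}^{\otimes \lfloor \ell/\bar{\ell}\rfloor \bar{\ell}} \otimes H^R$ and $C = \textsf{I}^{\otimes \lfloor \ell/\bar{\ell}\rfloor \bar{\ell}-1} \otimes \bar{H} \otimes \textsf{I}^{\otimes \bar{\bar{\ell}}-1}$. Because $H^R$ acts on at most $\bar{\ell}-1$ spins, its gap is bounded below by the minimum gap taken over the finite set $0 \le \bar{\bar{\ell}} < \bar{\ell}$. Combined with the open-boundary gap of $H$ from the previous lemma, $B$ is gapped with an $\ell$-independent gap. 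The ground states of $B$ factor as products of the four states $|\Psi_z(\lfloor \ell/\bar{\ell}\rfloor \bar{\ell})\rangle$ of Lemma \ref{4states} on the $H$ side and the analogous zero-energy states of $H^R$ on the remnant side. Substituting the explicit form of these ground states (via Proposition 3 and equation (\ref{Psi})) into $P^B C P^B$ yields a finite-dimensional matrix whose entries depend on $\theta$, $\bar{\ell}$, $\bar{\bar{\ell}}$ only through the normalization factors $\mathcal{N}(\cdot), \mathcal{N}_0(\cdot)$, all of which remain bounded away from zero and infinity once $\bar{\ell}$ is fixed. Diagonalizing this matrix produces a positive $c_1$ independent of $\ell$, and Lemma \ref{ABC} then delivers a uniform gap for $A = B + C$.

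For the boundary step, I would set $B$ equal to the $A$ from the previous step and $C = \bar{H}^{\ell,\ell+1} \otimes \textsf{I}^{\otimes \ell} + \textsf{I}^{\otimes \ell} \otimes \bar{H}^{0,1}$. The ground space of $B$ is spanned by the four $|\Psi_z\rangle$-type states of the now-extended open chain tensored with the two basis states of each boundary qubit, a $16$-dimensional space. The matrix $P^B C P^B$ on this space is again explicitly computable using Proposition 3, and frustration-freeness of $\mathcal{H}$ ensures $c_0 = 0$. The key claim is that the boundary $H^B$ projectors, which enforce Bell-pair correlations between each boundary qubit and its neighboring qutrit, pin down the ground-space combinatorics and push the remaining $15$ eigenvalues of $P^B C P^B$ above some $\ell$-independent $c_1 > 0$. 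One more application of Lemma \ref{ABC} then shows $\mathcal{H}$ is gapped.

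The principal technical obstacle is the explicit computation of $c_1$ in each of the two applications: one must enumerate the action of $\bar{H}$ (respectively $\bar{H}^{0,1}$ and $\bar{H}^{\ell,\ell+1}$) on the ground-state basis inherited from Lemma \ref{4states} and verify that the resulting finite-dimensional matrices have a strictly positive second-smallest eigenvalue, bounded below uniformly in $\ell$. No new conceptual ideas beyond those already used in Lemma \ref{decayingcorrelationsteleportationchain} are required — the calculations are mechanical because the dimensions involved are fixed (determined by $\bar{\ell}$, which is held constant after being chosen large enough for decaying correlations) and because Proposition 3 gives closed-form expressions for every relevant overlap. Completing these two computations, each followed by Lemma \ref{ABC}, establishes that ${\mathcal H}$ is gapped for all $0 \le \theta < \pi/2$.
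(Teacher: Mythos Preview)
Your plan is correct and follows essentially the same two-stage application of Lemma~\ref{ABC} that the paper carries out: first absorbing the remnant via $B=H\otimes\textsf{I}^{\otimes\bar{\bar{\ell}}}+\textsf{I}^{\otimes\lfloor\ell/\bar{\ell}\rfloor\bar{\ell}}\otimes H^R$ and $C=\textsf{I}^{\otimes(\lfloor\ell/\bar{\ell}\rfloor\bar{\ell}-1)}\otimes\bar{H}\otimes\textsf{I}^{\otimes\bar{\bar{\ell}}-1}$, then the boundary qubits via $C=\bar{H}^{\ell,\ell+1}\otimes\textsf{I}^{\otimes\ell}+\textsf{I}^{\otimes\ell}\otimes\bar{H}^{0,1}$. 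The paper simply executes the explicit $16\times16$ computations you flag as the remaining obstacle: in both stages $P^BCP^B$ turns out to be a diagonal conjugate of the matrix $K$ (or a close relative at the boundary), and eigenvalue stability against the $\Delta(\cdot)$-sized perturbation gives $c_1\ge(1/2)\cos^4\theta$ uniformly in $\ell$.
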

\begin{proof}
To show that ${\mathcal H}$ is gapped, we use expression (\ref{othermathcalH}) and employ the strategy of remark \ref{mathcalHremark}.  We set $B = H \otimes \textsf{I}^{\otimes \bar{\bar{\ell}}} +\textsf{I}^{\otimes \lfloor \ell/{\bar{\ell}} \rfloor \bar{\ell}} \otimes  H^R$ and $C =   \textsf{I}^{\otimes (\lfloor \ell/{\bar{\ell}} \rfloor \bar{\ell}-1) } \otimes \bar{H} \otimes \textsf{I}^{\otimes \bar{\bar{\ell}}-1}$ and apply lemma \ref{ABC}.  The previous lemma has shown that $H$ is gapped.  Suppose the lower bound of its gap is $g > 0$.  If the gap of $H^R$ as a function of its length $\bar{\bar{\ell}}$ is $g^R(\bar{\bar{\ell}}) > 0$, then the gap of $B$ is bounded below by $\min(g, \min_{0 \le \bar{\bar{\ell}} < \bar{\ell}} g^R(\bar{\bar{\ell}})) > 0$.  Thus, $B$ is gapped.  Note that $\norm{C}_2 \le 2$.  To apply lemma \ref{ABC}, our remaining task is to evaluate $C$ in the basis of ground states of $B$ and show that the result is gapped; then lemma \ref{ABC} implies that $A = B+C$ is gapped.

Using lemma \ref{4states}, we see that the basis of ground states of $B$ is 
\[
\left\{ |\Psi_0(\breve{\ell})\rangle ,|\Psi_1(\breve{\ell})\rangle ,|\Psi_2(\breve{\ell})\rangle ,|\Psi_3(\breve{\ell})\rangle \right\} \otimes \left\{ |\Psi_0(\bar{\bar{\ell}})\rangle,|\Psi_1(\bar{\bar{\ell}})\rangle,|\Psi_2(\bar{\bar{\ell}})\rangle,|\Psi_3(\bar{\bar{\ell}})\rangle \right\}.
\]
where $\breve{\ell} = \lfloor \ell/{\bar{\ell}} \rfloor \bar{\ell}$.  Evaluating $C$ in this basis proceeds like the proof of lemma \ref{decayingcorrelationsteleportationchain}.  The matrix elements of $C$ are
\begin{align*}
&\langle \Psi_i (\breve{\ell}) | \langle \Psi_j(\bar{\bar{\ell}})| \textsf{I}^{\otimes \breve{\ell}-1 } \otimes \bar{H} \otimes \textsf{I}^{\otimes \bar{\bar{\ell}}-1}   |\Psi_{i^\prime} (\breve{\ell})\rangle |\Psi_{j^\prime}(\bar{\bar{\ell}})\rangle \nonumber \\
& = Tr_{2 \bar{\bar{\ell}}-1,...,2\bar{\bar{\ell}}+2} \bar{H}   \Big( Tr_{2\bar{\bar{\ell}}+3,\dots,2\bar{\bar{\ell}} + 2 \breve{\ell} }   |\Psi_{i^\prime} (\breve{\ell}) \rangle \langle \Psi_i (\breve{\ell}) |  \Big)   \Big(Tr_{1,..., 2 \bar{\bar{\ell}}-2}  |\Psi_{j^\prime}(\bar{\ell})\rangle \langle\Psi_j(\bar{\ell}) | \Big) \\
& = Tr_{2 \bar{\bar{\ell}},2\bar{\bar{\ell}}+1} {H}^{B}   \Big( Tr_{2\bar{\bar{\ell}}+2,\dots,2\bar{\bar{\ell}} + 2 \breve{\ell} }   |\Psi_{i^\prime} (\breve{\ell}) \rangle \langle \Psi_i (\breve{\ell}) |  \Big)   \Big(Tr_{1,..., 2 \bar{\bar{\ell}}-1}  |\Psi_{j^\prime}(\bar{\ell})\rangle \langle\Psi_j(\bar{\ell}) | \Big)
\end{align*}
leading to a matrix of the form
\begin{align*}
&\mathcal{N}^2(\breve{\ell}) \mathcal{N}^2(\bar{\ell}) \cos^4 \theta \left(\cos ^2\theta+\frac{\sin ^2\theta}{4}\right)^{{\breve{\ell}}+\bar{\bar{\ell}}-2}  \\
&\hspace{0.25in} \left(\left[\begin{array}{cccc} 1-\frac{\Delta(\breve{\ell})}{4} & 0 & 0 & 0 \\ 0 & 1 & 0 & 0 \\ 0 & 0 & 1 & 0  \\ 0 & 0 & 0 &1 \end{array} \right] \otimes  \left[\begin{array}{cccc} 1-\frac{\Delta(\bar{\bar{\ell}})}{4} & 0 & 0 & 0 \\ 0 & 1 & 0 & 0 \\ 0 & 0 & 1 & 0  \\ 0 & 0 & 0 &1 \end{array} \right]\right) K  \\
&\hspace{0.5in} \left(\left[\begin{array}{cccc} 1-\frac{\Delta(\breve{\ell})}{4} & 0 & 0 & 0 \\ 0 & 1 & 0 & 0 \\ 0 & 0 & 1 & 0  \\ 0 & 0 & 0 &1 \end{array} \right] \otimes\left[\begin{array}{cccc} 1-\frac{\Delta(\bar{\bar{\ell}})}{4} & 0 & 0 & 0 \\ 0 & 1 & 0 & 0 \\ 0 & 0 & 1 & 0  \\ 0 & 0 & 0 &1 \end{array} \right] \right).
\end{align*}
Here, $K$ is defined in (\ref{K}) while $\Delta(\breve{\ell})$ and $\Delta(\bar{\bar{\ell}})$ are given by equation (\ref{barUpsilon}) with $\breve{\ell}$ and $\bar{\bar{\ell}}$ substituted for $\bar{\ell}$ respectively.
Our matrix has 4 zero energy ground states, just like $K$.  If not for the matrices flanking $K$, it would be easy to compute the first excited energy 
\[
\mathcal{N}^2(\breve{\ell}) \mathcal{N}^2(\bar{\ell}) \cos^4 \theta \left(\cos ^2\theta+\frac{\sin ^2\theta}{4}\right)^{{\breve{\ell}}+\bar{\bar{\ell}}-2} \ge \cos^4 \theta.
\]
Fortunately, arguing as in (\ref{ksize}), we see these flanking matrices only change the matrix by a small matrix of norm less than $(\Delta(\breve{\ell}) + \Delta(\bar{\bar{\ell}}))/2$.  So, the eigenvalue stability theorem ensures that the gap of our matrix can be made no smaller than, say, $(1/2) \cos^4 \theta$.

So, we have concluded that $H \otimes \textsf{I}^{\otimes \bar{\bar{\ell}}} +\textsf{I}^{\lfloor \ell/{\bar{\ell}} \rfloor \bar{\ell}-1 } \otimes \bar{H} \otimes \textsf{I}^{\otimes \bar{\bar{\ell}}-1}+\textsf{I}^{\otimes \lfloor \ell/{\bar{\ell}} \rfloor \bar{\ell}} \otimes  H^R$ is gapped.  This Hamiltonian has 4 ground states  $\left\{ |\Psi_0 (\ell)\rangle ,|\Psi_1(\ell)\rangle ,|\Psi_2(\ell)\rangle ,|\Psi_3(\ell)\rangle \right\}$.  To bound the gap of (\ref{othermathcalH}), we now consider the chain with a boundary qubit on each end, and again apply lemma \ref{ABC}.  We set $B = \textsf{I} \otimes (H \otimes \textsf{I}^{\otimes \bar{\bar{\ell}}} +\textsf{I}^{\lfloor \ell/{\bar{\ell}} \rfloor \bar{\ell}-1 } \otimes \bar{H} \otimes \textsf{I}^{\otimes \bar{\bar{\ell}}-1}+\textsf{I}^{\otimes \lfloor \ell/{\bar{\ell}} \rfloor \bar{\ell}} \otimes  H^R) \otimes \textsf{I}$ and $C = \textsf{I}^{\otimes \ell} \otimes \bar{H}^{0,1} + \bar{H}^{\ell,\ell+1} \otimes  \textsf{I}^{\otimes \ell}$ so that $A = B+C = {\mathcal H}$.  We know that $B$ is gapped and $\norm{C}_2 \le 3$.  Thus, we need to evaluate $C$ in the basis of ground states of $B$ and show the resulting matrix is gapped.  Lemma \ref{4states} shows that $B$ has $16$ zero-energy eigenstates.  A suitable basis is $
\left\{ |\Psi_0 (\ell)\rangle ,|\Psi_1(\ell)\rangle ,|\Psi_2(\ell)\rangle ,|\Psi_3(\ell)\rangle \right\}\otimes \left\{ \ket{0},\ket{1}\right\}$.  To determine the $16 \times 16$ matrix $P^BCP^B$, first focus on the $\bar{H}^{0,1}$ end of the chain.  At $\theta = 0$, the calculation reduces to computing $\textsf{I}\otimes \textsf{I} \otimes {H}^{B}$ in the basis 
\begin{align*}
\left\{ \ket{0},\ket{1}\right\}  \otimes  & \left\{  \frac{\ket{0}   \ket{0} +  \ket{1}   \ket{1}}{\sqrt{2}} , \frac{\ket{0} \ket{1}+\ket{1} \ket{0}}{\sqrt{2}},  \frac{\ket{0} \ket{1}-\ket{1} \ket{0}}{\sqrt{2}},  \frac{\ket{0}   \ket{0} -  \ket{1}   \ket{1}}{\sqrt{2}} \right\}  \\
& \hspace{3.25in} \otimes \left\{ \ket{0},\ket{1}\right\}.
\end{align*}
The resulting matrix is
\begin{align*}
\bar{h}_0\vert_{\theta = 0} = \tilde{h}_0 \vert_{\theta = 0} =   \left[\begin{array}{cc}1 & 0 \\ 0 & 1 \end{array} \right] \otimes \left[
\begin{array}{cccccccc}
 \frac{3}{4} & 0 & 0 & -\frac{1}{4} & 0 & -\frac{1}{4} & -\frac{1}{4} & 0 \\
 0 & \frac{3}{4} & -\frac{1}{4} & 0 & \frac{1}{4} & 0 & 0 & \frac{1}{4} \\
 0 & -\frac{1}{4} & \frac{3}{4} & 0 & \frac{1}{4} & 0 & 0 & \frac{1}{4} \\
 -\frac{1}{4} & 0 & 0 & \frac{3}{4} & 0 & -\frac{1}{4} & -\frac{1}{4} & 0 \\
 0 & \frac{1}{4} & \frac{1}{4} & 0 & \frac{3}{4} & 0 & 0 & -\frac{1}{4} \\
 -\frac{1}{4} & 0 & 0 & -\frac{1}{4} & 0 & \frac{3}{4} & -\frac{1}{4} & 0 \\
 -\frac{1}{4} & 0 & 0 & -\frac{1}{4} & 0 & -\frac{1}{4} & \frac{3}{4} & 0 \\
 0 & \frac{1}{4} & \frac{1}{4} & 0 & -\frac{1}{4} & 0 & 0 & \frac{3}{4} \\
 \end{array}
\right] .
\end{align*}
For general $\theta$, we define
\[
\tilde{h}_0 = \frac{\cos^2 \theta \left(\cos ^2\theta+\frac{\sin ^2\theta}{4}\right)^{\ell-1}}{(\cos ^2 \theta + (1/4) \sin ^2 \theta)^{\ell} - ((1/4) \sin ^2  \theta)^{\ell}}  \tilde{h}_0\vert_{\theta = 0}.
\]
This is analogous to (\ref{tildeh}), except that the coefficient in front is modified since  $\ell$ replaces $\bar{\ell}$ in $|\Psi_i (\ell)\rangle$ and the qubit at the end of the chain does not contribute any multiplicative factors.
Setting
\[
\frac{\Delta}{4} = 1- \sqrt{\frac{(\cos ^2 \theta + (1/4) \sin ^2 \theta)^{\ell} - ((1/4) \sin ^2  \theta)^{\ell}}{(\cos ^2 \theta + (1/4) \sin ^2 \theta)^{\ell} +3 ((1/4) \sin ^2  \theta)^{\ell}}}
\]
yields
\begin{align*}
\bar{h}_0 = &  \left(\left[\begin{array}{cc}1 & 0 \\ 0 & 1 \end{array} \right] \otimes \left[\begin{array}{cccc} 1-\frac{\Delta(\ell)}{4} & 0 & 0 & 0 \\ 0 & 1 & 0 & 0 \\ 0 & 0 & 1 & 0  \\ 0 & 0 & 0 &1 \end{array} \right] \otimes \left[\begin{array}{cc}1 & 0 \\ 0 & 1 \end{array} \right]\right) \tilde{h}_0 \\
& \hspace{0.75in} \left(\left[\begin{array}{cc}1 & 0 \\ 0 & 1 \end{array} \right] \otimes \left[\begin{array}{cccc} 1-\frac{\Delta(\ell)}{4} & 0 & 0 & 0 \\ 0 & 1 & 0 & 0 \\ 0 & 0 & 1 & 0  \\ 0 & 0 & 0 &1 \end{array} \right] \otimes \left[\begin{array}{cc}1 & 0 \\ 0 & 1 \end{array} \right] \right).
\end{align*}
A matrix $\bar{h}_{\ell+1}$ is obtained analogously for the other end of the chain:
\begin{align*}
\bar{h}_{\ell+1} & =   \left(\left[\begin{array}{cc}1 & 0 \\ 0 & 1 \end{array} \right] \otimes \left[\begin{array}{cccc} 1-\frac{\Delta(\ell)}{4} & 0 & 0 & 0 \\ 0 & 1 & 0 & 0 \\ 0 & 0 & 1 & 0  \\ 0 & 0 & 0 &1 \end{array} \right] \otimes \left[\begin{array}{cc}1 & 0 \\ 0 & 1 \end{array} \right]\right) \tilde{h}_{\ell+1} \\
& \hspace{0.5in} \left(\left[\begin{array}{cc}1 & 0 \\ 0 & 1 \end{array} \right] \otimes \left[\begin{array}{cccc} 1-\frac{\Delta(\ell)}{4} & 0 & 0 & 0 \\ 0 & 1 & 0 & 0 \\ 0 & 0 & 1 & 0  \\ 0 & 0 & 0 &1 \end{array} \right] \otimes \left[\begin{array}{cc}1 & 0 \\ 0 & 1 \end{array} \right] \right).
\end{align*}
with
\begin{align*}
\bar{h}_{\ell+1}\vert_{\theta = 0} = \tilde{h}_{\ell+1} \vert_{\theta = 0} =   \left[
\begin{array}{cccccccc}
 \frac{3}{4} & 0 & 0 & -\frac{1}{4} & 0 & 0 & -\frac{1}{2 \sqrt{2}} & 0 \\
 0 & \frac{1}{2} & 0 & 0 & -\frac{1}{2 \sqrt{2}} & 0 & 0 & \frac{1}{2 \sqrt{2}} \\
 0 & 0 & 1 & 0 & 0 & 0 & 0 & 0 \\
 -\frac{1}{4} & 0 & 0 & \frac{3}{4} & 0 & 0 & -\frac{1}{2 \sqrt{2}} & 0 \\
 0 & -\frac{1}{2 \sqrt{2}} & 0 & 0 & \frac{3}{4} & 0 & 0 & \frac{1}{4} \\
 0 & 0 & 0 & 0 & 0 & 1 & 0 & 0 \\
 -\frac{1}{2 \sqrt{2}} & 0 & 0 & -\frac{1}{2 \sqrt{2}} & 0 & 0 & \frac{1}{2} & 0 \\
 0 & \frac{1}{2 \sqrt{2}} & 0 & 0 & \frac{1}{4} & 0 & 0 & \frac{3}{4} \\
\end{array}
\right] \otimes  \left[\begin{array}{cc}1 & 0 \\ 0 & 1 \end{array} \right]
\end{align*}
and
\[
\tilde{h}_{\ell+1} = \frac{\cos^2 \theta \left(\cos ^2\theta+\frac{\sin ^2\theta}{4}\right)^{\ell-1}}{(\cos ^2 \theta + (1/4) \sin ^2 \theta)^{\ell} - ((1/4) \sin ^2  \theta)^{\ell}}  \tilde{h}_{\ell+1}\vert_{\theta = 0}.
\]
We find that $P^BCP^B = \bar{h}_0 + \bar{h}_{\ell+1}$ has a single zero-energy eigenstate, which is the zero-energy eigenstate of ${\mathcal H}$.  The gap of $P^BCP^B$, $c_1$, can be bounded using the eigenvalue stability inequality.  Direct diagonalization shows that the gap of $\tilde{h}_0 + \tilde{h}_{\ell+1}$ is 
\[
\frac{\cos^2 \theta \left(\cos ^2\theta+\frac{\sin ^2\theta}{4}\right)^{\ell-1}}{(\cos ^2 \theta + (1/4) \sin ^2 \theta)^{\ell} - ((1/4) \sin ^2 \theta)^{\ell}}
\] 
implying
\begin{align*}
c_1 \ge  \left(1- \frac{\Delta(\ell)}{2} \right)  \frac{\cos^2 \theta \left(\cos ^2\theta+\frac{\sin ^2\theta}{4}\right)^{\ell-1}}{(\cos ^2 \theta + (1/4) \sin ^2 \theta)^{\ell} - ((1/4) \sin ^2  \theta)^{\ell}}.
\end{align*}
Applying lemma \ref{ABC}, we find that ${\mathcal H}$ is gapped.
\end{proof}

\section{Swap chain Hamiltonian}
\label{sec1}

\subsection{Description of and motivation for the swap chain}
\label{4SS1} 

Quantum teleportation is a canonical method for transferring quantum information.  But a more direct approach is simply a swap gate between two qubits.  Similarly, the teleportation chain Hamiltonian has a streamlined cousin, the swap chain Hamiltonian, that we define in this section.

\begin{definition}. Let $\left\{\ket{0},\ket{1},\ket{\scriptscriptstyle IDLE}\right\}$ constitute an orthonormal basis of $\mathbb{C}^{3}$.
For $b \in \mathbb{F}_{2}$ define 
\[ 
| \psi_{b} \rangle = \cos \theta | 0 \rangle \otimes |b \rangle + \sin \theta | b \rangle \otimes |{\scriptscriptstyle IDLE} \rangle.
\]
Observe that 
\begin{align*}
& \left\{ \sin \theta | 0 \rangle \otimes | 0 \rangle - \cos \theta | 0  \rangle \otimes |{\scriptscriptstyle IDLE} \rangle,\sin \theta | 0 \rangle \otimes | 1 \rangle - \cos \theta | 1  \rangle \otimes |{\scriptscriptstyle IDLE} \rangle, \right. \\
&\hspace{2.9in} \left. | 1 \rangle \otimes |0 \rangle, | 1 \rangle \otimes |1 \rangle \right\}   \\
& \hspace{0.85in} \cup  \left\{ | \psi_{0} \rangle, | \psi_{1} \rangle, | {\scriptscriptstyle IDLE} \rangle \otimes |0 \rangle, | {\scriptscriptstyle IDLE} \rangle \otimes |1 \rangle, | {\scriptscriptstyle IDLE} \rangle \otimes |{\scriptscriptstyle IDLE}  \rangle \right\}
\end{align*}
is an orthornormal basis for $\mathbb{C}^{9}$, and let 
\begin{align}
\bar{H} = & ( \sin \theta | 0 \rangle \otimes | 0 \rangle - \cos \theta | 0  \rangle \otimes |{\scriptscriptstyle IDLE} \rangle ) ( \sin \theta \langle 0 | \otimes \langle 0 | - \cos \theta \langle 0  | \otimes \langle {\scriptscriptstyle IDLE} | ) \nonumber \\
 + &( \sin \theta | 0 \rangle \otimes | 1 \rangle - \cos \theta | 1  \rangle \otimes |{\scriptscriptstyle IDLE} \rangle ) ( \sin \theta \langle 0 | \otimes \langle 1 | - \cos \theta \langle 1  | \otimes \langle {\scriptscriptstyle IDLE} | ) \nonumber  \\
 +  & \ket{1} \bra{1} \otimes  \ket{0} \bra{0} +  \ket{1} \bra{1} \otimes  \ket{1} \bra{1}
\label{barHswap}
\end{align}
denote an orthogonal projection onto the first four vectors of the basis.   Then (\ref{H}) defines a Hamiltonian with open boundary conditions acting on $\lfloor \ell/{\bar{\ell}} \rfloor \bar{\ell}$  qutrits.  We attach a remnant and add a boundary qutrit to each end, and the swap chain Hamiltonian is then (\ref{mathcalH}) using $\bar{H}^{0,1} = \textsf{I} \otimes \ket{1}\bra{1} + \bar{H}$ and $\bar{H}^{\ell,\ell+1} = \ket{\scriptscriptstyle IDLE} \langle {\scriptscriptstyle IDLE} | \otimes \textsf{I} + \bar{H}$.  
\label{setup def}
\end{definition}
\begin{remark} To motivate the name swap chain Hamiltonian, consider a simplified Hamiltonian $\textsf{I} \otimes \ket{1}\bra{1} + \bar{H} + \ket{\scriptscriptstyle IDLE} \langle {\scriptscriptstyle IDLE} | \otimes \textsf{I} $ acting on 2 qutrits.   
\noindent By construction, the null space of $\bar{H}$ is $
\left\{ | \psi_{0} \rangle, | \psi_{1} \rangle, | {\scriptscriptstyle IDLE} \rangle   |0 \rangle, | {\scriptscriptstyle IDLE} \rangle   |1 \rangle, | {\scriptscriptstyle IDLE} \rangle   |1 \rangle \right\}$.  (Here and henceforth, we sometimes drop the tensor product $\otimes$ notation between states when no confusion will arise.)
Furthermore, since $\ket{\scriptscriptstyle IDLE} \langle {\scriptscriptstyle IDLE} | \otimes \textsf{I}$ raises the energy of the last three states
and $\langle \psi_{1} | \textsf{I} \otimes \ket{1}\bra{1} | \psi_{1} \rangle = \cos^{2} \theta$, we see that the unique ground state of the Hamiltonian is
$| \psi_{0} \rangle = \cos \theta | 0 \rangle | 0 \rangle + \sin \theta | 0 \rangle  | {\scriptscriptstyle IDLE} \rangle$.  Treat $| 0 \rangle$ and $| 1 \rangle$ as computational states and $\ket{\scriptscriptstyle IDLE}$ as a post-measurement state. Then, we can view $| \psi_{0} \rangle$ as being a coherent superposition of the state of two qubits before and after running through the circuit of Fig. \ref{swapcircuit} (in the case $| \mu \rangle = | 0 \rangle$).

\begin{figure}[H]
\begin{center}
\includegraphics[width=4.5in]{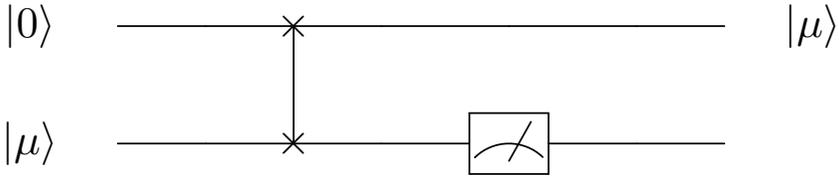}
\end{center}
\caption{Swap gate followed by measurement of lower qubit.}
\label{swapcircuit}
\end{figure}
We think of $\bar{H} $  as describing a single qubit going through one iteration of this swap circuit.  Increasing $\ell$ then amounts to running through more iterations of the circuit.  The term $\textsf{I} \otimes \ket{1}\bra{1}$ in $\bar{H}^{0,1}$ privileges starting in the computational state $| 0 \rangle$ as opposed to $| 1 \rangle$ (i.e., setting the initial $| \mu \rangle = | 0 \rangle$) while the term $\ket{\scriptscriptstyle IDLE} \langle {\scriptscriptstyle IDLE} | \otimes \textsf{I}$ in $\bar{H}^{\ell,\ell+1}$ penalizes ending in the non-computational state $\ket{\scriptscriptstyle IDLE}$.  Each $\bar{H}$ then enforces a version of the circuit with amplitude depending on $\theta$.  Altogether, we think of ${\mathcal H}$ as emulating the behavior of swapping a single qubit down a line.
\end{remark}

As in the case of the teleportation chain Hamiltonian, our application of the local gap method \cite{Knabe1988} or the Martingale method \cite{Nachtergaele1996} did not succeed in establishing gapped behavior over the entire range $0 \le \theta < \pi/2$.  We therefore try the  renormalization method.  To facilitate the application of the renormalization method, we define some matrices that will show up repeatedly during our analysis.  We also determine the spectra of these matrices.

\begin{definition}
Set $s = \sin \theta$ and $t = \tan \theta$ and for $m \geq 3$ define the $m \times m$ matrix
\[
V (m) = 
\left(
\begin{array}{cccccccc}
t^2         & -s  & -s^2 & -s^3 & \hdots & -s^{m-3} & -s^{m-2} & -ts^{m-2}  \\
-s       & 1    & 0       & 0       & \hdots & 0           & 0           & 0          \\
-s^2     & 0      & 1     & 0       & \hdots & 0           & 0           & 0          \\
-s^3     & 0      & 0       & 1     & \hdots & 0           & 0           & 0          \\
\vdots      & \vdots & \vdots  & \vdots  &        & \vdots      & \vdots      & \vdots     \\
-s^{m-3} & 0      & 0       & 0       & \hdots & 1         & 0           & 0          \\
-s^{m-2} & 0      & 0       & 0       & \hdots & 0           & 1         & 0          \\
-ts^{m-2}   & 0      & 0       & 0       & \hdots & 0           & 0           & 1          
\end{array}
\right).
\]
Let $W(m-1)$ be the matrix obtained from $V(m)$ by deleting the last row and column.
\end{definition}

\begin{proposition}
The spectra of $V(m)$ and $W(m-1)$, including multiplicities, are given by 
\[
 \left\{ 0 \right\} \cup \bigcup_{i=1}^{m-2} \left\{ 1 \right\} \cup \left\{ \sec^2 \theta \right\}
\]
and
\[
 \left\{ \frac{1 - \sqrt{1 - 4 \cos^{2} \theta \sin^{2(m-1)} \theta}}{2 \cos^{2} \theta} \right\} \cup \bigcup_{i=1}^{m-3} \left\{ 1 \right\} \cup \left\{ \frac{1 + \sqrt{1 - 4 \cos^{2} \theta \sin^{2(m-1)} \theta}}{2 \cos^{2} \theta} \right\}.
\]
In particular, these matrices have at most a single eigenvalue smaller than 1.
\label{special matrices}
\end{proposition}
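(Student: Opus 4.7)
The plan is to exploit the arrowhead structure common to $V(m)$ and $W(m-1)$: outside the first row and first column every off-diagonal entry vanishes, and the diagonal is the identity except at position $(1,1)$. This reduces the spectral problem to elementary one-variable algebra, so the proof should amount to a short computation rather than a structural argument.

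First I would handle the eigenvalue $\lambda = 1$. For either matrix, rows $j \ge 2$ of the eigenvalue equation read $-s^{j-1} x_1 = 0$ (with an extra factor $t$ for the last row of $V(m)$). Since $\sin \theta \ne 0$ for $0 < \theta < \pi/2$, these force $x_1 = 0$, and the first row then imposes a single linear constraint on the remaining coordinates. The $\lambda = 1$ eigenspace therefore has dimension $(m - 1) - 1 = m - 2$ for $V(m)$ and $(m - 2) - 1 = m - 3$ for $W(m-1)$, matching the stated multiplicities of $1$.

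Next, for $\lambda \ne 1$, rows $j \ge 2$ determine $x_j = s^{j-1} x_1 /(1 - \lambda)$ (with an extra $t$ on the last coordinate of $V(m)$); any such eigenvector with $x_1 = 0$ must vanish, so I may normalize $x_1 = 1$ and substitute into row $1$. The key ingredient is the telescoping identity
\[
\sum_{j=2}^{m-1} s^{2(j-1)} + t^2 s^{2(m-2)} = t^2,
\]
which follows from the geometric series together with $t^2 = s^2/(1-s^2)$. For $V(m)$ the whole tail is present, so row $1$ collapses to $\lambda(\lambda - \sec^2\theta) = 0$, recovering the eigenvalues $0$ and $\sec^2 \theta$. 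For $W(m-1)$ the final $t^2 s^{2(m-2)}$ is absent, so the identity leaves a residue $s^{2(m-1)}/\cos^2\theta$ and the equation becomes
\[
\cos^2 \theta \cdot \lambda^2 - \lambda + \sin^{2(m-1)} \theta = 0,
\]
whose two roots are exactly the values quoted in the proposition.

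Summing multiplicities accounts for all $m$ eigenvalues of $V(m)$ and all $m-1$ eigenvalues of $W(m-1)$. The ``at most one eigenvalue below $1$'' assertion then follows at a glance: for $V(m)$ the only candidate is $\lambda = 0$, since $\sec^2 \theta \ge 1$; and for $W(m-1)$ the larger quadratic root exceeds $1$, which after clearing denominators and squaring (when the right-hand side is positive) reduces to $\sin^2 \theta \ge \sin^{2(m-1)} \theta$, valid for $m \ge 3$. I do not foresee any serious obstacle — the only delicate points are bookkeeping the off-by-one differences between $V(m)$ and $W(m-1)$, and checking the degenerate endpoint $\theta = 0$, where both matrices reduce to the identity and the claimed spectra can be read off directly.
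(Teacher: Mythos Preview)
Your proposal is correct and follows essentially the same arrowhead-matrix route as the paper: both arguments observe that for $\lambda\neq 1$ the rows beyond the first determine the eigenvector up to scale, substitute into the first row to obtain the quadratic in $\lambda$, and handle the $\lambda=1$ eigenspace separately. The only cosmetic differences are that the paper writes down an explicit ansatz $(\lambda,s,s^2,\dots)^\dagger$ and exhibits concrete $\lambda=1$ eigenvectors rather than counting dimensions, and it bounds the larger root via $1-4\cos^2\theta\sin^2\theta=(\cos^2\theta-\sin^2\theta)^2$ instead of your squaring step; neither change is substantive.
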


\begin{proof}
We only prove the statement for $W(m-1)$; the proof for $V(m)$ is analogous. We first seek eigenvectors of the form $\left( \lambda, s, s^2,  \hdots, s^{m-3}, s^{m-2} \right)^{\dagger}$.  Considering any coordinate beyond the first, we see that the eigenvalue must be $1-\lambda$.  This constrains the first coordinate so that  
\[
\lambda(1-\lambda) = t^2 \lambda - \sum_{j=1}^{m-2} s^{2j} = t^2 \lambda - \left( \frac{s^{2} - s^{2(m-1)}}{1-s^2} \right).
\]
Solving this quadratic and calculating $1-\lambda$ leads to the first and last asserted eigenvalues.

In order to completely classify the spectrum of $W(m-1)$, we now provide $m-3$ linearly independent eigenvectors with eigenvalue 1.  We do this by simply assuring that things cancel out appropriately in the first coordinate.  More specifically, we choose vectors of the form 
\[
(0,s,-1,0,0,0,0,\hdots)^{\dagger}, (0,s^2,0,-1,0,0,0,\hdots)^{\dagger},  (0,s^3,0,0,-1,0,0,\hdots)^{\dagger}, \hdots
\]
The proof of the proposition is then complete after noting that 
\begin{align*}
1 + \sqrt{1 - 4 \cos^{2} \theta \sin^{2(m-1)} \theta} & \geq  1 + \sqrt{1 - 4 \cos^{2} \theta \sin^{2} \theta} \\
& = 1 + \left\vert \cos^2 \theta - \sin^2 \theta \right\vert \geq 2 \cos^2 \theta.
\end{align*}
\end{proof}

\subsection{Ground states of the swap chain}
\label{4SS2} 

Consider a segment of length ${\bar{\ell}} \geq 2$ but without the boundary projectors $ I \otimes \ket{1}\bra{1} $ and $ \ket{\scriptscriptstyle IDLE} \langle {\scriptscriptstyle IDLE} | \otimes I$.  We wish to completely describe its ground state space.  But before doing so, we define two special sequences of states and establish some useful technical properties that these states display.

\begin{definition}
We shall call $| \Psi_{0}(\bar{\ell};0) \rangle $ and $| \Psi_{1}(\bar{\ell};0) \rangle$ the history states of a length ${\bar{\ell}}$ segment.  Begin by setting $| \Psi_{b} (1;0) \rangle =  | b  \rangle$ for $b \in \mathbb{F}_{2}$.  Then for any ${\bar{\ell}} > 1$ we define  $| \Psi_{b} (\bar{\ell};0) \rangle$ recursively by 
\[
| \Psi_{b} (\bar{\ell}-1;0) \rangle = \alpha | 0 \rangle | \mu \rangle +  \beta | 1 \rangle | \nu \rangle \implies | \Psi_{b}({\bar{\ell}};0) \rangle = \alpha | \psi_{0} \rangle | \mu \rangle +  \beta | \psi_{1} \rangle | \nu \rangle,
\]
where $| \psi_{0} \rangle $ and $| \psi_{1} \rangle$ are the two-qutrit states described in Definition \ref{setup def}.  We also define projection operators onto these states by setting 
\[
\bar{P}_{b}(\bar{\ell};0) =  | \Psi_{b}(\bar{\ell};0) \rangle \langle \Psi_{b}(\bar{\ell};0) |, \hspace{.1in}\bar{P}(\bar{\ell};0) = \bar{P}_{0}(\bar{\ell};0)+ \bar{P}_{1}(\bar{\ell};0).
\]
\end{definition}

\begin{definition}
Define states $| \Psi_{b}(\bar{\ell};j) \rangle = | {\scriptscriptstyle IDLE} \rangle^{\otimes j} | \Psi_{b}({\bar{\ell}}-j,0) \rangle$  and projectors
\begin{align*}
&\bar{P}_{b}(\bar{\ell};1\cdots \bar{\ell}-1) = \sum_{j=1}^{{\bar{\ell}}-1} | \Psi_{b}(\bar{\ell};j) \rangle  \langle \Psi_{b}(\bar{\ell};j) |,  \hspace{.1in} \bar{P}(\bar{\ell};\bar{\ell})  =  \ket{\scriptscriptstyle IDLE}^{\otimes {\bar{\ell}}}  \bra{\scriptscriptstyle IDLE}^{\otimes {\bar{\ell}}},\\
&  \hspace{.1in} \bar{P}(\bar{\ell};1\cdots \bar{\ell}) = \bar{P}_{0}(\bar{\ell};1\cdots \bar{\ell}-1) + \bar{P}_{1}(\bar{\ell};1\cdots \bar{\ell}-1)+  \bar{P}(\bar{\ell}; \bar{\ell})
\end{align*}
for ${\bar{\ell}}>1$, $1 \leq j \leq \bar{\ell} -1$, and $b \in \mathbb{F}_{2}$.
\end{definition}

\begin{proposition}
Letting $| \psi \rangle = \cos \theta | 0 \rangle + \sin \theta \ket{\scriptscriptstyle IDLE}$, these history states enjoy the following properties:
\begin{enumerate}[(i)]
    \item When expressed in the basis $\left\{ | 0 \rangle, | 1 \rangle, \ket{\scriptscriptstyle IDLE} \right\}^{\otimes {\bar{\ell}}}$, no terms of $| \Psi_{0}(\bar{\ell};0) \rangle $ or $| \Psi_{1}(\bar{\ell};0) \rangle$ begin with $\ket{\scriptscriptstyle IDLE}$.
    \item   When expressed in the basis $\left\{ | 0 \rangle, | 1 \rangle, \ket{\scriptscriptstyle IDLE} \right\}^{\otimes {\bar{\ell}}}$, each term of $| \Psi_{1}(\bar{\ell};0) \rangle$ has a single $| 1 \rangle$ in it.
    \item $| \Psi_{0}(\bar{\ell};0) \rangle$ is separable with $| \Psi_{0}(\bar{\ell};0) \rangle = | 0 \rangle  | \psi \rangle^{\otimes {\bar{\ell}}-1}$.
    \item $| \Psi_{1}(\bar{\ell};0) \rangle =  \cos \theta | 0 \rangle  \sum_{j=0}^{\bar{\ell}-2} \sin^{j} \theta  | \psi \rangle^{\otimes \bar{\ell}-2-j} |1 \rangle \ket{\scriptscriptstyle IDLE}^{\otimes j}   + \sin^{\bar{\ell}-1} \theta | 1 \rangle \ket{\scriptscriptstyle IDLE}^{\otimes \bar{\ell}-1}$
     \item For any $1 \leq j \leq {\bar{\ell}}-1$ we have $\langle \Psi_{0}(\bar{\ell};0) | \left( | 0 \rangle \langle{\scriptscriptstyle IDLE} | \otimes \textsf{I}^{\otimes {{\bar{\ell}}-1}} \right)
    \ket{ \Psi_{0}(\bar{\ell};j)} = \cos \theta \sin^{j-1} \theta$.
    \item For any $1 \leq j \leq {\bar{\ell}}-1$ we have $\langle \Psi_{1}(\bar{\ell};0) | \left( | 0 \rangle \langle {\scriptscriptstyle IDLE} | \otimes \textsf{I}^{\otimes {{\bar{\ell}}-1}} \right)\ket{ \Psi_{1}(\bar{\ell};j)} = \cos \theta \sin^{j-1} \theta$.
    \item For $b \in \mathbb{F}_{2}$ we have $\langle \Psi_{b}(\bar{\ell};0) | \left( | 0 \rangle \langle {\scriptscriptstyle IDLE} | \otimes \textsf{I}^{\otimes {{\bar{\ell}}-1}} \right) \ket{\scriptscriptstyle IDLE}^{\otimes {\bar{\ell}}} = \sin^{{\bar{\ell}}-1} \theta \delta_{b,0}$.
\end{enumerate}
\label{properties of history states}
\end{proposition}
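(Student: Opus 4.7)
My plan is to establish parts (i)--(iv) by a single induction on $\bar{\ell}$ driven by the recursion defining $\ket{\Psi_b(\bar{\ell};0)}$, and then extract (v)--(vii) by direct substitution of the closed forms from (iii) and (iv).

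For the base case $\bar{\ell} = 1$, $\ket{\Psi_b(1;0)} = \ket{b}$ satisfies (i)--(iv) trivially (the sum in (iv) being empty). For the inductive step, write $\ket{\Psi_b(\bar{\ell}-1;0)} = \alpha \ket{0}\ket{\mu} + \beta \ket{1}\ket{\nu}$ as guaranteed by (i) at level $\bar{\ell}-1$, and note that by induction $\ket{\mu}$ and $\ket{\nu}$ themselves have no $\ket{\scriptscriptstyle IDLE}$ in their first slot. Since $\ket{\psi_0} = \ket{0}\ket{\psi}$ and $\ket{\psi_1} = \cos\theta\, \ket{0}\ket{1} + \sin\theta\, \ket{1}\ket{\scriptscriptstyle IDLE}$, the recursion gives
\[
\ket{\Psi_b(\bar{\ell};0)} = \alpha \ket{0}\ket{\psi}\ket{\mu} + \beta \bigl(\cos\theta\, \ket{0}\ket{1} + \sin\theta\, \ket{1}\ket{\scriptscriptstyle IDLE}\bigr)\ket{\nu},
\]
which establishes (i). For (ii), observe that $\ket{\psi}$ contributes no $\ket{1}$, so the $\alpha$-branch preserves the unique $\ket{1}$ already present in $\ket{\mu}$, while the $\beta$-branch inserts exactly one $\ket{1}$ from $\ket{\psi_1}$ into a $\ket{\nu}$ that has none by induction. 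Part (iii) is immediate: induction gives $\alpha = 1$, $\beta = 0$, $\ket{\mu} = \ket{\psi}^{\otimes \bar{\ell}-2}$, and the recursion produces $\ket{\Psi_0(\bar{\ell};0)} = \ket{0}\ket{\psi}^{\otimes \bar{\ell}-1}$. For (iv) I would substitute the inductive formula for $\ket{\Psi_1(\bar{\ell}-1;0)}$ into the recursion, expand $\ket{\psi_0}$ and $\ket{\psi_1}$, and reindex; the two new contributions merge cleanly to extend the sum by one term.

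With (i)--(iv) in hand, (v) and (vii) are short. By (iii), $\ket{\Psi_0(\bar{\ell};j)} = \ket{\scriptscriptstyle IDLE}^{\otimes j}\ket{0}\ket{\psi}^{\otimes \bar{\ell}-j-1}$; applying the operator replaces the first $\ket{\scriptscriptstyle IDLE}$ with $\ket{0}$, and pairing with $\ket{\Psi_0(\bar{\ell};0)} = \ket{0}\ket{\psi}^{\otimes \bar{\ell}-1}$ factorizes site-by-site into $1 \cdot \sin^{j-1}\theta \cdot \cos\theta \cdot 1$, giving (v). The $b = 0$ case of (vii) is analogous, and the $b = 1$ case vanishes at once by (ii), since $\ket{0}\ket{\scriptscriptstyle IDLE}^{\otimes \bar{\ell}-1}$ carries no $\ket{1}$.

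The main (though still routine) work is part (vi). Here I would substitute (iv) for both $\bra{\Psi_1(\bar{\ell};0)}$ and the inner $\ket{\Psi_1(\bar{\ell}-j;0)}$ appearing in $\ket{\Psi_1(\bar{\ell};j)} = \ket{\scriptscriptstyle IDLE}^{\otimes j}\ket{\Psi_1(\bar{\ell}-j;0)}$. The $\ket{1}$-leading term of the bra from (iv) is killed because the operator leaves $\ket{0}$ in the first slot of the ket. Since $\bra{1}$ is orthogonal to both $\ket{\psi}$ and $\ket{\scriptscriptstyle IDLE}$, the position of the lone $\ket{1}$ is forced to match between the bra and the ket, which collapses the resulting double sum to a single sum. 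A prefactor $\sin^{j-1}\theta$ comes from the initial stretch of $\ket{\scriptscriptstyle IDLE}$'s, and the remaining sum simplifies via the elementary geometric identity $\cos^2\theta \sum_{k=0}^{n-2}\sin^{2k}\theta + \sin^{2(n-1)}\theta = 1$ (with $n = \bar{\ell}-j$) to produce $\cos\theta\sin^{j-1}\theta$. I expect the only real obstacle to be the index bookkeeping in this collapse; no conceptual ingredient beyond (iv) and the orthogonality of $\bra{1}$ is needed.
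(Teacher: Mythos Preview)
Your proposal is correct and follows essentially the same route as the paper: establish the closed forms (iii) and (iv) by induction on $\bar{\ell}$, read off (i) and (ii), and then deduce (v)--(vii) by direct substitution, with (vi) handled by matching the position of the lone $\ket{1}$ and summing the resulting geometric series. One small slip: your parenthetical that ``$\ket{\mu}$ and $\ket{\nu}$ themselves have no $\ket{\scriptscriptstyle IDLE}$ in their first slot'' is not actually true (e.g.\ $\ket{\mu}$ begins with a factor $\ket{\psi}$, which has an $\ket{\scriptscriptstyle IDLE}$ component), but you never use this claim, so it does no harm to the argument.
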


\begin{proof}
Properties (iii) and (iv) can be shown via simple inductive arguments and (i) and (ii) are immediate consequences.  The remaining three properties also follow from properties (iii) and (iv).   To show (vi), the only non-trivial one, we write
\begin{align*}
& \langle \Psi_{1}(\bar{\ell};0) | \left( | 0 \rangle \langle {\scriptscriptstyle IDLE} | \otimes \textsf{I}^{\otimes {{\bar{\ell}}-1}} \right)\ket{ \Psi_{1}(\bar{\ell};j)} = \\
& \hspace{0.25in} \left(  \cos \theta \langle 0 |  \sum_{j^\prime=0}^{\bar{\ell}-2} \sin^{j^\prime} \theta  \langle \psi |^{\otimes \bar{\ell}-2-j^\prime} \langle 1 | \bra{\scriptscriptstyle IDLE}^{\otimes j^\prime}  \right) \\ 
& \hspace{0.25in} \left( | 0 \rangle \langle {\scriptscriptstyle IDLE} | \otimes \textsf{I}^{\otimes {{\bar{\ell}}-1}} \right) \ket{{\scriptscriptstyle IDLE}}^{\otimes j}
 \left( \cos \theta | 0 \rangle  \sum_{j^{\prime\prime}=0}^{\bar{\ell}-j-2} \sin^{j^{\prime\prime}} \theta  | \psi \rangle^{\otimes \bar{\ell}-j-2-j^{\prime\prime}} |1 \rangle \ket{\scriptscriptstyle IDLE}^{\otimes j^{\prime\prime}}  \right.\\
&\hspace{2.5in}+ \sin^{\bar{\ell}-j-1} \theta | 1 \rangle \ket{\scriptscriptstyle IDLE}^{\otimes \bar{\ell}-j-1}\Bigg).
\end{align*}
The inner product between the sum over $j^\prime$ and the sum over $j^{\prime\prime}$ gets non-zero contributions when $j^\prime = j^{\prime\prime}$.   Similarly, the inner product between the sum over $j^\prime$ and the term $\ket{{\scriptscriptstyle IDLE}}^{\otimes j} \sin^{\bar{\ell}-j-1} \theta | 1 \rangle \ket{\scriptscriptstyle IDLE}^{\otimes \bar{\ell}-j-1}$ gets a non-zero contribution when $j^\prime = \bar{\ell}-j-1$.  We are left with
\begin{align*}
& \left(  \cos \theta   \sum_{j^{\prime\prime}=0}^{\bar{\ell}-j-2} \sin^{j^{\prime\prime}} \theta  \langle \psi |^{\otimes \bar{\ell}-2-j^{\prime\prime}} \right) \ket{{\scriptscriptstyle IDLE}}^{\otimes j-1}\cos \theta | 0 \rangle \sin^{j^{\prime\prime}} \theta  | \psi \rangle^{\otimes \bar{\ell}-j-2-j^{\prime\prime}} \\
& \hspace{1.0in} +  \cos \theta  \sin^{\bar{\ell}-j-1} \theta  \langle \psi |^{\otimes \bar{\ell}-2-(\bar{\ell}-j-1)} \ket{{\scriptscriptstyle IDLE}}^{\otimes j-1} \sin^{\bar{\ell}- j-1} \theta \\
& = \cos^3 \theta \sin^{j-1} \theta \sum_{j^{\prime\prime}=0}^{\bar{\ell}-j-2} \sin^{2j^{\prime\prime}} \theta + \cos \theta\sin^{j-1} \theta  \sin^{2(\bar{\ell}-j-1)} \theta \\
& = \cos^3 \theta  \sin^{j-1} \theta \frac{1 - \sin^{2(\bar{\ell}-j-1)} \theta}{1-\sin^2 \theta}+ \cos \theta\sin^{j-1} \theta  \sin^{2(\bar{\ell}-j-1)} \theta = \cos \theta  \sin^{j-1} \theta 
\end{align*}
as required.
\end{proof}

\begin{remark}
Initially, it may seem surprising that  $| \Psi_{0}(\bar{\ell};0) \rangle $ is separable whereas $| \Psi_{1}(\bar{\ell};0) \rangle $ is entangled.  However, there is an asymmetry between 0 and 1 resulting from the fact that our Hamiltonian in some sense prefers $| 0 \rangle$ as the state of the ``blank" qubits that we are swapping through.  Figure \ref{GroundStatesFig} may serve as a useful visualization for some of the notation and properties encountered thus far.
\end{remark}

\begin{figure}[H]
\begin{center}
\includegraphics[width=4.5in]{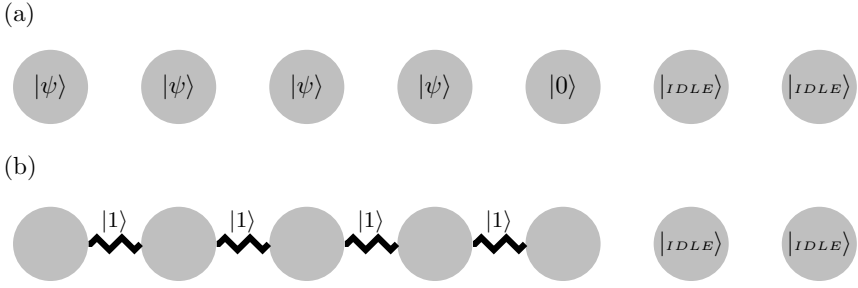}
\end{center}

\caption{When $\bar{\ell}=7$ and $j=2$ (a)  depicts $| \Psi_{0}(\bar{\ell};j) \rangle$  and (b) depicts $| \Psi_{1}(\bar{\ell};j) \rangle$.  Squiggly lines denote entanglement and the label $| 1 \rangle$ is meant to be an indication of property (ii).}
\label{GroundStatesFig}
\end{figure}

We now define a set that will turn out to form a basis for the ground state space of a length ${\bar{\ell}}$ segment without boundary projectors.  It is worth noting that this set differs substantially from the ground state space of the teleportation chain.  While the teleportation chain had a constant number of ground states $\bar{d} = 4$, the dimension of this set increases linearly as $\bar{d} = 2{\bar{\ell}}+1$.  We also set up a bit of additional notation that will be quite useful in our analysis moving forward.
\begin{definition}
For any ${\bar{\ell}} > 0$ we define a set $\Gamma_{{\bar{\ell}}}$ of cardinality $2{\bar{\ell}}+1$ by 
\[
\Gamma_{{\bar{\ell}}} = \bigcup_{j=0}^{{\bar{\ell}}-1} \left\{  | \Psi_{0}(\bar{\ell};j) \rangle \cup | \Psi_{1}(\bar{\ell};j) \rangle  \right\}  \cup \left\{ \ket{\scriptscriptstyle IDLE}^{\otimes {\bar{\ell}}} \right\}. 
\]
Observe that $\Gamma_{1} = \left\{ | 0 \rangle, | 1 \rangle, \ket{\scriptscriptstyle IDLE} \right\}$, i.e., our standard basis. We make note of the following partition of the identity operator acting on $\Gamma_{{\bar{\ell}}}$
\begin{align*}
\bar{P} & = \bar{P}(\bar{\ell};0) + \bar{P}(\bar{\ell};1\dots\bar{\ell}) \\
&= \bar{P}_{0}(\bar{\ell};0) + \bar{P}_{1}(\bar{\ell};0) + \bar{P}_{0}(\bar{\ell};1\dots\bar{\ell}-1) + \bar{P}_{1}(\bar{\ell};1\dots\bar{\ell}-1) + \bar{P}(\bar{\ell};\bar{\ell}).
\end{align*}
\end{definition}
We also define a set of transition operators that will appear in $\bar{h}$ and other Hamiltonian operators of interest.
\begin{definition}
\begin{align*}
& T_b(\bar{\ell};0,1\dots\bar{\ell}-1) = -\sum_{j=1}^{{\bar{\ell}}-1} \sin^{j} \theta \left( | \Psi_{b}(\bar{\ell};0) \rangle \langle \Psi_{b}(\bar{\ell};j)| + |\Psi_{b}(\bar{\ell};j) \rangle \langle \Psi_{b}(\bar{\ell};0) | \right), \\
& T(\bar{\ell};0,1\dots\bar{\ell}-1)= T_0(\bar{\ell};0,1\dots\bar{\ell}-1)  + T_1(\bar{\ell};0,1\dots\bar{\ell}-1) ,\\
& T_b(\bar{\ell};0,1\dots\bar{\ell}-2) = -\sum_{j=1}^{{\bar{\ell}}-2} \sin^{j} \theta \left( | \Psi_{b}(\bar{\ell};0) \rangle \langle \Psi_{b}(\bar{\ell};j)| + |\Psi_{b}(\bar{\ell};j) \rangle \langle \Psi_{b}(\bar{\ell};0) | \right), \\
& T(\bar{\ell};0,1\dots\bar{\ell}-2)= T_0(\bar{\ell};0,1\dots\bar{\ell}-2)  + T_1(\bar{\ell};0,1\dots\bar{\ell}-2) ,\\
& T_0(\bar{\ell};0,\bar{\ell}) = - \tan \theta \sin^{{\bar{\ell}}-1} \theta \left( | \Psi_{0}(\bar{\ell};0) \rangle \bra{\scriptscriptstyle IDLE}^{\otimes {\bar{\ell}}} + \ket{\scriptscriptstyle IDLE}^{\otimes {\bar{\ell}}} \langle \Psi_{0}(\bar{\ell};0) | \right),\\
& T_1(\bar{\ell};0,\bar{\ell})   =  - \cos \theta \sin^{{\bar{\ell}}} \theta \Big( \ket{0}\bra{1 } \otimes | \Psi_{1}(\bar{\ell};0) \rangle \bra{\scriptscriptstyle IDLE}^{\otimes {\bar{\ell}}} \\
& \hspace{2.0in} + \ket{1 } \bra{0} \otimes |{\scriptscriptstyle IDLE} \rangle^{\otimes {\bar{\ell}}} \langle \Psi_{1}(\bar{\ell};0) |  \Big),\\
& T_1(\bar{\ell};j;0,\bar{\ell})  =  - \cos \theta \sin^{{\bar{\ell}}} \theta \Big( |\Psi_{0}(\bar{\ell};j) \rangle \langle \Psi_{1}(\bar{\ell};j) | \otimes | \Psi_{1}(\bar{\ell};0) \rangle \bra{\scriptscriptstyle IDLE}^{\otimes {\bar{\ell}}} \\
& \hspace{2.0in} + |\Psi_{1}(\bar{\ell};j) \rangle \langle \Psi_{0}(\bar{\ell};j) | \otimes |{\scriptscriptstyle IDLE} \rangle^{\otimes {\bar{\ell}}} \langle \Psi_{1}(\bar{\ell};0) |  \Big).
\end{align*}

\end{definition}

Before proving that $\Gamma_{{\bar{\ell}}}$ is indeed a basis of ground states, we define an operator that describes $\bar{H} \otimes \textsf{I}^{\otimes{{\bar{\ell}}-2}}$ when one end is constrained to the space spanned by $\Gamma_{{\bar{\ell}}-1}$.  We then prove an important technical proposition that describes the structure of this operator in a fair bit of detail.

\begin{definition}
For any ${\bar{\ell}} \geq 2$ we define the operator $\bar{h}^{\Gamma_{{\bar{\ell}}-1}}$, which acts on $\Gamma_{1} \otimes \Gamma_{{\bar{\ell}}-1}$, by
\[
\bar{h}^{\Gamma_{{\bar{\ell}}-1}} = \sum_{| \mu \rangle \in \Gamma_{1} } \sum_{| \nu \rangle \in \Gamma_{{\bar{\ell}}-1} } \sum_{| \mu^\prime \rangle \in \Gamma_{1} } \sum_{| \nu^\prime \rangle \in \Gamma_{{\bar{\ell}}-1} } \left( \langle \mu |  \langle \nu | \bar{H} \otimes \textsf{I}^{\otimes{{\bar{\ell}}-2}} |  \mu^\prime \rangle  | \nu^\prime \rangle \right)  \left( | \mu \rangle  | \nu \rangle \langle  \mu^\prime | \langle \nu^\prime | \right).
\]
We will show that $\Gamma_{{\bar{\ell}}-1}$ is the ground state space of a segment of length ${\bar{\ell}}-1$; Fig. \ref{barhGamma} therefore provides an intuitive depiction of $\bar{h}^{\Gamma_{{\bar{\ell}}-1}}$.

\begin{figure}[H]
\begin{center}
\includegraphics[width=4.5in]{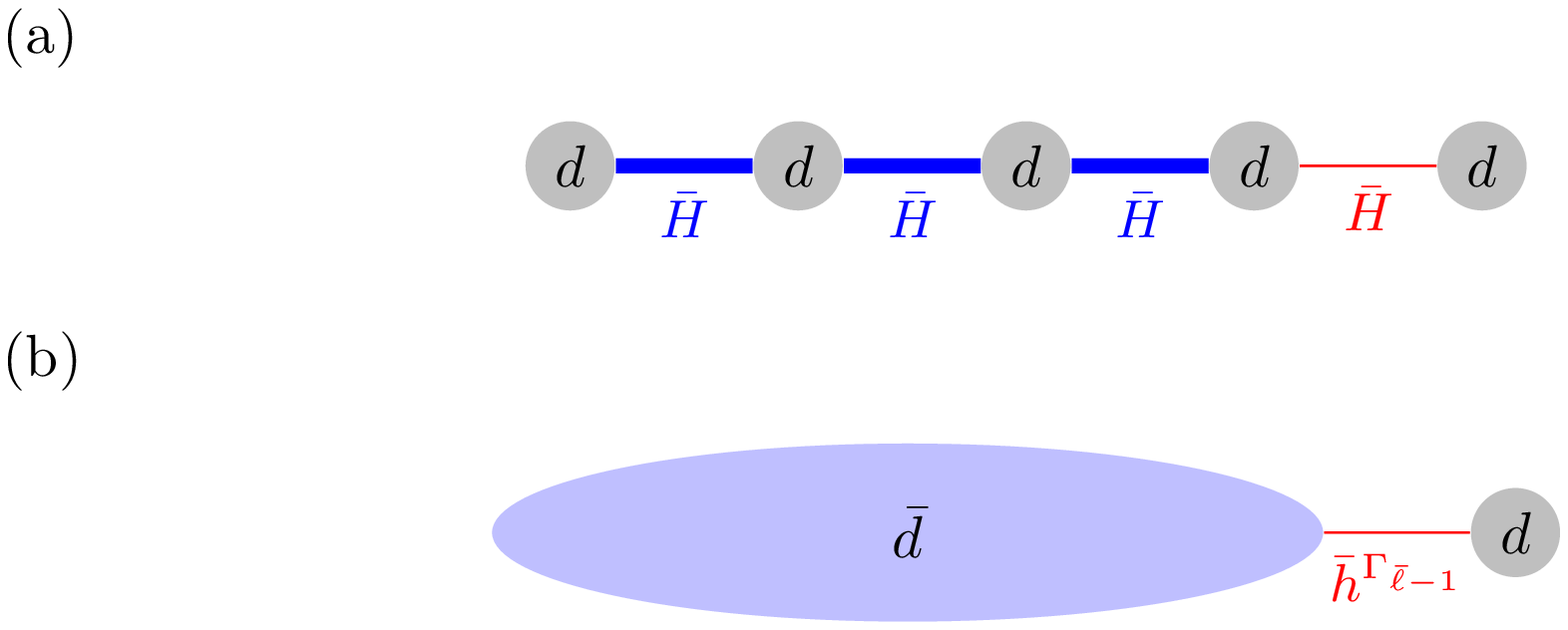}
\end{center}

\caption{$\bar{h}^{\Gamma_{{\bar{\ell}}-1}}$ in the case $\bar{\ell} = 5$.  We restrict our attention to the ground state space of operators represented by thick lines in (a) in order to arrive at the bond depicted in (b). }
\label{barhGamma}
\end{figure}

\end{definition}

\begin{proposition}
For ${\bar{\ell}} \geq 2$,
\begin{align*}
\bar{h}^{\Gamma_{{\bar{\ell}}-1}} & = | 0 \rangle \langle 0 | \otimes \cos^2 \theta \left( \tan^2 \theta \bar{P}(\bar{\ell}-1;0)  + \bar{P}(\bar{\ell}-1;1\dots\bar{\ell}-1)  \right. \\
& \left. + T(\bar{\ell}-1;0,1\dots\bar{\ell}-2) + T_0(\bar{\ell}-1;0,\bar{\ell}-1) \right)   \\
& + | 1 \rangle \langle 1 | \otimes\left( \bar{P}(\bar{\ell}-1;0)  + \cos^{2} \theta \bar{P}(\bar{\ell}-1;1\dots\bar{\ell}-1)  \right) +T_1(\bar{\ell}-1;0,\bar{\ell}-1).
\end{align*}
\label{matrix structure 1}
\end{proposition}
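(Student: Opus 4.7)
The plan is to decompose the swap coupling into its three constituent projectors, $\bar{H} = P_0 + P_1 + Q$ with $P_0 = (\sin\theta\,\ket{0}\ket{0} - \cos\theta\,\ket{0}\ket{{\scriptscriptstyle IDLE}})(\sin\theta\,\bra{0}\bra{0} - \cos\theta\,\bra{0}\bra{{\scriptscriptstyle IDLE}})$, $P_1 = (\sin\theta\,\ket{0}\ket{1} - \cos\theta\,\ket{1}\ket{{\scriptscriptstyle IDLE}})(\sin\theta\,\bra{0}\bra{1} - \cos\theta\,\bra{1}\bra{{\scriptscriptstyle IDLE}})$, and $Q = \ket{1}\bra{1}\otimes(\ket{0}\bra{0}+\ket{1}\bra{1})$, and to compute each contribution to $\bar{h}^{\Gamma_{\bar{\ell}-1}}$ separately. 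Since $\Gamma_{1}$ is the entire standard basis of $\mathbb{C}^{3}$, we have $\bar{h}^{\Gamma_{\bar{\ell}-1}} = (\textsf{I}\otimes\bar{P})(\bar{H}\otimes \textsf{I}^{\otimes{\bar{\ell}}-2})(\textsf{I}\otimes\bar{P})$, where $\bar{P}$ projects onto the span of $\Gamma_{\bar{\ell}-1}$, so the task reduces to understanding how each of the three projectors acts on the states of $\Gamma_{\bar{\ell}-1}$.

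The first structural observation is that $P_0 = \ket{0}\bra{0}\otimes \ket{\chi_0}\bra{\chi_0}$, with $\ket{\chi_0} = \sin\theta\,\ket{0}-\cos\theta\,\ket{{\scriptscriptstyle IDLE}}$, is separable across qutrits 1 and 2, so it contributes only to the $\ket{0}\bra{0}$-block on qutrit 1; likewise $Q$ contributes only to the $\ket{1}\bra{1}$-block. The $Q$-contribution is immediate: by property (i) of Proposition \ref{properties of history states} the states in $\bar{P}(\bar{\ell}-1;0)$ never begin with $\ket{{\scriptscriptstyle IDLE}}$ while those in $\bar{P}(\bar{\ell}-1;1\ldots\bar{\ell}-1)$ all do, so $(\ket{0}\bra{0}+\ket{1}\bra{1})\otimes \textsf{I}^{\otimes{\bar{\ell}}-2}$ restricted to $\Gamma_{\bar{\ell}-1}$ equals $\bar{P}(\bar{\ell}-1;0)$. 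For $P_0$, the overlap $\bra{\chi_0}$ evaluates to $\sin\theta$ on $\ket{0}$ and $-\cos\theta$ on $\ket{{\scriptscriptstyle IDLE}}$, so I would use properties (iii)--(iv) to expand $\ket{\Psi_b(\bar{\ell}-1;0)}$ explicitly and then invoke overlap formulas (v)--(vii) directly to identify the mixed overlaps. A geometric-series identity equivalent to $\cos^2\theta\sum_{j=0}^{n}\sin^{2j}\theta = 1-\sin^{2(n+1)}\theta$ consolidates the diagonal pieces within $\bar{P}(\bar{\ell}-1;0)$ and within $\bar{P}(\bar{\ell}-1;1\ldots\bar{\ell}-1)$, while the cross overlaps produce the transition operators $\cos^2\theta\,T_0(\bar{\ell}-1;0,1\ldots\bar{\ell}-2)$ and $\cos^2\theta\,T_0(\bar{\ell}-1;0,\bar{\ell}-1)$.

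The main obstacle is the $P_1$ term, because $\sin\theta\,\ket{0}\ket{1}-\cos\theta\,\ket{1}\ket{{\scriptscriptstyle IDLE}}$ is not separable and therefore contributes simultaneously to all four $\ket{i}\bra{j}$-blocks on qutrit 1. Splitting $P_1$ into sub-blocks indexed by qutrit 1, its $\ket{0}\bra{0}$-sub-block $\sin^2\theta\,\ket{0}\bra{0}\otimes\ket{1}\bra{1}$ selects only $\ket{\Psi_1(\bar{\ell}-1;0)}$ through its $\sin^{\bar{\ell}-2}\theta\,\ket{1}\ket{{\scriptscriptstyle IDLE}}^{\otimes{\bar{\ell}}-2}$ component (property (iv)); the resulting $\sin^{2\bar{\ell}-2}\theta$ contribution combines with the corresponding $P_0$ piece so that the $\sin^{2\bar{\ell}-2}\theta$ terms cancel telescopically, leaving precisely $\sin^2\theta\,\bar{P}(\bar{\ell}-1;0)=\cos^2\theta\tan^2\theta\,\bar{P}(\bar{\ell}-1;0)$. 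Its $\ket{1}\bra{1}$-sub-block $\cos^2\theta\,\ket{1}\bra{1}\otimes\ket{{\scriptscriptstyle IDLE}}\bra{{\scriptscriptstyle IDLE}}$ collapses, again by property (i), to $\cos^2\theta\,\bar{P}(\bar{\ell}-1;1\ldots\bar{\ell}-1)$, which combined with the $Q$-contribution recovers the full $\ket{1}\bra{1}$-block of the claim. Finally, the off-diagonal sub-blocks $-\sin\theta\cos\theta\,\ket{0}\bra{1}\otimes\ket{1}\bra{{\scriptscriptstyle IDLE}}$ and its Hermitian conjugate yield a non-vanishing matrix element only between $\ket{{\scriptscriptstyle IDLE}}^{\otimes\bar{\ell}-1}$ and the $\ket{1}\ket{{\scriptscriptstyle IDLE}}^{\otimes\bar{\ell}-2}$ component of $\ket{\Psi_1(\bar{\ell}-1;0)}$, producing exactly $T_1(\bar{\ell}-1;0,\bar{\ell}-1)$ with coefficient $-\cos\theta\sin^{\bar{\ell}-1}\theta$. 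Assembling these contributions gives the asserted decomposition.
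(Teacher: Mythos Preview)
Your approach is correct and essentially equivalent to the paper's: both arguments compute the same matrix elements using the same ingredients (properties (i)--(vii) of Proposition~\ref{properties of history states}), the only difference being organizational.  The paper partitions the calculation by the first-qutrit block $|\mu\rangle\langle\mu'|$ and, within each block, expands $\bar H$; you instead decompose $\bar H=P_0+P_1+Q$ up front and then read off the first-qutrit blocks from each projector.  Your splitting has the pleasant feature that the separability of $P_0$ and $Q$ is explicit, whereas the paper's block-first ordering makes the trick
\[
\bra{\nu}\bigl(\ket{0}\bra{0}+\ket{1}\bra{1}\bigr)\otimes\textsf{I}^{\otimes\bar\ell-2}\ket{\nu'}=\delta_{\nu,\nu'}\quad\text{for }\nu,\nu'\in\{\ket{\Psi_0(\bar\ell-1;0)},\ket{\Psi_1(\bar\ell-1;0)}\}
\]
appear once and get reused in both the $|1\rangle\langle 1|$ and $|0\rangle\langle 0|$ blocks, so the paper never needs the explicit geometric sum for $\langle\Psi_1(\bar\ell-1;0)|\,\ket 0\bra 0\otimes\textsf I\,|\Psi_1(\bar\ell-1;0)\rangle$ that your route requires.

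Two small slips to fix.  First, the cross-overlaps coming from the $-\sin\theta\cos\theta\,\ket 0\bra{{\scriptscriptstyle IDLE}}$ piece of $\ket{\chi_0}\bra{\chi_0}$ produce, via properties (v) \emph{and} (vi), the full $\cos^2\theta\,T(\bar\ell-1;0,1\ldots\bar\ell-2)=\cos^2\theta\bigl(T_0+T_1\bigr)(\bar\ell-1;0,1\ldots\bar\ell-2)$, not just the $T_0$ part as you wrote; since you cite property (vi) you clearly intend both, so this is only a notational typo.  Second, no geometric-series identity is needed ``within $\bar P(\bar\ell-1;1\ldots\bar\ell-1)$'': those basis states are orthonormal and all begin with $\ket{{\scriptscriptstyle IDLE}}$, so the $\cos^2\theta\,\ket{{\scriptscriptstyle IDLE}}\bra{{\scriptscriptstyle IDLE}}$ piece of $P_0$ immediately gives the diagonal contribution $\cos^2\theta\,\bar P(\bar\ell-1;1\ldots\bar\ell-1)$.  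The geometric sum is only used (implicitly) inside $\bar P(\bar\ell-1;0)$ to evaluate $\|\ket\alpha\|^2$ for the $\ket{\Psi_1(\bar\ell-1;0)}$ entry.
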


\begin{proof}
We begin by showing that there are no terms of the form $\ket{\scriptscriptstyle IDLE} \langle \mu^\prime | \otimes  | \nu \rangle \langle \nu^\prime |$ for any $| \mu^\prime \rangle \in \Gamma_{1}$ and $| \nu \rangle, | \nu^\prime \rangle \in \Gamma_{{\bar{\ell}}-1}$.  The matrix elements here look like $\bra{\scriptscriptstyle IDLE} \langle \nu | \bar{H} \otimes \textsf{I}^{\otimes{{\bar{\ell}}-2}} | \mu^\prime \rangle | \nu^\prime \rangle $, which must vanish since none of the rank one projectors defining 
\begin{align*}
\bar{H} = & ( \sin \theta | 0 \rangle \otimes | 0 \rangle - \cos \theta | 0  \rangle \otimes |{\scriptscriptstyle IDLE} \rangle ) ( \sin \theta \langle 0 | \otimes \langle 0 | - \cos \theta \langle 0  | \otimes \langle {\scriptscriptstyle IDLE} | ) \nonumber \\
 + &( \sin \theta | 0 \rangle \otimes | 1 \rangle - \cos \theta | 1  \rangle \otimes |{\scriptscriptstyle IDLE} \rangle ) ( \sin \theta \langle 0 | \otimes \langle 1 | - \cos \theta \langle 1  | \otimes \langle {\scriptscriptstyle IDLE} | ) \nonumber  \\
 +  & \ket{1} \bra{1} \otimes  \ket{0} \bra{0} +  \ket{1} \bra{1} \otimes  \ket{1} \bra{1}
\end{align*}
begins with $\ket{\scriptscriptstyle IDLE}$.  The same argument shows that there are no terms of the form $| \mu \rangle \bra{\scriptscriptstyle IDLE} \otimes  | \nu \rangle \langle \nu^\prime |$.

Next, we consider terms of the form $|0 \rangle \langle 1| \otimes | \nu \rangle \langle \nu^\prime |$ and note that the $|1 \rangle \langle 0|$ terms are handled by symmetry.  The matrix elements here look like $\langle 0 |  \langle \nu | \bar{H} \otimes \textsf{I}^{\otimes{{\bar{\ell}}-2}} | 1 \rangle | \nu^\prime \rangle $ for $| \nu \rangle, | \nu^\prime \rangle \in \Gamma_{{\bar{\ell}}-1}$.  Using the definition of $\bar{H}$ we rewrite these as
\[
- \left( \cos \theta \sin \theta \right) \langle \nu | \left( | 1 \rangle \bra{\scriptscriptstyle IDLE} \otimes \textsf{I}^{\otimes{{\bar{\ell}}-2}} \right) | \nu^\prime \rangle. 
\]
The only way $| \nu \rangle \in \Gamma_{{\bar{\ell}}-1}$ can start with a $|1 \rangle$ is if $| \nu \rangle = | \Psi_{1}(\bar{\ell}-1;0) \rangle$ and the only part of the state that accomplishes this looks like $\sin^{{\bar{\ell}}-2} \theta | 1 \rangle  \ket{\scriptscriptstyle IDLE}^{\otimes {\bar{\ell}}-2}$.  Thus, the matrix elements can be written as
\begin{align*}
- \left( \cos \theta \sin^{{\bar{\ell}}-1} \theta \right) \delta_{| \nu \rangle,  | \Psi_{1}(\bar{\ell}-1;0) \rangle }&  \bra{\scriptscriptstyle IDLE}^{\otimes {\bar{\ell}}-1} | \nu^\prime \rangle \\
& = - \left( \cos \theta \sin^{{\bar{\ell}}-1} \theta \right) \delta_{| \nu \rangle,  | \Psi_{1}(\bar{\ell}-1;0) \rangle }  \delta_{| \nu^\prime \rangle,  \ket{\scriptscriptstyle IDLE}^{\otimes {\bar{\ell}}-1 }}. 
\end{align*}
This produces $T_1(\bar{\ell}-1;0,\bar{\ell}-1)$ in $\bar{h}^{\Gamma_{{\bar{\ell}}-1}}$.

Now, we address terms of the form  $|1 \rangle \langle 1| \otimes | \nu \rangle \langle \nu^\prime |$.   The matrix elements here look like $\langle 1 |  \langle \nu | \bar{H} \otimes \textsf{I}^{\otimes{{\bar{\ell}}-2}} | 1 \rangle | \nu^\prime \rangle $.  Once again recalling the definition of $\bar{H}$, we express these as
\begin{align*}
 \langle \nu | \left( | 0 \rangle \langle0 | \otimes \textsf{I}^{\otimes{{\bar{\ell}}-2}} \right) | \nu^\prime \rangle & + \langle \nu | \left( | 1 \rangle \langle 1 | \otimes \textsf{I}^{\otimes{{\bar{\ell}}-2}} \right) | \nu^\prime \rangle \\
& + \cos^2 \theta   \langle \nu | \left( \ket{\scriptscriptstyle IDLE} \bra{\scriptscriptstyle IDLE} \otimes \textsf{I}^{\otimes{{\bar{\ell}}-2}} \right) | \nu^\prime \rangle.
\end{align*}
Using the properties in proposition \ref{properties of history states}, a simple but tedious calculation tells us that 
\begin{align*}
\langle \nu | \Big( | 0 \rangle \langle0 | & \otimes \textsf{I}^{\otimes{{\bar{\ell}}-2}} \Big) | \nu^\prime \rangle  + \langle \nu | \left( | 1 \rangle \langle 1 | \otimes \textsf{I}^{\otimes{{\bar{\ell}}-2}} \right) | \nu^\prime \rangle \\
&  = \delta_{| \nu \rangle,  | \Psi_{0}(\bar{\ell}-1;0) \rangle }  \delta_{| \nu^\prime \rangle,  | \Psi_{0}(\bar{\ell}-1;0) \rangle} + \delta_{| \nu \rangle,  | \Psi_{1}(\bar{\ell}-1;0) \rangle }  \delta_{| \nu^\prime \rangle,  | \Psi_{1}(\bar{\ell}-1;0) \rangle}
\end{align*}
Thus, $| 1 \rangle \langle 1 | \otimes\left( \bar{P}(\bar{\ell}-1;0)  + \cos^{2} \theta \bar{P}(\bar{\ell}-1;1\dots\bar{\ell}-1)  \right)$ appears in $\bar{h}^{\Gamma_{{\bar{\ell}}-1}}$.

Finally, consider the terms $|0 \rangle \langle 0| \otimes | \nu \rangle \langle \nu^\prime |$.   The matrix elements here take the form $\langle 0 |  \langle \nu | \bar{H} \otimes \textsf{I}^{\otimes{{\bar{\ell}}-2}} | 0 \rangle  | \nu^\prime \rangle $ for $| \nu \rangle, | \nu^\prime \rangle \in \Gamma_{{\bar{\ell}}-1}$.  These are a bit more complicated than in the previous cases.  Note that 
\begin{align*}
& \langle 0 |  \langle \nu | \bar{H} \otimes \textsf{I}^{\otimes{{\bar{\ell}}-2}} | 0 \rangle  | \nu^\prime \rangle  = \sin^{2} \theta \Big( \langle \nu | \left( | 0 \rangle \langle 0 | \otimes \textsf{I}^{\otimes{{\bar{\ell}}-2}} \right) | \nu^\prime \rangle   
\\
&\hspace{0.55in}  + \langle \nu | \left( | 1 \rangle \langle 1 | \otimes \textsf{I}^{\otimes{{\bar{\ell}}-2}} \right) | \nu^\prime \rangle \Big) + \cos^2 \theta   \langle \nu | \left( \ket{\scriptscriptstyle IDLE} \bra{\scriptscriptstyle IDLE} \otimes \textsf{I}^{\otimes{{\bar{\ell}}-2}} \right) | \nu^\prime \rangle 
\\
&\hspace{0.55in}  - \cos \theta \sin \theta \left(  \langle \nu | \left( | 0 \rangle \bra{\scriptscriptstyle IDLE} \otimes \textsf{I}^{\otimes{{\bar{\ell}}-2}} \right) | \nu^\prime \rangle +  \langle \nu | \left( \ket{\scriptscriptstyle IDLE} \langle 0 | \otimes \textsf{I}^{\otimes{{\bar{\ell}}-2}} \right) | \nu^\prime \rangle \right).
\end{align*}
Our analysis of the previous case implies that the $\sin^2 \theta$ term evaluates to
\begin{align*}
&\sin^2 \theta \left( \delta_{| \nu \rangle,  | \Psi_{0}(\bar{\ell}-1;0) \rangle }  \delta_{| \nu^\prime \rangle,  | \Psi_{0}(\bar{\ell}-1;0) \rangle} + \delta_{| \nu \rangle,  | \Psi_{1}(\bar{\ell}-1;0) \rangle }  \delta_{| \nu^\prime \rangle,  | \Psi_{1}(\bar{\ell}-1;0) \rangle} \right).
\end{align*}
The $\cos^2 \theta$ term is straightforward, so it only remains to determine the $\cos \theta \sin \theta$ term.  We will consider the matrix elements 
\[
- \cos \theta \sin \theta  \langle \nu | \left( | 0 \rangle \bra{\scriptscriptstyle IDLE} \otimes \textsf{I}^{\otimes{{\bar{\ell}}-2}} \right) | \nu^\prime \rangle ;
\]
the other term follows by Hermitian conjugation.  In order for such a matrix element to be non-zero it is necessary that $| \nu \rangle \in \left\{ | \Psi_{0}(\bar{\ell}-1;0) \rangle, | \Psi_{1}(\bar{\ell}-1;0) \rangle \right\}$.  We only consider the case where $| \nu \rangle = | \Psi_{0}(\bar{\ell}-1;0) \rangle$ as the other case is analogous. These terms vanish unless either $| \nu^{\prime} \rangle = | \Psi_{0}(\bar{\ell}-1;j) \rangle$ for some $1 \leq j \leq {\bar{\ell}}-2$ or $| \nu^{\prime} \rangle = \ket{\scriptscriptstyle IDLE}^{\otimes \bar{\ell}-1}$.  The values of the non-vanishing elements are given by properties (v), (vi), and (vii) of proposition \ref{properties of history states}.
\end{proof}

\begin{remark}
Since the notation utilized in proposition \ref{matrix structure 1} is a bit heavy, we provide an illustration in Fig. \ref{blocks1}.  This proposition amounts to establishing that $\bar{h}^{\Gamma_{{\bar{\ell}}-1}}$ has the block structure on the left.  In the basis adopted in (\ref{blockdiagonalized}) below, $\bar{h}^{\Gamma_{{\bar{\ell}}-1}}$ acquires the block structure on the right, where the two blocks in the upper left corner are both a constant times $V(\bar{\ell})$ from definition \ref{special matrices}. 
\end{remark}

\begin{figure}[H]
\begin{center}
\includegraphics[width=4.5in]{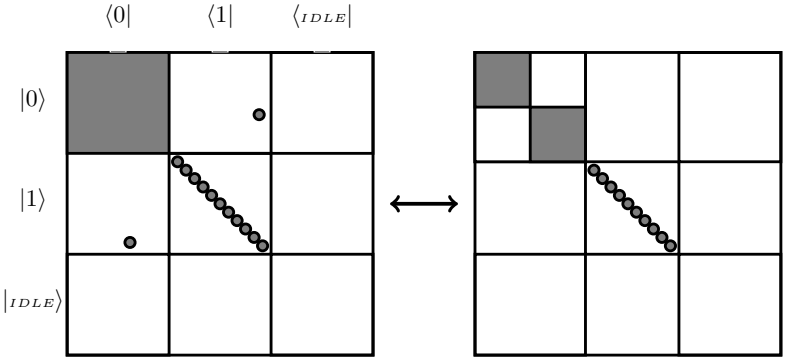}
\end{center}
\caption{The operator $\bar{h}^{\Gamma_{{\bar{\ell}}-1}}$  for two choices of basis.  The matrix on the right is obtained from that on the left by swapping two rows/columns in order to move the single non-zero element in the $|0 \rangle \langle 1|$ and  $|1 \rangle \langle 0|$ block near the bottom right corner of the $|0 \rangle \langle 0|$ block, and then possibly permuting basis vectors in the $|0 \rangle \langle 0|$ block, depending on their initial ordering.}
\label{blocks1}
\end{figure}
We are now ready to conclude this section with a lemma establishing the fact that $\Gamma_{{\bar{\ell}}}$ completely captures the ground state space of a length ${\bar{\ell}}$ segment without its boundary projectors.

\begin{lemma}
For any ${\bar{\ell}} \geq 2$, $\Gamma_{{\bar{\ell}}}$ is an orthonormal basis of ground states of $\bar{H}^S$. 
\label{the ground states}
\end{lemma}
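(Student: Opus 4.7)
The plan is to proceed by induction on $\bar{\ell}$, following step (a) of the strategy in Section 2.4. For any segment of length $\bar{\ell} \ge 3$, decompose $\bar{H}^S = \bar{H} \otimes \textsf{I}^{\otimes \bar{\ell}-2} + \textsf{I} \otimes \bar{H}^S_{\bar{\ell}-1}$, where $\bar{H}^S_{\bar{\ell}-1}$ is the segment Hamiltonian on the rightmost $\bar{\ell}-1$ qutrits; by frustration-freeness a ground state must be annihilated by both summands. The base case $\bar{\ell}=2$ reduces to finding the kernel of $\bar{H}$ itself: since $\bar{H}$ is an orthogonal projector onto four linearly independent vectors in $\mathbb{C}^9$, its kernel has dimension $5$, and one checks directly that $\Gamma_2 = \{\ket{\psi_0}, \ket{\psi_1}, \ket{\scriptscriptstyle IDLE}\ket{0}, \ket{\scriptscriptstyle IDLE}\ket{1}, \ket{\scriptscriptstyle IDLE}\ket{\scriptscriptstyle IDLE}\}$ is orthonormal and lies in this kernel.

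For the inductive step, assume $\Gamma_{\bar{\ell}-1}$ is an orthonormal basis of ground states of $\bar{H}^S_{\bar{\ell}-1}$. Annihilation by $\textsf{I} \otimes \bar{H}^S_{\bar{\ell}-1}$ forces a ground state, when expanded in the first qutrit's standard basis, to have each of its three coefficient states on the last $\bar{\ell}-1$ qutrits lying in $\mathrm{span}(\Gamma_{\bar{\ell}-1})$. So the ground-state space sits inside $\Gamma_1 \otimes \mathrm{span}(\Gamma_{\bar{\ell}-1})$, which has dimension $3(2\bar{\ell}-1)$, and the remaining constraint $\bar{H} \otimes \textsf{I}^{\otimes \bar{\ell}-2}$ restricted to this subspace is, by definition, the operator $\bar{h}^{\Gamma_{\bar{\ell}-1}}$ analyzed in Proposition \ref{matrix structure 1}. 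It therefore suffices to identify $\ker(\bar{h}^{\Gamma_{\bar{\ell}-1}})$ with $\mathrm{span}(\Gamma_{\bar{\ell}})$.

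Reordering the basis as in Figure \ref{blocks1}, I split $\bar{h}^{\Gamma_{\bar{\ell}-1}}$ into four invariant subspaces: (i) $\ket{\scriptscriptstyle IDLE} \otimes \mathrm{span}(\Gamma_{\bar{\ell}-1})$, on which the operator vanishes entirely since every rank-one term of $\bar{H}$ begins with $\ket{0}$ or $\ket{1}$, contributing $2\bar{\ell}-1$ kernel directions; (ii) the $\bar{\ell}$-dimensional block spanned by $\{\ket{0}\ket{\Psi_0(\bar{\ell}-1;j)}\}_{j=0}^{\bar{\ell}-2} \cup \{\ket{0}\ket{\scriptscriptstyle IDLE}^{\otimes \bar{\ell}-1}\}$, whose entries read off from Proposition \ref{matrix structure 1} coincide with $\cos^2\theta \cdot V(\bar{\ell})$; (iii) a parallel $\bar{\ell}$-dimensional block on $\{\ket{0}\ket{\Psi_1(\bar{\ell}-1;j)}\}_{j=0}^{\bar{\ell}-2} \cup \{\ket{1}\ket{\scriptscriptstyle IDLE}^{\otimes \bar{\ell}-1}\}$, where the cross-term $T_1(\bar{\ell}-1;0,\bar{\ell}-1)$ supplies exactly the missing last row and column needed to reproduce another copy of $\cos^2\theta \cdot V(\bar{\ell})$; and (iv) a strictly positive diagonal block on the remaining $2(\bar{\ell}-1)$ basis vectors $\ket{1}\ket{\Psi_b(\bar{\ell}-1;j)}$ with $j<\bar{\ell}-1$. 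By Proposition \ref{special matrices}, each $V(\bar{\ell})$ has a one-dimensional kernel with eigenvector $(1, \sin\theta, \ldots, \sin^{\bar{\ell}-2}\theta, \tan\theta \sin^{\bar{\ell}-2}\theta)^\dagger$; properties (iii) and (iv) of Proposition \ref{properties of history states} then translate these eigenvectors (up to a $\cos\theta$ scale) back into $\ket{\Psi_0(\bar{\ell};0)}$ and $\ket{\Psi_1(\bar{\ell};0)}$, while the $2\bar{\ell}-1$ kernel vectors in (i) are precisely $\{\ket{\Psi_b(\bar{\ell};j)} : 1 \le j \le \bar{\ell}-1,\, b \in \mathbb{F}_2\} \cup \{\ket{\scriptscriptstyle IDLE}^{\otimes \bar{\ell}}\}$. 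Summing gives $\dim \ker \bar{h}^{\Gamma_{\bar{\ell}-1}} = (2\bar{\ell}-1)+1+1 = 2\bar{\ell}+1 = |\Gamma_{\bar{\ell}}|$, matching $\Gamma_{\bar{\ell}}$ element by element.

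Orthonormality of $\Gamma_{\bar{\ell}}$ I handle separately: vectors with distinct shift indices $j \ne j'$ are orthogonal because property (i) forbids any term of $\ket{\Psi_b(\bar{\ell}-\min(j,j');0)}$ from beginning with $\ket{\scriptscriptstyle IDLE}$; vectors $\ket{\Psi_0(\bar{\ell};j)}$ and $\ket{\Psi_1(\bar{\ell};j)}$ at the same $j$ are orthogonal by property (ii) (their expansions contain different numbers of $\ket{1}$'s); and unit normalization is preserved through the defining recursion since $\{\ket{\psi_0}, \ket{\psi_1}\}$ is itself orthonormal. The main obstacle will be the bookkeeping in the inductive step, namely checking that the diagonal entries $\sin^2\theta$ and $\cos^2\theta$, the off-diagonal entries $-\cos^2\theta \sin^j\theta$ supplied by $T(\bar{\ell}-1;0,1\dots\bar{\ell}-2)$, and the boundary entries $-\cos\theta \sin^{\bar{\ell}-1}\theta$ supplied by $T_0$ and $T_1$ all line up exactly with the form of $V(\bar{\ell})$ given in its definition; once that alignment is confirmed, the spectral counting reduces immediately to Proposition \ref{special matrices}.
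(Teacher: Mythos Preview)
Your proposal is correct and follows essentially the same inductive argument as the paper: reduce to the kernel of $\bar{h}^{\Gamma_{\bar{\ell}-1}}$, block-decompose using Proposition~\ref{matrix structure 1}, recognize two copies of $\cos^2\theta\,V(\bar{\ell})$, and invoke Proposition~\ref{special matrices}. The only organizational difference is that the paper first verifies directly from the recursion that $|\Psi_0(\bar{\ell};0)\rangle$ and $|\Psi_1(\bar{\ell};0)\rangle$ are annihilated by $\bar{H}^S$ and then merely \emph{counts} the kernel dimension, whereas you compute the explicit kernel eigenvector of $V(\bar{\ell})$ and translate it back; for block~(iii) this translation via property~(iv) is a nontrivial identity that the paper's route sidesteps, so you may find it cleaner to swap in the paper's ``annihilated by construction, then count'' argument there.
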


\begin{proof}
We prove this using induction.  For ${\bar{\ell}}=2$,
\[
\Gamma_{2} =  \left\{ | \psi_{0} \rangle, | \psi_{1} \rangle, | {\scriptscriptstyle IDLE} \rangle \otimes |0 \rangle, | {\scriptscriptstyle IDLE} \rangle \otimes |1 \rangle, | {\scriptscriptstyle IDLE} \rangle \otimes |{\scriptscriptstyle IDLE}  \rangle \right\}
\]
is the set that is annihilated by (\ref{barHswap}), so suppose that our statement holds for ${\bar{\ell}}-1$.  The orthogonality of the states follows from properties (i), (ii), and (iii) of proposition \ref{properties of history states} so we begin by showing that $\Gamma_{{\bar{\ell}}}$ is comprised of zero eigenstates of $\bar{H}^S$.  First we consider states in $\left\{ \ket{\scriptscriptstyle IDLE} \right\} \otimes \Gamma_{{\bar{\ell}}-1}$.  These are annihilated by $\textsf{I}^{\otimes{{\bar{\ell}}-j-1}} \otimes \bar{H} \otimes \textsf{I}^{\otimes{j-1}}$ for $ 1\leq j \leq {\bar{\ell}}-2$ by the inductive hypothesis and by $\bar{H} \otimes \textsf{I}^{\otimes{{\bar{\ell}}-2}}$ since none of the projectors defining $\bar{H}$ begins with $\ket{\scriptscriptstyle IDLE}$.  Since
\[
\left\{ \ket{\scriptscriptstyle IDLE} \right\} \otimes \Gamma_{{\bar{\ell}}-1} =   \bigcup_{j=1}^{{\bar{\ell}}-1} \left\{  | \Psi_{0}(\bar{\ell};j) \rangle \cup | \Psi_{1}(\bar{\ell};j) \rangle  \right\}  \cup \left\{ \ket{\scriptscriptstyle IDLE}^{\otimes {\bar{\ell}}} \right\},
\]
it only remains to prove that the history states $| \Psi_{0}(\bar{\ell};0) \rangle $ and $| \Psi_{1}(\bar{\ell};0) \rangle$ are also ground states. We do this for $| \Psi_{0}(\bar{\ell};0) \rangle $ since the proof for $| \Psi_{1}(\bar{\ell};0) \rangle$ is identical.  $| \Psi_{0}(\bar{\ell};0) \rangle $ is annihilated by $\bar{H} \otimes \textsf{I}^{\otimes{{\bar{\ell}}-2}}$ by construction, since it begins in either $| \psi_{0} \rangle $ or $| \psi_{1} \rangle$, both of which are orthogonal to $\bar{H}$.  We now show that  $| \Psi_{0}(\bar{\ell};0) \rangle $ is annihilated by $\textsf{I}^{\otimes{{\bar{\ell}}-j-1}} \otimes \bar{H} \otimes \textsf{I}^{\otimes{j-1}}$ for $1 \le j \leq {\bar{\ell}}-2$.  Fix a generic $| \Psi_{0} (\bar{\ell}-1;0) \rangle = \alpha | 0 \rangle | \mu \rangle +  \beta | 1 \rangle | \nu \rangle$.  By the recursive definition of $| \Psi_{0} (\bar{\ell};0) \rangle$,
along with some straightforward manipulations, we can write
\begin{align*}
| \Psi_{0} (\bar{\ell};0) \rangle & = \alpha | \psi_{0} \rangle | \mu \rangle +  \beta |  \psi_{1} \rangle | \nu \rangle \\
& = 
\cos \theta | 0 \rangle | \Psi_{0} (\bar{\ell}-1;0) \rangle + \sin \theta \left( \alpha | 0 \rangle \ket{\scriptscriptstyle IDLE} | \mu \rangle  + \beta | 1 \rangle \ket{\scriptscriptstyle IDLE} | \nu \rangle \right). 
\end{align*}
The $\cos \theta$ term is annihilated by $\textsf{I}^{\otimes{{\bar{\ell}}-j-1}} \otimes \bar{H} \otimes \textsf{I}^{\otimes{j-1}}$ by the inductive hypothesis.  If $j < {\bar{\ell}}-2$ then the $\sin \theta$ term is also annihilated by the inductive hypothesis.  If $j = {\bar{\ell}}-2$ then it is instead annihilated by the fact that none of the rank one projectors defining $\bar{H}$ begins with $\ket{\scriptscriptstyle IDLE}$.

We must now show that these are the only zero eigenstates of $\bar{H}^S$. The inductive hypothesis tells us that $\Gamma_{{\bar{\ell}}-1}$ is a full orthonormal basis for the ground state space of a length ${\bar{\ell}}-1$ segment.  In light of this, and the fact that $\bar{H}^S$ is a sum of positive projection matrices, it suffices to show that the operator $\bar{h}^{\Gamma_{{\bar{\ell}}-1}}$ has $2{\bar{\ell}}+1$ zero eigenstates.  We know that the set $\left\{ \ket{\scriptscriptstyle IDLE} \right\} \otimes \Gamma_{{\bar{\ell}}-1}$ contributes $2{\bar{\ell}}-1$ of these so we must show that it has two more zero eigenstates when restricted to $\left\{| 0 \rangle, | 1 \rangle \right\} \otimes \Gamma_{{\bar{\ell}}-1}$.  For this, we use proposition \ref{matrix structure 1} in order to partition 
\begin{align}
& \bar{h}^{\Gamma_{{\bar{\ell}}-1}} =  \Big[ | 0 \rangle \langle 0 | \otimes \cos^2 \theta \left( \tan^2 \theta \bar{P}_{0}(\bar{\ell}-1;0)  + \bar{P}_{0}(\bar{\ell}-1;1\dots\bar{\ell}-2)  \right. \label{blockdiagonalized}\\
&\hspace{0.5in}  \left. + \bar{P}(\bar{\ell}-1;\bar{\ell}-1) 
+ T_0(\bar{\ell}-1;0,1\dots\bar{\ell}-2)  +T_0(\bar{\ell}-1;0,\bar{\ell}-1)  \right) \Big]  \nonumber\\
& +\Big[ | 0 \rangle \langle 0 | \otimes \cos^2 \theta \left( \tan^2 \theta \bar{P}_{1}(\bar{\ell}-1;0) +\bar{P}_{1}(\bar{\ell}-1;1\dots\bar{\ell}-2)   \right. \nonumber\\
&\hspace{0.25in}  \left. +  T_1(\bar{\ell}-1;0,1\dots\bar{\ell}-2)\right) + | 1 \rangle \langle 1 | \otimes \cos^2 \theta \bar{P}(\bar{\ell}-1;\bar{\ell}-1)  + T_1(\bar{\ell}-1;0,\bar{\ell}-1) \Big] \nonumber\\
& +  \Big[| 1 \rangle \langle 1 | \otimes\left( \bar{P}(\bar{\ell}-1;0)  + \cos^{2} \theta \bar{P}_{0}(\bar{\ell}-1;1\dots\bar{\ell}-2)  \right. \nonumber\\
& \left. \hspace{2.95in} + \cos^{2} \theta \bar{P}_{1}(\bar{\ell}-1;1\dots\bar{\ell}-2) \right)\Big].\nonumber
\end{align}
Each of the three bracketed expressions acts on a disjoint portion of the space and can be thought of as a distinct block.  The final bracketed expression clearly has no zero eigenstates over its part of the space.  Writing the first bracketed expression out in the basis
\[
\left\{ |0\rangle | \Psi_0(\bar{\ell}-1;0)\rangle \right\}\cup \bigcup_{j=1}^{{\bar{\ell}}-2} \left\{  | 0 \rangle | \Psi_0(\bar{\ell}-1;j) \rangle  \right\} \cup \left\{ | 0 \rangle \ket{\scriptscriptstyle IDLE}^{\otimes {\bar{\ell}}-1} \right\}.
\]
yields $\cos^2 \theta V({\bar{\ell}})$.  Writing the second bracketed expression out in the basis
\[
\left\{|0\rangle | \Psi_1(\bar{\ell}-1;0)\rangle \right\} \cup \bigcup_{j=1}^{{\bar{\ell}}-2} \left\{  | 0 \rangle | \Psi_1(\bar{\ell}-1;j) \rangle  \right\}   \cup \left\{ | 1 \rangle \ket{\scriptscriptstyle IDLE}^{\otimes {\bar{\ell}}-1} \right\}.
\]
yields $\cos^2 \theta V({\bar{\ell}})$ as well.  Proposition \ref{special matrices} shows that each block has a single zero eigenstate, completing the proof.
\end{proof}

\subsection{The renormalized swap chain and decaying correlations}
\label{4SS3}

In this section, we begin by defining $\bar{h}$ as in (\ref{barhfigure}), an operator that can be thought of as a two-sided version of  $\bar{h}^{\Gamma_{{\bar{\ell}}-1}}$.  We prove a technical proposition describing the structure of $\bar{h}$.  This will be analogous to our treatment of $\bar{h}^{\Gamma_{{\bar{\ell}}-1}}$.

\begin{proposition}
Letting $a_{j} = \cos^2 \theta$ for $0 \leq j \leq {\bar{\ell}}-2$, and $a_{{\bar{\ell}}-1}=1$, we have
\begin{align}  \label{barh}
\bar{h}  = \sum_{j=0}^{{\bar{\ell}}-1} & a_{j} \Big( | \Psi_{0}(\bar{\ell};j) \rangle   \langle \Psi_{0}(\bar{\ell};j) |\otimes \\
& \cos^2 \theta \left( \tan^2 \theta \bar{P}(\bar{\ell};0)+ \bar{P}(\bar{\ell};1\dots\bar{\ell}) + T(\bar{\ell};0,1\dots\bar{\ell}-1)  + T_0(\bar{\ell};0,\bar{\ell})  \right)  \nonumber \\ 
 +& \left. | \Psi_{1}(\bar{\ell};j) \rangle \langle \Psi_{1}(\bar{\ell};j) | \otimes\left( \bar{P}(\bar{\ell};0) + \cos^{2} \theta \bar{P}(\bar{\ell};1\dots\bar{\ell}) \right) + T_1(\bar{\ell};j;0,\bar{\ell})  \right). \nonumber
\end{align}
\label{matrix structure 2}
\end{proposition}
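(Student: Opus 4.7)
The strategy is to perform the ground-space projection in two stages: first on the right segment, where Proposition \ref{matrix structure 1} does the work, and then on the left, which reduces to computing a handful of reduced density matrices of history states. Concretely, since $\bar{P}^L=\bar{P}\otimes\textsf{I}^{\otimes\bar{\ell}}$ and $\bar{P}^R=\textsf{I}^{\otimes\bar{\ell}}\otimes\bar{P}$ act on disjoint sets of spins and therefore commute, we may rewrite
\[
\bar{h}=(\bar{P}\otimes\textsf{I}^{\otimes\bar{\ell}})\,\bigl(\textsf{I}^{\otimes\bar{\ell}-1}\otimes\bar{h}^{\Gamma_{\bar{\ell}}}\bigr)\,(\bar{P}\otimes\textsf{I}^{\otimes\bar{\ell}}),
\]
where $\bar{h}^{\Gamma_{\bar{\ell}}}$ is Proposition \ref{matrix structure 1} with $\bar{\ell}$ replaced by $\bar{\ell}+1$ and its virtual free qutrit identified with the rightmost spin of the left segment. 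Proposition \ref{matrix structure 1} then yields
\[
\bar{h}^{\Gamma_{\bar{\ell}}}=|0\rangle\langle 0|\otimes R_0+|1\rangle\langle 1|\otimes R_1+T_1(\bar{\ell};0,\bar{\ell}),
\]
where $R_0,R_1$ are precisely the right-segment operators appearing inside the sum of (\ref{barh}). No $|{\scriptscriptstyle IDLE}\rangle$-components appear on the virtual qutrit because no rank-one projector defining $\bar{H}$ in (\ref{barHswap}) begins with $|{\scriptscriptstyle IDLE}\rangle$.

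The core step is then to establish that, for $b,b'\in\{0,1\}$,
\[
\bar{P}\bigl(\textsf{I}^{\otimes\bar{\ell}-1}\otimes|b\rangle\langle b'|\bigr)\bar{P}=\sum_{j=0}^{\bar{\ell}-1}a_j\,|\Psi_b(\bar{\ell};j)\rangle\langle\Psi_{b'}(\bar{\ell};j)|.
\]
I expand in the orthonormal basis $\Gamma_{\bar{\ell}}$ provided by Lemma \ref{the ground states}. For $\alpha,\alpha'\in\Gamma_{\bar{\ell}}$ the matrix element equals $\langle b'|\rho_R^{(\alpha,\alpha')}|b\rangle$, where $\rho_R^{(\alpha,\alpha')}=\mathrm{Tr}_{1,\ldots,\bar{\ell}-1}(|\alpha'\rangle\langle\alpha|)$ is the reduced density on the rightmost spin. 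Cross-terms vanish by an IDLE-alignment argument: whenever $\alpha$ and $\alpha'$ carry leading IDLE runs of different length, some inner product $\langle\text{non-IDLE}|{\scriptscriptstyle IDLE}\rangle=0$ appears at the first disagreeing position by property (i) of Proposition \ref{properties of history states}; pairings involving $|{\scriptscriptstyle IDLE}\rangle^{\otimes\bar{\ell}}$ force $\rho_R$ into $|{\scriptscriptstyle IDLE}\rangle\langle\cdot|$ or $|\cdot\rangle\langle{\scriptscriptstyle IDLE}|$, killing the matrix element against any $|b\rangle\langle b'|$ with $b,b'\in\{0,1\}$. Only the diagonal-in-$j$ pairs $\alpha=|\Psi_b(\bar{\ell};j)\rangle$, $\alpha'=|\Psi_{b'}(\bar{\ell};j)\rangle$ contribute, and a short computation using properties (iii) and (iv) of Proposition \ref{properties of history states} yields $\rho_R$ of $|\psi\rangle\langle\psi|$ type with matrix element $\cos^2\theta$ for $j<\bar{\ell}-1$, and $\rho_R=|b'\rangle\langle b|$ with matrix element $1$ for $j=\bar{\ell}-1$, matching the asserted coefficients $a_j$.

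Combining the two steps, sandwiching each of the three pieces of $\bar{h}^{\Gamma_{\bar{\ell}}}$ by $\bar{P}\otimes\textsf{I}^{\otimes\bar{\ell}}$ and invoking the identity of the previous step distributes every $|b\rangle\langle b'|$ on the rightmost spin of the left segment into $\sum_j a_j|\Psi_b(\bar{\ell};j)\rangle\langle\Psi_{b'}(\bar{\ell};j)|$. The diagonal-in-$b$ contributions reproduce the $|\Psi_b(\bar{\ell};j)\rangle\langle\Psi_b(\bar{\ell};j)|\otimes R_b$ summands of (\ref{barh}), while the $|0\rangle\langle 1|$ and $|1\rangle\langle 0|$ parts of $T_1(\bar{\ell};0,\bar{\ell})$ combine into $\sum_j a_j\,T_1(\bar{\ell};j;0,\bar{\ell})$, completing the match. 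The main obstacle is the case analysis in the second step: one has to verify that every cross pairing among the $2\bar{\ell}+1$ states of $\Gamma_{\bar{\ell}}$ contributes nothing. Each check, however, reduces to a single $\langle{\scriptscriptstyle IDLE}|\text{non-IDLE}\rangle=0$ at the first distinguishing position, so the verification is mechanical once the bookkeeping is in place.
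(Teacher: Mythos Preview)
Your proof is correct and rests on the same ingredients as the paper's: the IDLE-alignment argument to obtain the diagonal-in-$j$ block structure, Proposition~\ref{matrix structure 1} for the inner structure of each block, and the reduced density on the rightmost spin of the left segment to produce the coefficients $a_j$. The organization differs slightly: where the paper computes matrix elements of $\textsf{I}^{\otimes\bar{\ell}-1}\otimes\bar{H}\otimes\textsf{I}^{\otimes\bar{\ell}-1}$ directly in $\Gamma_{\bar{\ell}}\otimes\Gamma_{\bar{\ell}}$, establishing the outer block structure first and then invoking ``the proof of Proposition~\ref{matrix structure 1}'' for the inner blocks, you factor the two projections and apply Proposition~\ref{matrix structure 1} (shifted $\bar{\ell}\to\bar{\ell}+1$) as a black box on the right, reducing the remaining work to the single clean identity $\bar{P}(\textsf{I}^{\otimes\bar{\ell}-1}\otimes|b\rangle\langle b'|)\bar{P}=\sum_j a_j|\Psi_b(\bar{\ell};j)\rangle\langle\Psi_{b'}(\bar{\ell};j)|$. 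This is a tidier packaging of the same computation and makes the origin of the $a_j$ more transparent, but it is not a substantively different route.
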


\begin{proof}
First, we note that there are no terms in $\bar{h}$ of the form $\ket{\scriptscriptstyle IDLE}^{\otimes {\bar{\ell}}} \langle \mu | \otimes | \nu \rangle \langle \nu^\prime|$ or $| \mu \rangle \bra{\scriptscriptstyle IDLE}^{\otimes {\bar{\ell}}} \otimes | \nu \rangle \langle \nu^\prime|$ for $| \mu \rangle, | \nu \rangle, | \nu^\prime \rangle \in \Gamma_{{\bar{\ell}}}$.   The matrix elements here vanish due to the fact that no terms in $\bar{H}$ begin with $\ket{\scriptscriptstyle IDLE}$.  Thus,  
\begin{align*}
\bar{h} = \sum_{\nu} \sum_{\nu^\prime}   \sum_{j=0}^{{\bar{\ell}}-1} \sum_{b=0}^1 \sum_{j^\prime=0}^{{\bar{\ell}}-1} \sum_{b'=0}^1   \Big( \langle \Psi_{b}(\bar{\ell};j) | \langle \nu |  \Big( \textsf{I}^{\otimes{{\bar{\ell}}-1}} \otimes \bar{H}& \otimes \textsf{I}^{\otimes{{\bar{\ell}}-1}} \Big)   |\Psi_{b^\prime}(\bar{\ell};j^\prime) \rangle | \nu^\prime \rangle\Big) \\
&  \left( | \Psi_{b}(\bar{\ell};j) \rangle | \nu \rangle \langle \Psi_{b^\prime}(\bar{\ell};j^\prime) | \langle \nu^\prime | \right).
\end{align*}
We claim that  $j \neq j^\prime$ implies $\langle \Psi_{b}(\bar{\ell};j) | \langle \nu | \left( \textsf{I}^{\otimes{{\bar{\ell}}-1}} \otimes \bar{H} \otimes \textsf{I}^{\otimes{{\bar{\ell}}-1}}  \right)  |\Psi_{b^\prime}(\bar{\ell};j^\prime) \rangle | \nu^\prime \rangle  = 0$.
WLOG assume that $j>j^\prime$.  We can cancel $j^\prime$ leading $\bra{\scriptscriptstyle IDLE}$'s on the left with $j^\prime$ leading $\ket{\scriptscriptstyle IDLE}$'s on the right to obtain
\[
\left( \bra{\scriptscriptstyle IDLE}^{\otimes j-j^\prime} \langle \Psi_{b}({\bar{\ell}}-j;0) | \langle \nu | \right) \left( \textsf{I}^{\otimes{{\bar{\ell}}-1-j^\prime}} \otimes \bar{H} \otimes \textsf{I}^{\otimes{{\bar{\ell}}-1}}  \right) \left( | \Psi_{b'}(\bar{\ell}-j^\prime;0) \rangle | \nu^\prime \rangle \right).
\]
If $j^\prime < {\bar{\ell}}-1$ then this vanishes due to property (i) in proposition \ref{properties of history states}.  If $j^\prime={\bar{\ell}}-1$ then $j=\ell$ and this instead vanishes since no terms in $\bar{H}$ begin with $\ket{\scriptscriptstyle IDLE}$.  This establishes an outer block structure of the form
\begin{align*}
\bar{h} =  \sum_{j=0}^{{\bar{\ell}}-1} &  \sum_{b=0}^1 \sum_{b'=0}^1 |\Psi_{b}(\bar{\ell};j) \rangle \langle \Psi_{b^\prime}(\bar{\ell};j) | \otimes  \\
&  \sum_{\nu} \sum_{\nu^\prime}  
\left( \langle \Psi_{b}(\bar{\ell};j) | \langle \nu | \left( \textsf{I}^{\otimes{{\bar{\ell}}-1}} \otimes \bar{H} \otimes \textsf{I}^{\otimes{{\bar{\ell}}-1}} \right)  |\Psi_{b^\prime}(\bar{\ell};j) \rangle | \nu^\prime \rangle\right) \left( | \nu \rangle \ \langle \nu^\prime | \right).
\end{align*}

The inner block structure (i.e., the fact that the non-zero blocks are as asserted) follows by the proof of proposition \ref{matrix structure 1}.  The extra factor of $\cos^{2} \theta$ in most of the blocks results from history states with ${\bar{\ell}}>1$ having amplitude $\sin \theta$ ending in $\ket{\scriptscriptstyle IDLE}$, which gets annihilated by the operator $\textsf{I}^{\otimes{{\bar{\ell}}-1}} \otimes \bar{H} \otimes \textsf{I}^{\otimes{{\bar{\ell}}-1}}$.
\end{proof}

\begin{remark}
Unpacking the notation a bit, proposition \ref{matrix structure 2} tells us that $\bar{h}$ is block diagonal with blocks of size $4{\bar{\ell}}+2$ and that each of these blocks looks like the non-trivial portion of the operator obtained in proposition \ref{matrix structure 1} up to a multiplicative positive constant.  In fact, all of the blocks are the same except the last one, which differs by a multiplicative constant $\tfrac{1}{\cos^2 \theta}$.  Finally, $\bar{h}$ is almost block diagonal with blocks of size $2{\bar{\ell}}+1$, seeing as the only impediment to this are the operators $T_1(\bar{\ell};j;0,\bar{\ell})$, which are exponentially small in ${\bar{\ell}}$.  These and a few other exponentially small elements will be discarded in order to form the $\tilde{h}$ that we desire in order to establish decaying correlations.  Fig. \ref{blocks2} may provide additional clarity.  
\end{remark}

\begin{figure}[H]
\begin{center}
\includegraphics[width=4.5in]{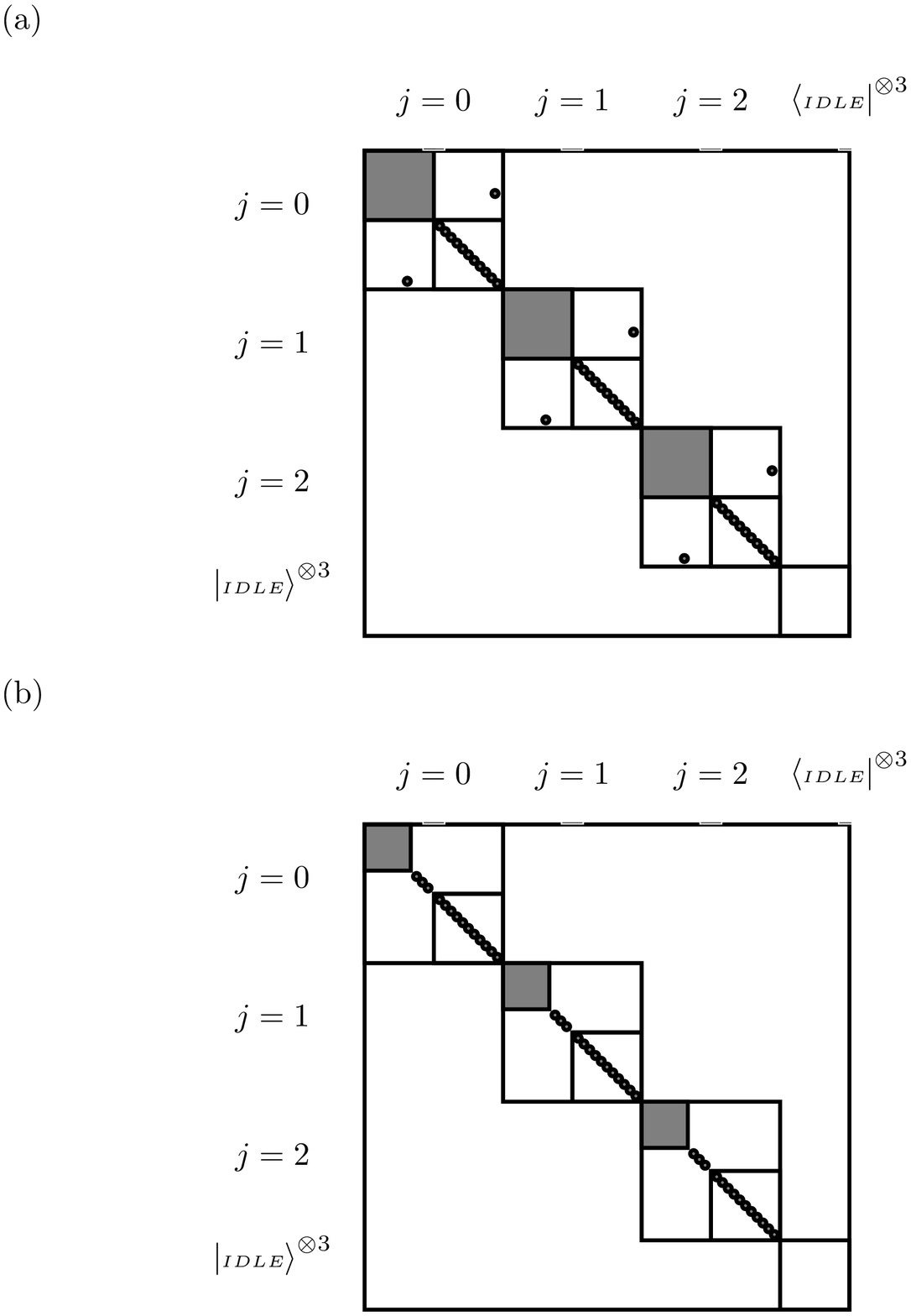}
\end{center}

\caption{(a) The operator $\bar{h}$ when $\bar{l}=3$.  The $j=2$ block differs from the $j=0$ and $j=1$ blocks by a multiplicative constant.  The blocks are like those displayed on the left of Fig. \ref{blocks1}.  \\ (b) The operator $\tilde{h}$ when $\bar{l}=3$.  It is obtained from $\bar{h}$ by dropping the solitary elements in the off-diagonal blocks as well as off-diagonal elements in three rows/columns of the solid blocks.}
\label{blocks2}
\end{figure}

\begin{proposition}
The smallest eigenvalue of $\bar{h}$ is 0 while the second smallest is $\cos^{4} \theta$.
\label{local gap}
\end{proposition}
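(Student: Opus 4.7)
The plan is to chain together the block decomposition in proposition \ref{matrix structure 2} with the spectral information in proposition \ref{special matrices}, plus one explicit small-matrix diagonalization. First I observe that any state whose first segment equals $\ket{\scriptscriptstyle IDLE}^{\otimes \bar{\ell}}$ lies in the kernel of $\bar{h}$, since no projector defining $\bar{H}$ begins with $\ket{\scriptscriptstyle IDLE}$, and proposition \ref{matrix structure 2} shows the rest of $\bar{h}$ is block-diagonal in the index $j$ of the first-segment state $\ket{\Psi_b(\bar{\ell};j)}$. So the problem reduces to analyzing one $j$-block of dimension $4\bar{\ell}+2$ at a time, each carrying a prefactor $a_j$.

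Inside a given $j$-block I would split into three independent pieces determined by the structure of $M_0$, $M_1$, and the sole cross term $T_1(\bar{\ell};j;0,\bar{\ell})$. The piece $\alpha$ pairs first segment $\ket{\Psi_0(\bar{\ell};j)}$ with the second-segment $b=0$ states and $\ket{\scriptscriptstyle IDLE}^{\otimes\bar{\ell}}$; analogously to the proof of lemma \ref{the ground states}, it realizes $a_j \cos^2\theta\,V(\bar{\ell}+1)$, whose spectrum is $\{0,a_j\cos^2\theta,\dots,a_j\cos^2\theta,a_j\}$ by proposition \ref{special matrices}. The piece $\gamma_0$ is $\ket{\Psi_1(\bar{\ell};j)}\bra{\Psi_1(\bar{\ell};j)}\otimes M_1$ restricted to second-segment states other than $\ket{\scriptscriptstyle IDLE}^{\otimes\bar{\ell}}$; it is diagonal with eigenvalues $a_j$ (multiplicity $2$, from $\bar{P}(\bar{\ell};0)$) and $a_j\cos^2\theta$ (remaining multiplicity).

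The interesting third piece is the $(\bar{\ell}+1)$-dimensional block $\delta$ combining the $b=1$ second-segment part (which in isolation would realize $a_j\cos^2\theta\,W(\bar{\ell})$ and thus carry an eigenvalue of order $\sin^{2\bar{\ell}}\theta$) with the single state $\ket{\Psi_1(\bar{\ell};j)}\ket{\scriptscriptstyle IDLE}^{\otimes\bar{\ell}}$, to which $T_1(\bar{\ell};j;0,\bar{\ell})$ couples it at strength $-a_j\cos\theta\sin^{\bar{\ell}}\theta$. I would diagonalize $\delta$ by hand: an eigenvector $(v_0,v_1,\dots,v_{\bar{\ell}-1},w)^T$ with eigenvalue $\lambda \neq a_j\cos^2\theta$ satisfies $v_k=\tfrac{a_j\cos^2\theta\sin^k\theta}{a_j\cos^2\theta-\lambda}v_0$ and $w=\tfrac{a_j\cos\theta\sin^{\bar{\ell}}\theta}{a_j\cos^2\theta-\lambda}v_0$, and substituting into the row-$0$ equation reduces, using the telescoping identity $\cos^2\theta\sum_{k=1}^{\bar{\ell}-1}\sin^{2k}\theta+\sin^{2\bar{\ell}}\theta=\sin^2\theta$, to the clean equation $\lambda(\lambda-a_j)=0$. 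Together with a $(\bar{\ell}-1)$-dimensional eigenspace at $\lambda=a_j\cos^2\theta$ (obtained by taking $v_0=0$ subject to a single linear constraint from row $0$), the spectrum of $\delta$ is $\{0,a_j,a_j\cos^2\theta,\dots,a_j\cos^2\theta\}$.

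Combining the three pieces over all $j$, the smallest eigenvalue of $\bar{h}$ is $0$ and every nonzero eigenvalue is at least $a_j\cos^2\theta$, which equals $\cos^4\theta$ for $j\le\bar{\ell}-2$ and $\cos^2\theta\ge\cos^4\theta$ for $j=\bar{\ell}-1$, so the second smallest is $\cos^4\theta$. The main obstacle is the $\delta$ block: a naive look at $W(\bar{\ell})$ suggests an eigenvalue of order $\sin^{2\bar{\ell}}\theta$ that would ruin the bound, and the substance of the proof is that the $T_1$ coupling, though itself exponentially small, is exactly tuned so that the telescoping identity collapses this would-be small eigenvalue all the way down to $0$, which is precisely the additional ground state of $\bar{h}$ demanded by the ground-state count of lemma \ref{the ground states}.
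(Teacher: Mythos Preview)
Your proof is correct and follows essentially the same route as the paper: the same three-way block decomposition of each $j$-block into what you call $\alpha$, $\delta$, $\gamma_0$ (this is precisely the reorganization in equation \eqref{blockdiagonalized} adapted to $\bar h$), followed by proposition \ref{special matrices}. The one place you do more work than necessary is the explicit diagonalization of $\delta$: in your basis $\{|\Psi_0(\bar\ell;j)\rangle|\Psi_1(\bar\ell;j')\rangle\}_{j'=0}^{\bar\ell-1}\cup\{|\Psi_1(\bar\ell;j)\rangle|{\scriptscriptstyle IDLE}\rangle^{\otimes\bar\ell}\}$ the block $\delta$ is exactly $a_j\cos^2\theta\,V(\bar\ell+1)$ (the last off-diagonal entry $-a_j\cos\theta\sin^{\bar\ell}\theta$ equals $-a_j\cos^2\theta\cdot t s^{\bar\ell-1}$), so proposition \ref{special matrices} already gives its spectrum $\{0,a_j\cos^2\theta,\dots,a_j\cos^2\theta,a_j\}$ without the hand computation.
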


\begin{proof}
Combine proposition \ref{matrix structure 2} with the last part of the proof of lemma \ref{the ground states} that further decomposes some of the blocks of $\bar{h}$ into sub-blocks.  We are left with either diagonal sub-blocks or sub-blocks that are constant multiples of $V({\bar{\ell}}+1)$.  The spectrum can then be determined with proposition \ref{special matrices}.  
\end{proof}

\begin{lemma}
Let $a_{j} = \cos^2 \theta$ for $0 \leq j \leq {\bar{\ell}}-2$, and $a_{{\bar{\ell}}-1}=1$, and define the operator
\begin{align*}
\tilde{h} = \sum_{j=0}^{{\bar{\ell}}-1} a_{j} \Big( | \Psi_{0}(\bar{\ell};j) \rangle &  \langle \Psi_{0}(\bar{\ell};j) | \otimes \\
& \cos^2 \theta \left( \tan^2 \theta \bar{P}(\bar{\ell};0) + \bar{P}(\bar{\ell};1\dots\bar{\ell}) + T(\bar{\ell};0,1\dots\bar{\ell}-2)  \right)  \\
+ | \Psi_{1}(\bar{\ell};j) \rangle & \langle \Psi_{1}(\bar{\ell};j) |  \otimes\left( \bar{P}(\bar{\ell};0) + \cos^{2} \theta \bar{P}(\bar{\ell};1\dots\bar{\ell}) \right) \Big).
\end{align*}
Then $\tilde{h} \geq 0$ and satisfies $\left\Vert \bar{h}- \tilde{h}\right\Vert_{2} \leq 2 \cos \theta \sin^{{\bar{\ell}}-1} \theta$.  Furthermore, $\tilde{h}$ commutes with its neighbors, i.e., $\left[\tilde{h} \otimes \bar{P},\bar{P} \otimes \tilde{h}\right] = 0$.  Finally, $\tilde{h} \otimes \bar{P} + \bar{P} \otimes \tilde{h} $ has at most $6{\bar{\ell}}+1$ eigenvalues below $\cos^4 \theta$ in the $(2 \bar{\ell}+1)^3$ space of ground states onto which $\bar{P}^{\otimes 3}$ projects.
\label{the big guns}
\end{lemma}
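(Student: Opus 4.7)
The plan is to establish the lemma's four assertions in turn, leveraging the factorized form
\begin{equation*}
\tilde{h} \;=\; \sum_{b \in \{0,1\}} \sum_{j=0}^{\bar{\ell}-1} a_j\, P_{b,j} \otimes G_b,
\end{equation*}
where $P_{b,j} = |\Psi_b(\bar{\ell};j)\rangle\langle\Psi_b(\bar{\ell};j)|$ and $G_0, G_1$ denote the second-tensor factors read off from the definition of $\tilde{h}$; the critical feature is that $G_b$ depends only on $b$, not on $j$. For $\tilde{h} \geq 0$, it suffices to show each $G_b \geq 0$: $G_1$ is manifestly a non-negative combination of projectors, while decomposing $G_0$ by the $b$-index on its acting segment yields a $b$-block of the form $\cos^2\theta\,(W(\bar{\ell}-1) \oplus I)$ (together with an additional $\cos^2\theta$ on $\ket{\scriptscriptstyle IDLE}^{\otimes \bar{\ell}}$), which is positive semi-definite by Proposition \ref{special matrices}.

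For the norm bound, the difference $\bar{h} - \tilde{h}$ comprises three kinds of dropped terms per outer $j$-block: $a_j\cos^2\theta\, T_0(\bar{\ell};0,\bar{\ell})$, the $j'=\bar{\ell}-1$ piece of $a_j\cos^2\theta\, T(\bar{\ell};0,1\ldots\bar{\ell}-1)$, and $T_1(\bar{\ell};j;0,\bar{\ell})$. A direct computation combining the first two (which share support on segment 2) produces a single-block contribution of norm at most $\cos\theta\sin^{\bar{\ell}-1}\theta$, and the third adds $\cos\theta\sin^{\bar{\ell}}\theta$ by the triangle inequality. Outer $j$-blocks being orthogonal, the overall norm equals the maximum over $j$, giving $\cos\theta\sin^{\bar{\ell}-1}\theta(1+\sin\theta) \leq 2\cos\theta\sin^{\bar{\ell}-1}\theta$.

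For the commutator, expansion gives
\begin{equation*}
[\tilde{h}\otimes\bar{P},\,\bar{P}\otimes\tilde{h}] \;=\; \sum_{b,j,b',j'} a_j a_{j'}\, P_{b,j}^{(1)} \otimes [G_b^{(2)},\,P_{b',j'}^{(2)}] \otimes G_{b'}^{(3)}.
\end{equation*}
Orthogonality of the segment-1 projectors and linear independence of $G_0, G_1$ on segment 3 reduce vanishing to $\sum_{j'} a_{j'}[G_b,P_{b',j'}] = 0$ for every pair $(b,b')$. The case $b=1$ is immediate since $G_1$ is diagonal in $\Gamma_{\bar{\ell}}$. For $b=0$, the only non-diagonal piece of $G_0$ is $T(\bar{\ell};0,1\ldots\bar{\ell}-2)$, which couples only states indexed by $j' \in \{0,\ldots,\bar{\ell}-2\}$ within each $b'$-sector; because $a_{j'}=\cos^2\theta$ is constant on this index range and $T_{b'}$ preserves the corresponding subspace, $\cos^2\theta\,[T_{b'},\sum_{j'=0}^{\bar{\ell}-2} P_{b',j'}] = 0$, while the $j'=\bar{\ell}-1$ term commutes trivially since $T_{b'}$ does not touch $|\Psi_{b'}(\bar{\ell};\bar{\ell}-1)\rangle$.

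For the eigenvalue count, commutativity plus positivity force any sum eigenvalue below $\cos^4\theta$ to come from both summand eigenvalues being small. Splitting by the state of segment 1: when segment 1 is $\ket{\scriptscriptstyle IDLE}^{\otimes \bar{\ell}}$, $\tilde{h}\otimes\bar{P}$ vanishes and the sum reduces to $\tilde{h}^{(2,3)}$, whose spectrum contains $2\bar{\ell}+1$ zero eigenvalues (segment 2 IDLE) plus at most $2\bar{\ell}$ small positive eigenvalues from the $(j_2,b_2=0)$ blocks where $G_0$'s minimum eigenvalue scaled by $a_{j_2}$ drops below $\cos^4\theta$---at most $4\bar{\ell}+1$ in total. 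When segment 1 is $|\Psi_{b_1}(\bar{\ell};j_1)\rangle$, the operator reads $a_{j_1}G_{b_1}^{(2)}\otimes \textsf{I}^{(3)} + \tilde{h}^{(2,3)}$. The $b_1=1$ channel is inert because $\min\mathrm{spec}(a_{j_1}G_1) \geq \cos^4\theta$ already. For $b_1=0$, only states lying in $G_0$'s two-dimensional minimum-eigenvalue subspace can stay below threshold; within this subspace, the restriction of $\tilde{h}^{(2,3)}$ decouples into independent $\cos^2\theta\, G_{b_2}^{(3)}$ actions on segment 3 (using that $\sum_{j_2=0}^{\bar{\ell}-2} a_{j_2}\,|\langle v_{b_2}^{(s)}|\Psi_{b_2}(\bar{\ell};j_2)\rangle|^2 = \cos^2\theta$ by the normalization of the minimum-eigenvalue eigenvectors), and only its $b_2=0$ branch pairs minimum eigenstates on segments 2 and 3 to give sums below $\cos^4\theta$---2 per $j_1$, hence $2\bar{\ell}$ in all. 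The grand total is at most $(4\bar{\ell}+1)+2\bar{\ell} = 6\bar{\ell}+1$. The principal obstacle is this last step, where $G_0$'s non-diagonality in $\Gamma_{\bar{\ell}}$ forces working in the minimum-eigenvalue basis, and the normalization identity above is precisely what unlocks the decoupling to $G_{b_2}^{(3)}$.
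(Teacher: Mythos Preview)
Your proposal is correct and follows essentially the same strategy as the paper's proof: the factorization $\tilde h=\sum_{b,j}a_jP_{b,j}\otimes G_b$, the $W(\bar\ell-1)$ block structure of $G_0$ via Proposition~\ref{special matrices}, the commutator reduction to $\sum_{j'}a_{j'}[G_b,P_{b',j'}]$ with the key observation that $a_{j'}$ is constant on the support of $T(\bar\ell;0,1\ldots\bar\ell-2)$, and the segment-1 casework for the eigenvalue count all mirror the paper. The only presentational differences are (i) your norm bound combines two of the dropped terms and adds the third by the triangle inequality, whereas the paper bounds the eight nonzero entries by the Frobenius norm in one stroke, and (ii) for the $b_1=0$ blocks you invoke the commutativity just established to restrict to $G_0$'s two-dimensional minimum-eigenvalue subspace, while the paper writes out the explicit $R_0/R_1/R_{\scriptscriptstyle IDLE}$ block decomposition and observes that on the $R_0$ block the operator is literally of the form $A\otimes I+I\otimes B$; these are the same argument viewed from two angles.
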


\begin{proof}
During the course of this proof we set $\tilde{h} = \sum_j \sum_b a_{j} | \Psi_{b}(\bar{\ell};j) \rangle \langle \Psi_{b}(\bar{\ell};j) | \otimes Q_{b}$ for $Q_{0} = \cos^2 \theta \left( \tan^2 \theta \bar{P}(\bar{\ell};0) + \bar{P}(\bar{\ell};1\dots\bar{\ell}) + T(\bar{\ell};0,1\dots\bar{\ell}-2)  \right)$ and $Q_{1}= \bar{P}(\bar{\ell};0) + \cos^{2} \theta \bar{P}(\bar{\ell};1\dots\bar{\ell}) $.  We also employ the decomposition $Q_0 = \cos^2 \theta \left( R_0 +R_1 + R_{{\scriptscriptstyle IDLE}} \right)$ where
\[
R_b = \tan^2 \theta \bar{P}_{b}(\bar{\ell};0)  + \bar{P}_{b}(\bar{\ell};1\dots\bar{\ell})  - | \Psi_{b}(\bar{\ell};\bar{\ell}-1)\rangle \langle \Psi_{b}(\bar{\ell};\bar{\ell}-1)| + T_b(\bar{\ell};0,1\dots\bar{\ell}-2)
\]
\[
R_{{\scriptscriptstyle IDLE}} = | \Psi_{0}(\bar{\ell};\bar{\ell}-1)\rangle \langle \Psi_{0}(\bar{\ell};\bar{\ell}-1)| + | \Psi_{1}(\bar{\ell};\bar{\ell}-1)\rangle \langle \Psi_{1}(\bar{\ell};\bar{\ell}-1)| + \bar{P}(\bar{\ell};\bar{\ell}).
\]
Note that this effectively partitions $Q_{0}$ into three distinct blocks.  $R_0$ and $R_1$ each act on a space of dimension ${\bar{\ell}}-1$ while the remaining block $R_{{\scriptscriptstyle IDLE}}$  acts on a space of dimension three.  See Fig. \ref{blocks3} for a simple depiction of this decomposition.

\begin{figure}[H]
\begin{center}
\includegraphics[width=4.5in]{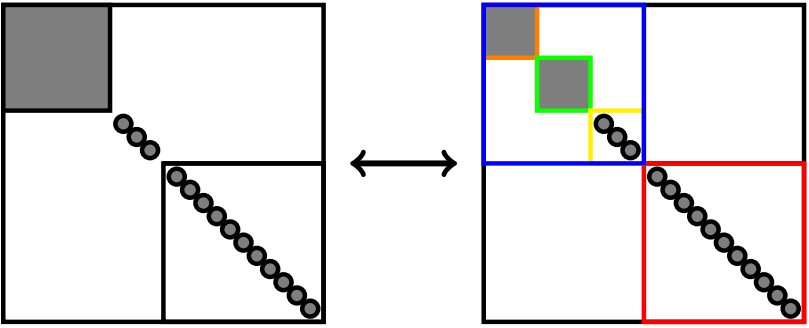}
\end{center}

\caption{Under an appropriate choice of basis, the nontrivial blocks of $\tilde{h}$ can be further decomposed as above.  The blue block represents $Q_{0}$ while the red block represents $Q_{1}$.   We observe that $Q_{0}$ can then be decomposed once again into $R_{0}$ (brown), $R_{1}$ (green), and $R_{{\scriptscriptstyle IDLE}}$ (yellow).}
\label{blocks3}
\end{figure}

We begin by showing that $\tilde{h} \geq 0$.  For later use, we will completely describe its spectrum.  By definition, it is block diagonal with $2{\bar{\ell}}$ non-zero blocks and a single zero block, contributing $2{\bar{\ell}}+1$ zero eigenvalues.  Of the $2\bar{\ell}$ non-zero blocks, ${\bar{\ell}}$ blocks have the form $a_{j} Q_1 $; they are all diagonal and have eigenvalues no smaller than $\cos^4 \theta$.  The remaining ${\bar{\ell}}$ blocks have the form $a_{j} Q_0 = a_j \cos^2 \theta \left( R_0 +R_1 + R_{{\scriptscriptstyle IDLE}} \right)$.  Each of these ${\bar{\ell}}$ blocks has two blocks of the form $a_{j} \cos^{2} \theta R_{b} = a_{j} \cos^{2} \theta W({\bar{\ell}}-1)$ and a diagonal block with entries $a_{j} \cos^{2} \theta$.  Thus, using proposition \ref{special matrices}, we know the entire spectrum.

Now set $\bar{k} = \bar{h} - \tilde{h}$.  This operator can be thought of as being block diagonal with blocks of size $4{\bar{\ell}}+2$.  This means that its operator norm is the max operator norm of its blocks.  In particular, 
\begin{align*}
 \left\Vert \bar{k} \right\Vert_{2} = \max_{j} |a_{j}|& \Big\Vert | \Psi_{0}(\bar{\ell};j) \rangle  \langle \Psi_{0}(\bar{\ell};j) |  \otimes \cos^2 \theta \Big(T_0(\bar{\ell};0,\bar{\ell})  \\
&  +  T(\bar{\ell};0,1\dots\bar{\ell}-1) -T(\bar{\ell};0,1\dots\bar{\ell}-2) \Big) + T_1(\bar{\ell};j;0,\bar{\ell}) \Big\Vert_{2} \\
& \leq 2 \cos \theta \sin^{{\bar{\ell}}-1} \theta,
\end{align*}
where we bounded the operator norm of this matrix by its Frobenius norm since it has only eight non-zero entries.

Next, we show commutativity.  A straightforward calculation allows us to write
\[
\left[\tilde{h} \otimes \bar{P},\bar{P} \otimes \tilde{h}\right] = \sum_{j=0}^{{\bar{\ell}}-1} \sum_{b=0}^{1} a_{j} | \Psi_{b}(\bar{\ell};j) \rangle \langle \Psi_{b}(\bar{\ell};j) | \otimes \left[ Q_{b} \otimes \bar{P}, \tilde{h} \right].
\]
In other words, the operators commute if and only if the corresponding blocks commute.  But
\[
\left[Q_{b'} \otimes \bar{P}, \tilde{h} \right] = \sum_{j=0}^{{\bar{\ell}}-1} \sum_{b=0}^{1} a_{j} \left[ Q_{b'}, | \Psi_{b}(\bar{\ell};j) \rangle \langle \Psi_{b}(\bar{\ell};j) | \right] \otimes Q_{b}
\]
since $\bar{P} Q_{b} = Q_{b}\bar{P}   = Q_{b}$.  When $b'=1$ this vanishes termwise since $Q_{1}$ is diagonal with respect to the basis $\Gamma_{\bar{\ell}}$.  When $b'=0$ we can calculate 
\begin{align*}
\left[Q_{0} \otimes \bar{P}, \tilde{h} \right] & = \cos^2 \theta \sum_{j=0}^{{\bar{\ell}}-1} \sum_{b=0}^{1} a_{j} \left[ T(\bar{\ell};0,1\dots\bar{\ell}-2), | \Psi_{b}(\bar{\ell};j) \rangle \langle \Psi_{b}(\bar{\ell};j) | \right] \otimes Q_{b} \\
& = \cos^4 \theta \sum_{b=0}^{1} \sum_{j=0}^{{\bar{\ell}}-2} \left[T_b(\bar{\ell};0,1\dots\bar{\ell}-2), | \Psi_{b}(\bar{\ell};j) \rangle \langle \Psi_{b}(\bar{\ell};j) | \right] \otimes Q_{b}, 
\end{align*}
where we only have to take the sum up to ${\bar{\ell}}-2$ from the definition of $T(\bar{\ell};0,1\dots\bar{\ell}-2)$, which causes $a_j = \cos^2 \theta$ for all the terms.  Pulling the sum over $j$ inside the commutator, we obtain the value 0, since the projector $\sum_{j=0}^{{\bar{\ell}}-2} | \Psi_{b}(\bar{\ell};j) \rangle \langle \Psi_{b}(\bar{\ell};j) |$ behaves just like the identity operator with respect to $T_b(\bar{\ell};0,1\dots\bar{\ell}-2)$.

Finally, we show that $\tilde{h} \otimes \bar{P} + \bar{P} \otimes \tilde{h}$ does not have too many small eigenvalues. We can write
\begin{align*}
\tilde{h} \otimes  \bar{P} & + \bar{P} \otimes \tilde{h} = \sum_{j=0}^{{\bar{\ell}}-1} \sum_{b=0}^1  a_{j} |\Psi_{b}(\bar{\ell};j) \rangle \langle \Psi_{b}(\bar{\ell};j) | \otimes Q_{b} \otimes \bar{P} \\
&  \hspace{0.5in} +  \sum_{j=0}^{{\bar{\ell}}-1} \sum_{b=0}^1 |\Psi_{b}(\bar{\ell};j) \rangle \langle \Psi_{b}(\bar{\ell};j) | \otimes \tilde{h} + |{\scriptscriptstyle IDLE} \rangle^{\otimes {\bar{\ell}}} \bra{\scriptscriptstyle IDLE}^{\otimes {\bar{\ell}}} \otimes \tilde{h} \\
                                                &  = \sum_{j=0}^{{\bar{\ell}}-1} \sum_{b=0}^1 |\Psi_{b}(\bar{\ell};j) \rangle   \langle \Psi_{b}(\bar{\ell};j) |  \otimes \left( a_{j} \left( Q_{b} \otimes \bar{P} \right) + \tilde{h} \right)  + |{\scriptscriptstyle IDLE} \rangle^{\otimes {\bar{\ell}}} \bra{\scriptscriptstyle IDLE}^{\otimes {\bar{\ell}}} \otimes \tilde{h}.
\end{align*}
This leaves us with $2{\bar{\ell}}+1$ blocks to analyze, which fall into the following three categories:
\begin{enumerate}
    \item ${\bar{\ell}}-1$ blocks of the form $\cos^2 \theta \left( Q_{0} \otimes \bar{P} \right) + \tilde{h}$ and one block of the form $ Q_{0} \otimes \bar{P}  + \tilde{h}$.
    \item ${\bar{\ell}}-1$ blocks of the form $\cos^2 \theta \left( Q_{1} \otimes \bar{P} \right) + \tilde{h}$ and one block of the form $ Q_{1} \otimes \bar{P}  + \tilde{h}$.
    \item One block of the form $\tilde{h}$.
\end{enumerate}

We begin with the block of type 3. We described the spectrum of $\tilde{h}$ while proving that $\tilde{h} \geq 0$ so this block contributes $2{\bar{\ell}}+1$ zero eigenvalues and at most $2{\bar{\ell}}$ more eigenvalues below $\cos^4 \theta$.  Now consider the blocks of type 2.  Since $\tilde{h} \geq 0$, the eigenvalues are all at least as large as those of $\cos^2 \theta \left( Q_{1} \otimes \bar{P} \right)$.  But $Q_{1}$ is diagonal with eigenvalues no smaller than $\cos^2 \theta$.  This means that, aside from blocks of type 1, we have at most $4{\bar{\ell}}+1$ eigenvalues below $\cos^4 \theta$.  Thus, it is sufficient to show that each block of type 1 contributes at most two eigenvalues below $\cos^4 \theta$.  We do this for a block of the form $\cos^2 \theta \left( Q_{0} \otimes \bar{P} \right) + \tilde{h}$ since $ Q_{0} \otimes \bar{P}  + \tilde{h}$ can be treated analogously.

We use our decomposition for $Q_0$ and a slightly modified one for $\tilde{h}$ in order to write
\begin{align*}
\cos^2 \theta \left( Q_{0} \otimes \bar{P} \right)+\tilde{h} = &\hspace{0.15in} \left( \cos^4 \theta R_0 \otimes \bar{P}+ \sum_{j=0}^{{\bar{\ell}}-2} \cos^2 \theta | \Psi_{0}(\bar{\ell};j) \rangle \langle \Psi_{0}(\bar{\ell};j) | \otimes Q_{0} \right) \\
&+  \left(\cos^4 \theta R_1  \otimes \bar{P} + \sum_{j=0}^{{\bar{\ell}}-2} \cos^2 \theta | \Psi_{1}(\bar{\ell};j) \rangle \langle \Psi_{1}(\bar{\ell};j) | \otimes Q_{1} \right) \\
&+  \left( \cos^4 \theta  R_{{\scriptscriptstyle IDLE}}  \otimes \bar{P}+ \sum_{b=0}^1  | \Psi_{b}(\bar{\ell};\bar{\ell}-1)\rangle \langle \Psi_{b}(\bar{\ell};\bar{\ell}-1)| \otimes Q_{b} \right). 
\end{align*}
Each of the three terms in parentheses forms a distinct block.  The third block has eigenvalues at least as large as those of $\cos^4 \theta R_{{\scriptscriptstyle IDLE}}$, since the other summand is positive semidefinite.  Similarly, the second block has eigenvalues at least as large as those of $\cos^2 \theta Q_1$.  These matrices are both diagonal and have a minimum eigenvalue of $\cos^4 \theta$.  Thus, it only remains to show that the first block has at most two eigenvalues that are smaller than $\cos^4 \theta$ and we will have concluded the proof.

The key point here is that both $\bar{P}$ and $\sum_{j=0}^{{\bar{\ell}}-2} | \Psi_{b}(\bar{\ell};j) \rangle \langle \Psi_{b}(\bar{\ell};j) |$ essentially act as the identity over their portion of the space, leading to a tensor decomposition of the eigenvectors of the first block.  Suppose that $\lambda_{1} \leq \lambda_{2} \leq ... \leq \lambda_{{\bar{\ell}}-1}$ are the eigenvalues of $\cos^4 \theta R_0$ with corresponding eigenvectors $| \lambda_{j} \rangle$.  Similarly, let $\mu_{1} \leq \mu_{2} \leq ... \leq \mu_{2{\bar{\ell}}+1}$ be the eigenvalues of $ \cos^2 \theta Q_{0} $ with corresponding eigenvectors $| \mu_{j} \rangle$.   Then the eigenvectors of the first block  are of the form  $| \lambda_{i} \rangle \otimes | \mu_{j} \rangle$, each with eigenvalue $ \lambda_{i} + \mu_{j}$.  Decomposing $\cos^2 \theta Q_0$ and repeatedly applying proposition \ref{special matrices},  we find that $\lambda_{2} \geq \cos^4 \theta$ and $\mu_{3}  \geq \cos^4 \theta$.  This means that the only eigenvalues that can be smaller than $\cos^4 \theta$ are $\lambda_{1} + \mu_{1}$ and $\lambda_{1} + \mu_{2}$.
\end{proof}

\subsection{Concluding that the swap chain is gapped}
\label{4SS4}

\begin{theorem}
For $0 \le \theta < \pi/2$, the swap chain Hamiltonian $\mathcal H$ is gapped with a unique ground state.
\end{theorem}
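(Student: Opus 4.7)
The plan is to follow the strategy of Section 3.4 for the teleportation chain: first apply Theorem \ref{Hgapped} to show that $H$ with open boundary conditions is gapped, then apply Lemma \ref{ABC} as prescribed in Remark \ref{mathcalHremark} to incorporate the remnant and boundary terms, and finally read off uniqueness of the ground state from the final boundary analysis.

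All three hypotheses of Theorem \ref{Hgapped} are supplied by the preceding subsection. Lemma \ref{the big guns} gives decaying correlations through $\norm{\bar{k}}_2 \le 2\cos\theta\sin^{{\bar{\ell}}-1}\theta$, which is exponentially small in ${\bar{\ell}}$. Proposition \ref{local gap} gives $\bar{g}=\cos^4 \theta$, an ${\bar{\ell}}$-independent positive constant for $0 \le \theta < \pi/2$. The dimension $z$ of the kernel of $\bar{h} \otimes \bar{P} + \bar{P} \otimes \bar{h}$ equals the number of ground states of three linked segments, which by Lemma \ref{the ground states} applied to an open chain of $3{\bar{\ell}}$ qutrits is $2(3{\bar{\ell}})+1 = 6{\bar{\ell}}+1$. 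Lemma \ref{the big guns} then ensures $\lambda_z(\tilde{h} \otimes \bar{P} + \bar{P} \otimes \tilde{h}) \ge \cos^4 \theta$, so one may take $\tilde{g}=\cos^4\theta$. Theorem \ref{Hgapped} therefore delivers an ${\bar{\ell}}$-independent positive lower bound on the gap of $H$.

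I would then follow Remark \ref{mathcalHremark} with repeated applications of Lemma \ref{ABC}. The first handles the remnant by setting $B = H \otimes \textsf{I}^{\otimes \bar{\bar{\ell}}} + \textsf{I}^{\otimes \lfloor \ell/{\bar{\ell}} \rfloor \bar{\ell}} \otimes H^R$ and $C = \textsf{I}^{\otimes (\lfloor \ell/{\bar{\ell}} \rfloor \bar{\ell}-1)} \otimes \bar{H} \otimes \textsf{I}^{\otimes \bar{\bar{\ell}}-1}$: $B$ is gapped because both $H$ and the finite-size $H^R$ are, and the smallest non-zero eigenvalue $c_1$ of $P^B C P^B$ is controlled by a computation in the history-state basis analogous to Proposition \ref{matrix structure 2}. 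Subsequent applications of Lemma \ref{ABC} absorb the $\bar{H}$ summands of the boundary terms $\bar{H}^{\ell,\ell+1}$ and $\bar{H}^{0,1}$ one bond at a time, so that in the end $B$ is the open-chain Hamiltonian on $\ell+2$ qutrits---whose ground space is $\Gamma_{\ell+2}$ by Lemma \ref{the ground states}---and the remaining $C$ consists of just the two rank-one boundary projectors $P_L = \ket{\scriptscriptstyle IDLE}\bra{\scriptscriptstyle IDLE} \otimes \textsf{I}^{\otimes \ell+1}$ and $P_R = \textsf{I}^{\otimes \ell+1} \otimes \ket{1}\bra{1}$.

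Uniqueness and the final gap then emerge from $P^B C P^B = P_L + P_R$ evaluated in the basis $\Gamma_{\ell+2}$. By property (i) of Proposition \ref{properties of history states}, $P_L$ is diagonal in this basis with eigenvalue $0$ on $|\Psi_b(\ell+2;0)\rangle$ for $b \in \mathbb{F}_2$ and eigenvalue $1$ on every other element of $\Gamma_{\ell+2}$. A calculation based on property (iv) shows that $P_R$ is also diagonal: it annihilates each $|\Psi_0(\ell+2;j)\rangle$ and $\ket{\scriptscriptstyle IDLE}^{\otimes \ell+2}$ because their rightmost qutrit has no $\ket{1}$ component, while on $|\Psi_1(\ell+2;j)\rangle$ the rightmost-qutrit $\ket{1}$ component has amplitude $\cos\theta$ for $0 \le j \le \ell$ (the $j''=0$ term in property (iv)) and amplitude $1$ for $j=\ell+1$, giving eigenvalues $\cos^2\theta$ and $1$ respectively. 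Summing, the spectrum of $P_L+P_R$ on $\Gamma_{\ell+2}$ is $\{0,\cos^2\theta,1,1+\cos^2\theta,2\}$ with $0$ occurring uniquely on $|\Psi_0(\ell+2;0)\rangle$, so via Lemma \ref{ABC} the full $\mathcal{H}$ has a unique ground state and a gap bounded below by a positive $\ell$-independent constant. The main obstacle in this plan is the bookkeeping of the intermediate Lemma \ref{ABC} steps that absorb the $\bar{H}$ summands of the boundary terms while maintaining $\ell$-independent gaps; each such step amounts to a matrix calculation in a history-state basis whose spectrum is controlled by Proposition \ref{properties of history states} much as in the derivation of $\bar{h}$ in Proposition \ref{matrix structure 2}.
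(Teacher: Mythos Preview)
Your proposal is correct and follows essentially the same route as the paper: invoke Theorem \ref{Hgapped} using Lemma \ref{the big guns} and Proposition \ref{local gap}, then apply Lemma \ref{ABC} first for the remnant and then for the boundary terms, with the final $B$ being the open chain on $\ell+2$ qutrits and $C$ the two rank-one boundary projectors. The only cosmetic difference is that the paper absorbs the $\bar{H}$ summands of $\bar{H}^{0,1}$ and $\bar{H}^{\ell,\ell+1}$ in a single algebraic step by writing $\mathcal{H}=\ket{\scriptscriptstyle IDLE}\bra{\scriptscriptstyle IDLE}\otimes\textsf{I}^{\otimes\ell+1}+\mathcal{H}_0^{\ell+2}+\textsf{I}^{\otimes\ell+1}\otimes\ket{1}\bra{1}$ rather than iterating Lemma \ref{ABC}, and your spectrum $\{0,\cos^2\theta,1,1+\cos^2\theta,2\}$ is in fact slightly more accurate than the paper's, which overlooks that $|\Psi_1(\ell+2;\ell+1)\rangle$ has $P_R$-eigenvalue $1$ rather than $\cos^2\theta$.
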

\begin{proof}
We begin by showing that an open chain of length $\ell$, which we shall denote by $\mathcal{H}_{0}^{\ell}$, is gapped.  Recalling the definition (\ref{mathcalH}) of $\mathcal H$, we observe that $\mathcal{H}_{0}^{\ell}$ satisfies
\[
{\mathcal H} = \bar{H}^{\ell,\ell+1} \otimes \textsf{I}^{\otimes \ell}+\textsf{I} \otimes  \mathcal{H}_{0}^{\ell} \otimes \textsf{I} +  \textsf{I}^{\otimes \ell} \otimes \bar{H}^{0,1}
\]
As described in remark \ref{DecompRemark}, we can further decompose $\mathcal{H}_{0}^{\ell}$ as
\[
\mathcal{H}_{0}^{\ell} = H \otimes  \textsf{I}^{\otimes \bar{\bar{\ell}}}+  \textsf{I}^{\otimes (\lfloor \ell/{\bar{\ell}} \rfloor \bar{\ell}-1)} \otimes \bar{H} \otimes \textsf{I}^{\otimes \bar{\bar{\ell}}-1} +\textsf{I}^{\otimes \lfloor \ell/{\bar{\ell}} \rfloor \bar{\ell}} \otimes  H^R.
\]
Refer to Fig. \ref{FancyHfigure} for an illustration.  

The first order of business is to invoke theorem \ref{Hgapped} to show that $H$ is gapped.  Lemma \ref{the big guns} shows that the chain exhibits decaying ground state correlations.  Using lemma \ref{the ground states} on a segment of length $3\bar{\ell}$, we conclude that the dimension of the kernel of $\bar{h} \otimes \bar{P} + \bar{P} \otimes \bar{h}$ is $z = 6 \bar{\ell}+1$.  Using, lemma \ref{the big guns} we can take $\tilde{g} = \cos^4 \theta$.  Moreover, the gap $\bar{g}$ is also $\cos ^4 \theta$ according to proposition \ref{local gap}.  It follows from theorem \ref{Hgapped} that $H$ of length $\lfloor \ell/\bar{\ell} \rfloor \bar{\ell}$ is gapped.

Next, we argue along the lines of remark \ref{mathcalHremark} to incorporate the remnant part of the chain and show that $\mathcal{H}_{0}^{\ell}$ is gapped.  Set $B = H \otimes \textsf{I}^{\otimes \bar{\bar{\ell}}} +\textsf{I}^{\otimes \lfloor \ell/{\bar{\ell}} \rfloor \bar{\ell}} \otimes  H^R$ and $C = \textsf{I}^{\lfloor \ell/{\bar{\ell}} \rfloor \bar{\ell}-1 } \otimes \bar{H} \otimes \textsf{I}^{\otimes \bar{\bar{\ell}}-1}$.  We have already shown that $H$ is gapped.  Observe that the gap of $H^{R}$ cannot shrink to zero with $\ell$ since $\bar{\bar{\ell}} < \bar{\ell}$, where $\bar{\ell}$ is some fixed value chosen independently of $\ell$.  Hence, $B$ is gapped.  Furthermore, it is clear that $\left\Vert C \right\Vert_{2} = 1$.  Thus, it suffices to show that $c_{1} = \lambda_{1}( P^{B} C P^{B} )$ is lower bounded by a positive constant, in order to employ lemma \ref{ABC} to conclude that $\mathcal{H}_{0}^{\ell} = B+C$ is gapped.  An analogous argument to the one used to prove proposition \ref{matrix structure 2} tells us that the matrix  $P^{B} C P^{B}$ looks quite similar to $\bar{h}$, except that now it has $\ell$ blocks of size $4\bar{\bar{\ell}}+2$ and a single zero block of size $2\bar{\bar{\ell}}+1$.  Thus, the same reasoning used to prove proposition \ref{local gap} tells us that $c_{1} \geq \cos^{4} \theta$.

Now, in order to show that $\mathcal H$ is gapped, we will use the strategy outlined in remark \ref{mathcalHremark} once again to account for the boundary terms. We begin by writing
\begin{align*}
\mathcal{H} & = \bar{H}^{\ell,\ell+1} \otimes \textsf{I}^{\otimes \ell}+\textsf{I} \otimes  \mathcal{H}_{0}^{\ell} \otimes \textsf{I} +  \textsf{I}^{\otimes \ell} \otimes \bar{H}^{0,1} \\
& = \ket{\scriptscriptstyle IDLE} \langle {\scriptscriptstyle IDLE} |  \otimes \textsf{I}^{\otimes \ell +1} + \bar{H}  \otimes \textsf{I}^{\otimes \ell} +\textsf{I} \otimes  \mathcal{H}_{0}^{\ell} \otimes \textsf{I} +  \textsf{I}^{\otimes \ell} \otimes \bar{H} +  \textsf{I}^{\otimes \ell + 1} \otimes \ket{1}\bra{1} \\
& = \ket{\scriptscriptstyle IDLE} \langle {\scriptscriptstyle IDLE} |  \otimes \textsf{I}^{\otimes \ell +1} + \mathcal{H}_{0}^{\ell+ 2}  +  \textsf{I}^{\otimes \ell + 1} \otimes \ket{1}\bra{1}.
\end{align*}
This time we set $B=\mathcal{H}_{0}^{\ell+ 2}$ and $C=\ket{\scriptscriptstyle IDLE} \langle {\scriptscriptstyle IDLE} |  \otimes \textsf{I}^{\otimes \ell +1}  +  \textsf{I}^{\otimes \ell + 1} \otimes \ket{1}\bra{1}$.  We have just established that $B$ is gapped and it is not hard to see that $\left\Vert C \right\Vert_{2} = 2$.  Thus, once again, we need only show that $c_{1} = \lambda_{1}( P^{B} C P^{B} )$ is lower bounded by a positive constant and then lemma \ref{ABC} will imply that $\mathcal{H}$ is gapped.  We know from lemma \ref{the ground states} that $\Gamma_{\ell+2}$ is the ground state space of $B$.  Recalling the definitions of the states that make up $\Gamma_{\ell+2}$, a straightforward calculation then allows us to write
\[
P^{B} \ket{\scriptscriptstyle IDLE} \langle {\scriptscriptstyle IDLE} | \otimes \textsf{I}^{\otimes \ell +1}  P^{B} =  \sum_{b=0}^{1} \sum_{j=1}^{\ell + 1} | \Psi_{b}(\ell+2;j) \rangle \langle \Psi_{b}(\ell+2;j) | + \left( \ket{\scriptscriptstyle IDLE} \langle {\scriptscriptstyle IDLE} |\right)^{\otimes \ell +2}
\]
as well as
\[
P^{B}  \textsf{I}^{\otimes \ell + 1} \otimes \ket{1}\bra{1}  P^{B} = \cos^{2} \theta \sum_{j=0}^{\ell+1} | \Psi_{1}(\ell+2;j) \rangle \langle \Psi_{1}(\ell+2;j) |.
\]
From this it is a simple task to write down the spectrum of $P^{B} C P^{B}$ when constrained to the space spanned by $\Gamma_{\ell+2}$:
\[
\begin{cases}
\lambda = 0 \text{ when } \ket{\lambda} =   | \Psi_{0}(\ell+2;0) \rangle\\
\lambda = \cos^2 \theta \text{ when } \ket{\lambda} =   | \Psi_{1} (\ell+2;0) \rangle \\
\lambda = 1 \text{ when } \ket{\lambda} \in \bigcup_{j=1}^{\ell+1} \left\{  | \Psi_{0}(\ell+2;j) \rangle \right\} \cup \ket{\scriptscriptstyle IDLE}^{\otimes \ell + 2}  \\
\lambda = \cos^2 \theta + 1  \text{ when } \ket{\lambda} \in \bigcup_{j=1}^{\ell+1} \left\{  | \Psi_{1}(\ell+2;j) \rangle \right\}.
\end{cases}
\]
Therefore, $c_{1}= \cos^{2} \theta$ and $\mathcal H$ is gapped.  Furthermore, since $\mathcal{H} = B + C$, such a spectral decomposition immediately implies that the ground state of $\mathcal H$ is unique.
\end{proof}

\section*{Acknowledgements}

We are grateful for helpful suggestions from Hosho Katsura.

\bibliography{Gapped} 


\begin{thebibliography}{17}
\ifx \bisbn   \undefined \def \bisbn  #1{ISBN #1}\fi
\ifx \binits  \undefined \def \binits#1{#1}\fi
\ifx \bauthor  \undefined \def \bauthor#1{#1}\fi
\ifx \batitle  \undefined \def \batitle#1{#1}\fi
\ifx \bjtitle  \undefined \def \bjtitle#1{#1}\fi
\ifx \bvolume  \undefined \def \bvolume#1{\textbf{#1}}\fi
\ifx \byear  \undefined \def \byear#1{#1}\fi
\ifx \bissue  \undefined \def \bissue#1{#1}\fi
\ifx \bfpage  \undefined \def \bfpage#1{#1}\fi
\ifx \blpage  \undefined \def \blpage #1{#1}\fi
\ifx \burl  \undefined \def \burl#1{\textsf{#1}}\fi
\ifx \doiurl  \undefined \def \doiurl#1{\url{https://doi.org/#1}}\fi
\ifx \betal  \undefined \def \betal{\textit{et al.}}\fi
\ifx \binstitute  \undefined \def \binstitute#1{#1}\fi
\ifx \binstitutionaled  \undefined \def \binstitutionaled#1{#1}\fi
\ifx \bctitle  \undefined \def \bctitle#1{#1}\fi
\ifx \beditor  \undefined \def \beditor#1{#1}\fi
\ifx \bpublisher  \undefined \def \bpublisher#1{#1}\fi
\ifx \bbtitle  \undefined \def \bbtitle#1{#1}\fi
\ifx \bedition  \undefined \def \bedition#1{#1}\fi
\ifx \bseriesno  \undefined \def \bseriesno#1{#1}\fi
\ifx \blocation  \undefined \def \blocation#1{#1}\fi
\ifx \bsertitle  \undefined \def \bsertitle#1{#1}\fi
\ifx \bsnm \undefined \def \bsnm#1{#1}\fi
\ifx \bsuffix \undefined \def \bsuffix#1{#1}\fi
\ifx \bparticle \undefined \def \bparticle#1{#1}\fi
\ifx \barticle \undefined \def \barticle#1{#1}\fi
\bibcommenthead
\ifx \bconfdate \undefined \def \bconfdate #1{#1}\fi
\ifx \botherref \undefined \def \botherref #1{#1}\fi
\ifx \url \undefined \def \url#1{\textsf{#1}}\fi
\ifx \bchapter \undefined \def \bchapter#1{#1}\fi
\ifx \bbook \undefined \def \bbook#1{#1}\fi
\ifx \bcomment \undefined \def \bcomment#1{#1}\fi
\ifx \oauthor \undefined \def \oauthor#1{#1}\fi
\ifx \citeauthoryear \undefined \def \citeauthoryear#1{#1}\fi
\ifx \endbibitem  \undefined \def \endbibitem {}\fi
\ifx \bconflocation  \undefined \def \bconflocation#1{#1}\fi
\ifx \arxivurl  \undefined \def \arxivurl#1{\textsf{#1}}\fi
\csname PreBibitemsHook\endcsname

\bibitem{Kadowaki1998}
\begin{barticle}
\bauthor{\bsnm{Kadowaki}, \binits{T.}},
\bauthor{\bsnm{Nishimori}, \binits{H.}}:
\batitle{Quantum annealing in the transverse ising model}.
\bjtitle{Phys. Rev. E}
\bvolume{58},
\bfpage{5355}
(\byear{1998})
\end{barticle}
\endbibitem

\bibitem{Farhi01}
\begin{barticle}
\bauthor{\bsnm{Farhi}, \binits{E.}},
\bauthor{\bsnm{Goldstone}, \binits{J.}},
\bauthor{\bsnm{Gutmann}, \binits{S.}},
\bauthor{\bsnm{Lapan}, \binits{J.}},
\bauthor{\bsnm{Lundgren}, \binits{A.}},
\bauthor{\bsnm{Preda}, \binits{D.}}:
\batitle{A quantum adiabatic evolution algorithm applied to random instances of
  an np-complete problem}.
\bjtitle{Science}
\bvolume{292},
\bfpage{472}
(\byear{2001})
\end{barticle}
\endbibitem

\bibitem{Kitaev00}
\begin{bbook}
\bauthor{\bsnm{Kitaev}, \binits{A.Y.}},
\bauthor{\bsnm{Shen}, \binits{A.H.}},
\bauthor{\bsnm{Vyalyi}, \binits{M.N.}}:
\bbtitle{Classical and Quantum Computation}.
\bpublisher{American Mathematical Society}, \blocation{???}
(\byear{2000})
\end{bbook}
\endbibitem

\bibitem{Mizel01}
\begin{barticle}
\bauthor{\bsnm{Mizel}, \binits{A.}},
\bauthor{\bsnm{Mitchell}, \binits{M.W.}},
\bauthor{\bsnm{Cohen}, \binits{M.L.}}:
\batitle{Energy barrier to decoherence}.
\bjtitle{Phys. Rev. A. Rapid Comm.}
\bvolume{63},
\bfpage{40302}
(\byear{2001})
\end{barticle}
\endbibitem

\bibitem{Albash2018}
\begin{barticle}
\bauthor{\bsnm{Albash}, \binits{T.}},
\bauthor{\bsnm{Lidar}, \binits{D.A.}}:
\batitle{Adiabatic quantum computation}.
\bjtitle{Rev. Mod. Phys.}
\bvolume{90},
\bfpage{015002}
(\byear{2018})
\end{barticle}
\endbibitem

\bibitem{Landau2015}
\begin{barticle}
\bauthor{\bsnm{Landau}, \binits{Z.}},
\bauthor{\bsnm{Vazirani}, \binits{U.}},
\bauthor{\bsnm{Vidick}, \binits{T.}}:
\batitle{A polynomial time algorithm for the ground state of one-dimensional
  gapped local hamiltonians}.
\bjtitle{Nature Phys.}
\bvolume{11},
\bfpage{566}
(\byear{2015})
\end{barticle}
\endbibitem

\bibitem{Block2020}
\begin{botherref}
\oauthor{\bsnm{Block}, \binits{M.}},
\oauthor{\bsnm{Motruk}, \binits{J.}},
\oauthor{\bsnm{Gazit}, \binits{S.}},
\oauthor{\bsnm{Zaletel}, \binits{M.P.}},
\oauthor{\bsnm{Landau}, \binits{Z.}},
\oauthor{\bsnm{Vazirani}, \binits{U.}},
\oauthor{\bsnm{Yao}, \binits{N.Y.}}:
Performance of the rigorous renormalization group for first order phase
  transitions and topological phases.
arXiv:2010.15851
(2020)
\end{botherref}
\endbibitem

\bibitem{Bravyi2015}
\begin{barticle}
\bauthor{\bsnm{Bravyi}, \binits{S.}},
\bauthor{\bsnm{Gosset}, \binits{D.}}:
\batitle{Gapped and gapless phases of frustration-free spin-1/2 chains}.
\bjtitle{J. Math. Phys.}
\bvolume{56},
\bfpage{061902}
(\byear{2015})
\end{barticle}
\endbibitem

\bibitem{Cubitt2015}
\begin{barticle}
\bauthor{\bsnm{Cubitt}, \binits{T.S.}},
\bauthor{\bsnm{Perez-Garcia}, \binits{D.}},
\bauthor{\bsnm{Wolf}, \binits{M.M.}}:
\batitle{Undecidability of the spectral gap}.
\bjtitle{Nature}
\bvolume{528},
\bfpage{207}
(\byear{2015})
\end{barticle}
\endbibitem

\bibitem{Bausch2020}
\begin{barticle}
\bauthor{\bsnm{Bausch}, \binits{J.}},
\bauthor{\bsnm{Cubitt}, \binits{T.S.}},
\bauthor{\bsnm{Lucia}, \binits{A.}},
\bauthor{\bsnm{Perez-Garcia}, \binits{D.}}:
\batitle{Undecidability of the spectral gap in one dimension}.
\bjtitle{Phys. Rev. X}
\bvolume{10}(\bissue{3}),
\bfpage{031038}
(\byear{2020})
\end{barticle}
\endbibitem

\bibitem{Knabe1988}
\begin{barticle}
\bauthor{\bsnm{Knabe}, \binits{S.}}:
\batitle{Energy gaps and elementary excitations for certain vbs-quantum
  antiferromagnets}.
\bjtitle{Jour. Stat. Phys.}
\bvolume{52}(\bissue{3/4}),
\bfpage{627}--\blpage{638}
(\byear{1988})
\end{barticle}
\endbibitem

\bibitem{Nachtergaele1996}
\begin{barticle}
\bauthor{\bsnm{Nachtergaele}, \binits{B.}}:
\batitle{The spectral gap for some spin chains with discrete symmetry
  breaking}.
\bjtitle{Commun. Math. Phys}
\bvolume{175},
\bfpage{565}--\blpage{606}
(\byear{1996})
\end{barticle}
\endbibitem

\bibitem{Hastings06}
\begin{barticle}
\bauthor{\bsnm{Hastings}, \binits{M.}},
\bauthor{\bsnm{Koma}, \binits{T.}}:
\batitle{Spectral gap and exponential decay of correlations}.
\bjtitle{Commun. Math. Phys.}
\bvolume{265},
\bfpage{781}
(\byear{2006})
\end{barticle}
\endbibitem

\bibitem{Mizel21}
\begin{barticle}
\bauthor{\bsnm{Mizel}, \binits{A.}}:
\batitle{Entanglement versus gap, quantum teleportation, and the {AKLT} model}.
\bjtitle{J. Phys.: Condens. Matter}
\bvolume{33}(\bissue{31}),
\bfpage{315801}
(\byear{2021})
\end{barticle}
\endbibitem

\bibitem{Mizel07}
\begin{barticle}
\bauthor{\bsnm{Mizel}, \binits{A.}},
\bauthor{\bsnm{Lidar}, \binits{D.A.}},
\bauthor{\bsnm{Mitchell}, \binits{M.W.}}:
\batitle{Simple proof of equivalence between adiabatic quantum computation and
  the circuit model}.
\bjtitle{Phys. Rev. Lett.}
\bvolume{99},
\bfpage{070502}
(\byear{2007})
\end{barticle}
\endbibitem

\bibitem{Lemm2019}
\begin{barticle}
\bauthor{\bsnm{Lemm}, \binits{M.}},
\bauthor{\bsnm{Mozgunov}, \binits{E.}}:
\batitle{Spectral gaps of frustration-free spin systems with boundary}.
\bjtitle{J. Math. Phys.}
\bvolume{60},
\bfpage{051901}
(\byear{2019})
\end{barticle}
\endbibitem

\bibitem{Gosset2016}
\begin{barticle}
\bauthor{\bsnm{Gosset}, \binits{D.}},
\bauthor{\bsnm{Mozgunov}, \binits{E.}}:
\batitle{Local gap threshold for frustration-free spin systems}.
\bjtitle{J. Math. Phys.}
\bvolume{57},
\bfpage{091901}
(\byear{2016})
\end{barticle}
\endbibitem

\end{thebibliography}


\end{document}